\newtheorem{theorem}{Theorem}
\newtheorem{lemma}[theorem]{Lemma}
\newtheorem{cor}[theorem]{Corollary}
\newtheorem{prop}[theorem]{Proposition}
\theoremstyle{definition}
\newtheorem{definition}{Definition}
\theoremstyle{remark}
\newtheorem{remark}{Remark}
\newcounter{myenum} 
  {\end{list}}
\newcommand{\snr}{\textnormal{\small  \fontfamily{phv}\selectfont snr}}
\newcommand{\mmse}{\textnormal{\small  \fontfamily{phv}\selectfont mmse}}
\newcommand{\snrs}{\textnormal{\footnotesize \fontfamily{phv}\selectfont snr}}
\newcommand{\mse}{\textnormal{\small\fontfamily{phv}\selectfont mse}}
\newcommand{\bE}{\mathbb{E}} 
\newcommand{\bP}{\mathbf{P}} 
\newcommand{\bY}{\mathbf{Y}} 
\newcommand{\bX}{\mathbf{X}}
\newcommand{\bW}{\mathbf{W}} 
\newcommand{\bS}{\mathbf{S}} 
\newcommand{\bM}{\mathbf{M}} 
\newcommand{\by}{\mathbf{y}} 
\newcommand{\bx}{\mathbf{x}}
\newcommand{\bw}{\mathbf{w}} 
\newcommand{\bs}{S} 
\newcommand{\bU}{\mathbf{U}}
\newcommand{\bu}{\mathbf{u}}
\newcommand{\bv}{\mathbf{v}} 
\newcommand{\bV}{\mathbf{V}} 
\newcommand{\bz}{\mathbf{z}} 
\newcommand{\bA}{\mathbf{A}}
\newcommand{\one}{\boldsymbol{1}} 
\newcommand{\rw}{\rightarrow}
\newcommand{\yes}{\textbf{yes}}
\title{The Sampling Rate-Distortion Tradeoff for Sparsity Pattern Recovery in Compressed Sensing}
\author{Galen Reeves, \IEEEmembership{Member, IEEE} and Michael Gastpar \IEEEmembership{Member, IEEE}
\thanks{This work was supported in part by ARO MURI No. W911NF-06-1-0076.}
\thanks{G. Reeves was with the Department of Electrical Engineering and Computer
Sciences, University of California, Berkeley, CA 94720 USA. He is now with the Department of Statistics, Stanford University, Stanford, CA 94305-4065 USA. (e-mail: greeves@stanford.edu)}
\thanks{M. Gastpar is with the School of Computer and Communication Sciences,
Ecole Polytechnique F\'ed\'erale (EPFL), 1015 Lausanne, Switzerland,
and with the Department of Electrical Engineering and Computer
Sciences, University of California, Berkeley, CA 94720 USA (e-mail: michael.gastpar@epfl.ch).}
}
\begin{document}
\maketitle

\begin{abstract}
Recovery of the sparsity pattern (or support) of an unknown sparse vector from a limited number of noisy linear measurements is an important problem in compressed sensing. In the high-dimensional setting, it is known that recovery with a vanishing fraction of errors is impossible if the measurement rate and the per-sample signal-to-noise ratio (SNR) are finite constants, independent of the vector length. In this paper, it is shown that recovery with an arbitrarily small but constant fraction of errors is, however, possible, and that in some cases computationally simple estimators are near-optimal. Bounds on the measurement rate needed to attain a desired fraction of errors are given in terms of the SNR and various key parameters of the unknown vector for several different recovery algorithms. The tightness of the bounds, in a scaling sense, as a function of the SNR and the fraction of errors, is established by comparison with existing information-theoretic necessary bounds. Near optimality is shown for a wide variety of practically motivated signal models. 
\end{abstract}

\begin{keywords}
Compressed sensing, message passing algorithms, model selection, random matrix theory, sparsity, support recovery.
\end{keywords}

\section{Introduction} \label{sec:intro} 

\IEEEPARstart{S}{uppose} that a vector $\bx$ of length $n$ is known to have a small number $k$ of nonzero entries, but the values and locations of the nonzero entries are unknown and must be estimated from a set of $m$ noisy linear projections (or samples) given by the vector
\begin{align}\label{eq:model}
\by = A \bx + \mathbf{\bw}
\end{align}
where $A$ is a known $m \times n$ measurement matrix and $\mathbf{\bw}$ is additive white Gaussian noise. The problem of {\em sparsity pattern recovery} is to determine which entries in $\bx$ are nonzero. This problem, which is known variously throughout the literature as support recovery or model selection, has applications in compressed sensing \cite{Donoho_IT06,CandesRombergTao06A, CandesTao06}, sparse approximation \cite{DevoLore93}, signal denoising \cite{Chen_JSC98}, subset selection in regression \cite{Miller90}, and structure estimation in graphical models \cite{MeinBuhl06}.

A great deal of previous work \cite{MeinBuhl06, ZhaoYu_JMLR06, Wainwright_SharpThresholds_IEEE09, Wainwright_InfoLimits_IEEE09,FLetcher_IEEE09, Wang_IEEE10, akcakaya_IEEE10, AerSalZha10, Reeves_masters, RG08}, has focused on necessary and sufficient conditions for exact recovery of the sparsity pattern. By contrast, this paper studies the tradeoff between the number of samples $m$ and the number of detection errors. We focus on the high-dimensional setting where the sparsity rate (i.e.~the fraction of nonzero entries) and the per-sample signal-to-noise ratio (SNR) are finite constants, independent of the vector length $n$. Our results are bounds on the sampling rate $\rho = m/n$ needed to attain a desired detection error rate $D$ for several different recovery algorithms. These bounds are given explicitly in terms of the sparsity rate, the SNR, and various key properties of the unknown vector. Complementary information theoretic lower bounds are given in the companion paper \cite{RG-Lower-Bounds}. An illustration of the bounds is given in Fig.~\ref{fig:example1}.

\begin{figure*}[htbp]
\centering
\psfrag{SNR (dB)}[c]{\small SNR (dB)}
\psfrag{sparsity rate}[lb]{\small sparsity rate $\kappa$} 
\psfrag{Sampling Rate}[tb]{\small Sampling Rate $\rho$} 
\psfrag{t1}[B]{}
\psfrag{MC Upper Bound [?]}[l]{\small  \begin{minipage}{1.4in} \centering Thresholding Upper Bound\\  (Theorem 2) \end{minipage}}
\psfrag{ML Upper Bound [?]}[l]{\small \begin{minipage}{1.45in} \centering  Comb. Opt. Upper Bound\\ (Theorem 1) \end{minipage}}
\psfrag{Lower Bound}[lb]{\small Lower Bound \cite{RG-Lower-Bounds}}
\psfrag{(This Paper)}[lT]{}
\epsfig{file=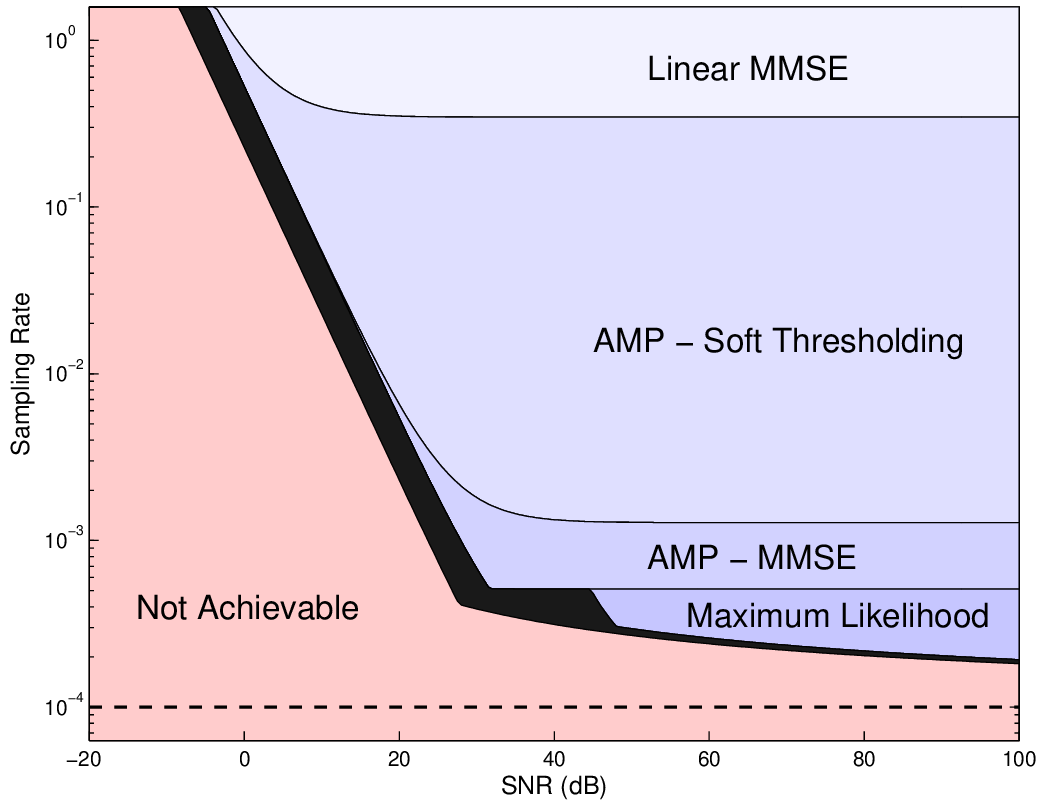, width = .49\textwidth}
\epsfig{file=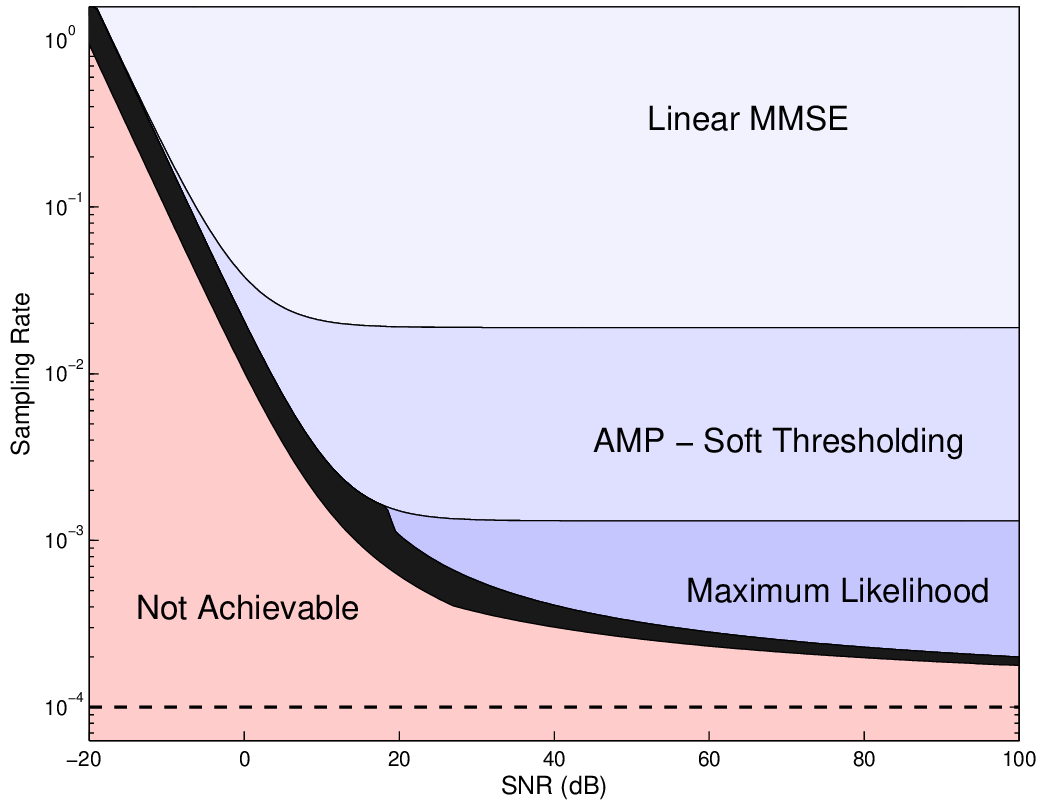, width =.49\textwidth}
\caption{Bounds on the achievable sampling rate $\rho$ as a function of the SNR for various recovery algorithms when the desired sparsity pattern detection error rate is $D= 0.05$ ($95 \%$ accuracy), the sparsity rate is $\kappa = 10^{-4}$, and the measurement matrices have i.i.d.~Gaussian entries. In the left panel, the nonzero entries are i.i.d.~zero-mean Gaussian. In the right panel, the nonzero entries are lower bounded in squared magnitude by $20\%$ of their average power but are otherwise arbitrary. The achievable bounds are given in this paper. The necessary bound is given in \cite{RG-Lower-Bounds}  }
\label{fig:example1}
\end{figure*}

\subsection{Overview of Main Contributions}

We study the high-dimensional setting where the measurement matrix $A$ is generated randomly and independently of the vector $\bx$ and the measurements are corrupted by additive white Gaussian noise. Three main contributions of the paper are the following:
\begin{enumerate}
\item {\em Fundamental Limits:} We derive an upper bound on the sampling rate needed using {\em maximum likelihood} (ML) estimation. While previous work has focused on exact recovery \cite{MeinBuhl06, ZhaoYu_JMLR06, Wainwright_SharpThresholds_IEEE09, Wainwright_InfoLimits_IEEE09,FLetcher_IEEE09, Wang_IEEE10} or the scaling behavior for approximate recovery \cite{akcakaya_IEEE10}, our work gives an explicit bound on the tradeoff between the sampling rate and the fraction of detection errors. In conjunction with the information-theoretic lower bounds in \cite{RG-Lower-Bounds}, this bound provides a sharp characterization between what can and cannot be recovered in the presence of noise. This characterization is rigorous and thus validates recent predictions made using the powerful but heuristic replica method from statistical physics \cite{Tanaka02,  Muller03, GuoVer05,KabWadTan09, RanFleGoy09, GuoBarSha09}. 

\item {\em Computationally Efficient Algorithms:} In addition to our analysis of the fundamental limits, we also derive matching upper and lower bounds on the sampling rate corresponding to three computationally efficient estimators: the {\em matched filter} (MF), the {\em linear minimum mean-squared error} (LMMSE) estimator, and an iterative recovery algorithm known as {\em approximate message passing} (AMP) \cite{DonMalMon09,DonMalMon10,BayMon10a, Montanari11}. By comparison with our fundamental bounds, we show that these estimators are near-optimal in some parameter regimes, but highly suboptimal in others. 
\item {\em Universality:} It is shown that a fixed recovery algorithm can be universally near optimal over a large class of practically motivated signal models.
\end{enumerate}

Beyond these results, our framework also permits us to prove some further insights. For instance, we show that the low-distortion behavior depends primarily on the relative size of the smallest nonzero entries whereas the high SNR behavior depends primarily on the computational power of the recovery algorithm and the complexity of the underlying signal class, and we precisely characterize this dependence. Also, we show that the sampling rate-distortion function is a convex function and that, in certain settings, i.i.d. measurement matrices are asymptotically strictly suboptimal.

\subsection{Relation to Previous Work}

A great deal of previous work has focused on the approximation of sparse vectors with respect to mean squared error (MSE) \cite{Mallat93, Tibsh96, Chen_JSC98, Fuchs_IT05,  CandesTao_IT05,  Donoho_IT06,  CandesRombergTao06A, CandesTao06, DonohoEladTemlyakov_IEEtrans06, CandesRombergTao06, Tropp_IT06, DonoForMost06, Haupt06, Massart07}. Two particularly relevant results from this literature are \cite{CandesRombergTao06} and \cite{DonohoEladTemlyakov_IEEtrans06} which show that the vector $\bx$ can be approximated with MSE inversely proportional to the SNR using $m = O( k \log(n/k))$ samples and a quadratic program known as {Basis Pursuit} \cite{Chen_JSC98}. With a few additional assumptions on the magnitude of the smallest nonzero entries in $\bx$, these bounds on the MSE can be translated into bounds on the detection error rate. However, the resulting bounds correspond to adversarial noise and are thus loose in general (see \cite{Reeves_thesis}).

Another line of previous work has focused directly on the problem of exact sparsity pattern recovery \cite{MeinBuhl06, ZhaoYu_JMLR06, Wainwright_SharpThresholds_IEEE09, Wainwright_InfoLimits_IEEE09,FLetcher_IEEE09, Wang_IEEE10}. It is now well understood that $m = \Theta( k \log n)$ samples are both necessary and sufficient for exact recovery when the SNR is finite and there exists a fixed lower bound on the magnitude of the smallest nonzero elements \cite{Wainwright_InfoLimits_IEEE09,FLetcher_IEEE09, Wang_IEEE10}. In contrast to the scaling required for bounded MSE, this scaling says that the ratio $m/k$ must grow without bound as the vector length $n$ becomes large. As a consequence, exact recovery is impossible in the setting considered in this paper, when the sparsity rate, sampling rate, and SNR are finite constants, independent of the vector length $n$. 

The fundamental limits of sparsity pattern recovery with a nonzero detection error rate have also been investigated. For the special case where the values of the nonzero entries are identical and known (throughout the system), Aeron et al. \cite[Theorem V-2]{AerSalZha10} showed that $m = C \cdot k \log(n/k)$ samples are necessary and sufficient for an ML decoder where the constant $C$ is bounded explicitly in terms of the SNR and the desired detection error rate. In the general setting where the nonzero values are unknown, Akcakaya and Tarokh \cite{akcakaya_IEEE10} showed that $m = C \cdot k \log(n/k)$ samples are necessary and sufficient for a joint typicality recovery algorithm where the constant $C$ is finite, but otherwise unspecified. (In \cite{Reeves_thesis}, it is shown that this same result is implied directly by the previous work of Cand\`{e}s et al. \cite{ CandesRombergTao06}.) An important difference between these previous results and the current paper is that we give an explicit and relatively tight characterization of the constant $C$ for a broad class of signal models.

Our analysis of linear estimation is related to work by Verd\'{u} and Shamai \cite{VerSha99} and Tse and Hanly \cite{TseHan99} on linear multiuser detectors. Our analysis of AMP relies heavily on recent results by Donoho et al. \cite{DonMalMon09,DonMalMon10} and Bayati and Montanari \cite{BayMon10a} which characterize the limiting distribution of the AMP estimate under the assumption of i.i.d.~Gaussian matrices. For an overview of related work and a generalization of the algorithm, see \cite{Rangan10b}. We note that similar results for message passing algorithms have also been shown under the assumption of sparse measurement matrices with locally tree-like properties \cite{GuoBarSha09,BarSarBar10, Rangan10a}.

Finally, the bounds in the paper are compared to predictions made by the replica method from statistical physics. This is a powerful but nonrigorous heuristic that has been used previously in the context of multi-user detection \cite{Tanaka02, Muller03, GuoVer05,KabWadTan09}, and more recently in compressed sensing \cite{RanFleGoy09, GuoBarSha09}.


\subsection{Notation}
When possible, we use the following conventions: a random variable $X$ is denoted using uppercase and its realization $x$ is denoted using lowercase; a random vector $\bV$ is denoted using boldface uppercase and its realization $\bv$ is denoted using boldface lowercase; and random a matrix $\bM$ is denoted using boldface uppercase and its realization $M$ is denoted using uppercase. We use $[n]$ to denote the set $\{1,2,\cdots,n\}$. For a collection of vectors $\bv_1,\bv_2,\cdots,\bv_L \in \mathbb{R}^n$, the empirical joint distribution of the entries in $\{\bv_i\}_{i \in [L]}$ is the probability measure on $\mathbb{R}^L$ that puts point mass $1/n$ at each of the $n$ points $(v_{1,i},v_{2,i},\cdots,v_{L,i})$. All logarithms are taken with respect to the natural base. Unspecified constants are denoted by $C$ and are assumed to be positive and finite.

\section{Problem Formulation}\label{sec:formulation}

Let $\bx \in \mathbb{R}^n$ be a fixed but unknown vector and consider the noisy linear observation model given by
\begin{align}
\bY = \bA \bx + \frac{1}{\sqrt{\snr}} \bW \label{eq:estimation_problem}
\end{align}
where $\bA$ is a random $m \times n$ matrix, $\snr \in (0,\infty)$ is a fixed scalar, and $\bW \sim \mathcal{N}(0, I_{m \times m})$ is additive white Gaussian noise. Note that if $\bE[ \|\bA \bx\|^2] = m$, then $\snr$ corresponds to the per-sample signal-to-noise ratio of the problem. 

The problem studied in this paper is recovery of the sparsity pattern $S^*$ of $\bx$ which is given by 
\begin{align}
S^* = \{ i \in [n] : x_i \ne 0\}.  
\end{align} 
We assume throughout that a recovery algorithm is given the vector $\bY$, the matrix $\bA$, and a parameter $\kappa$ corresponding to the fraction of nonzero entries in $\bx$. The algorithm then returns an estimate $\hat{S}$ of size $\lceil \kappa\, n \rceil$. In some cases, additional prior information about the nonzero entries of $\bx$ is also available. We use ALG to denote a generic recovery algorithm.

\subsection{Distortion Measure}\label{sec:formulation_distortion}

To assess the quality of an estimate $\hat{S}$ it is important to note that there are two types of errors. A {\em missed detection} occurs when an element in $S^*$ is omitted from the estimate $\hat{S}$. The missed detection rate is given by
\begin{align}
\text{MDR}(S^*,\hat{S}) =\frac{1}{|S^*|} \sum_{i=1}^n \one( i \in S^*, i \notin \hat{S}).
\end{align}
Conversely, a {\em false alarm} occurs when an element not present in $S^*$ is included in $\hat{S}$. The false alarm rate is given by
\begin{align}
\text{FAR}(S^*,\hat{S}) =\frac{1}{|\hat{S}|} \sum_{i=1}^n \one( i \notin S^*, i \in \hat{S}).
\end{align}

In general, various tradeoffs between the two errors types can be considered. In this paper, however, we focus exclusively the distortion measure $d: S^* \times \hat{S} \mapsto [0,1]$ given by
\begin{align}
d(S^*,\hat{S}) &= \max\big( \text{MDR}(S^*,\hat{S}) ,\,\text{FAR}(S^*,\hat{S})\big). \label{eq:distortion}
\end{align}
This distortion measure is a metric on subsets of $[n]$.

For any distortion $D \in [0,1]$ and recovery algorithm ALG we define the error probability
\begin{align}
\varepsilon_n^{(\text{ALG})}(D) = \Pr[ d(S^*,\hat{S}) > D]
\end{align}
where the probability is taken with respect to the distribution on the matrix $\bA$, the noise $\bW$ and any additional randomness used by the recovery algorithm.

\subsection{Signal and Measurement Models}\label{sec:model_assumptions}
In this paper, we analyze a sequence of recovery problems $\{ \bx(n), \bA(n), \bW(n)\}_{n \ge 1}$ indexed by the vector length $n$. 

\smallskip
{\em \noindent Signal Assumptions:}
We consider a subset of the following assumptions on the sequence of vectors $\bx(n) \in\mathbb{R}^n$. 
\begin{enumerate}

\item[S1] {\em Linear Sparsity:} The number of nonzero values $k(n)$ in each vector $\bx(n)$ obeys
\begin{align}
\lim_{n \rw \infty} k(n)/n= \kappa
\end{align}
for some {\em sparsity rate} $\kappa \in (0,1/2)$.  

\item[S2] {\em Convergence in Distribution:}  The empirical distribution of the entries in $\bx(n)$ converges weakly to the distribution $p_X$ of a real-valued random variable $X$ with $\bE[X^2] = 1$ and $\Pr[X \ne 0] = \kappa$, i.e.
\begin{align}
\lim_{n \rw \infty} \frac{1}{n} \sum_{i=1}^n \one(x_i(n) \le x) = \Pr[X \le x]
\end{align}
for all $x$ such that $p_X(\{x\})=0$.

\item[S3] {\em Average Power Constraint:} The empirical second moments of the entries in $\bx(n)$ converge to one, i.e.
\begin{align}
\lim_{n \rw \infty} \|\bx(n)\|^2/n = 1.
\end{align}
\end{enumerate}

Assumption S1 says that all but a fraction $\kappa$ of the entries are equal to zero, Assumption S2 characterizes the limiting distribution of the nonzero entries, and Assumption S3 prohibits the existence of a vanishing fraction of arbitrarily large nonzero values.  

\smallskip
{\em \noindent Measurement Assumptions:} We consider a subset of the following assumptions on the sequence of measurement matrices  $\bA(n) \in \mathbb{R}^{m(n) \times n}$. 
\begin{enumerate}
\item[M1] {\em Non-Adaptive Measurements:} The distribution on $\bA(n)$ is independent of the vector $\bx(n)$ and the noise $\bW(n)$. 
\item[M2] {\em Finite Sampling Rate:} The number of rows $m(n)$ obeys
\begin{align}
\lim_{n \rw \infty} m(n)/n = \rho
\end{align}
for some {\em sampling rate} $\rho \in (0,\infty)$.
\item[M3] {\em Row Normalization:} The distribution on $\bA(n)$ is normalized such that each of the $m(n)$ rows has unit magnitude on average, i.e. 
\begin{align}
\bE\big[ \| \bA(n)\|^2_F\big] = m(n)
\end{align} 
where $\|\cdot\|_F$ denotes the Frobenius norm. 
\item[M4] {\em IID Entries:} The entries of $\bA(n)$ are i.i.d.~with mean zero and variance $1/n$.
\item[M5] {\em Gaussian Entries:} The entries of $\bA(n)$ are i.i.d.~Gaussian $\mathcal{N}(0,1/n)$.
\end{enumerate}

Assumptions M1-M3 are used throughout the paper. A sampling rate $\rho < 1$ corresponds to the {\em compressed} sensing setting where the number of equations $m$ is less than the number of unknown signal values $n$. A sampling rate $\rho =1$ corresponds to the number of linearly independent measurements that are needed to recover an arbitrary vector $\bx$ in the absence of any measurement noise. Assumptions M4-M5 correspond to specific distributions on $\bA(n)$ that are used for many of the results of this paper.

\subsection{Sampling Rate-Distortion Region}

Under Assumptions S1-S3 and M1-M3, the asymptotic recovery problem is characterized by the sampling rate $\rho$, limiting distribution $p_X$, and $\snr$.

\begin{definition}\label{def:sampling_rate_distortion_function}
A distortion $D$ is {\em achievable} for a fixed tuple $(\rho,p_X, \snr)$ and recovery algorithm ALG, if there exists a sequence of measurement matrices satisfying Assumptions M1-M3 such that
\begin{align}
\lim_{n \rw \infty}\varepsilon^{(\text{ALG})}_n(D) = 0 \label{eq:achievability}
\end{align}
for any sequence of vectors satisfying Assumptions S1-S3.
\end{definition}

More generally, we may also consider problems characterized by a class of limiting distributions with the same sparsity rate $\kappa$. Let $\mathcal{P}(\kappa)$ denote the class of all probability measures obeying the conditions of Assumption S2, i.e.
\begin{align}
\mathcal{P}(\kappa) =\big\{ \text{$p_X$ : $p_X(\{0\}) = 1\!-\!\kappa$, $\textstyle \int x^2 p_X(dx) = 1$}\big\}, \label{eq:P_kappa}
\end{align}
and let $\mathcal{P}_X$ be a subset of $\mathcal{P}(\kappa)$.

\begin{definition}\label{def:achievable_gen} A distortion $D$ is {\em achievable} for a fixed tuple $(\rho,\mathcal{P}_X, \snr)$ and recovery algorithm ALG, if there exists a sequence of measurement matrices satisfying Assumptions M1-M3 such that
\begin{align}
\lim_{n \rw \infty}\varepsilon^{(\text{ALG})}_n(D) = 0 
\end{align}
for any sequence of vectors satisfying Assumptions S1-S3 for some distribution $p_X \in \mathcal{P}_X$. 
\end{definition}

We emphasize that the recovery algorithm in Definition~\ref{def:achievable_gen} is fixed and thus cannot be a function of the limiting distribution realized by an individual sequence of problems. It may however be optimized as a function of the class $\mathcal{P}_X$, thus attaining the minimax risk of the recovery problem.  

\begin{definition}
For a fixed tuple $(D,\mathcal{P}_X, \snr)$ and recovery algorithm ALG the sampling rate-distortion function $\rho^{(\text{ALG})}(D,\mathcal{P}_X, \snr)$ is given by
\begin{align}
\rho^{(\text{ALG})}(D,\mathcal{P}_X, \snr) = \inf\{ \rho \ge 0 \, : \, \text{$D$ is achievable}\}.
\end{align}
\end{definition}

To lighten the notation, we will denote the sampling rate-distortion function using $\rho^{(\text{ALG})}$ where the dependence on the tuple  $(D,\mathcal{P}_X,\snr)$ is implicit.


\begin{table*}[t]
\caption{\label{table:comparison} Overview of the Sampling Rate Distortion Bounds}
\begin{center}
\begin{tabular}{|l|l|c|c|c|c|c|}
\hline
\multicolumn{3}{|c|}{ \small Recovery Algorithm}  & \multicolumn{4}{|c|}{ \small  Bounds} \\
\hline
\hline
Vector Estimator &Parameters & Comp. Efficient &Result & Matrix Assump.  & Unproven Assump. & Tight   \\
\hline
ML & $\kappa$ & no & Theorem~\ref{thm:ML} &Gaussian & none & no  \\
MF & $\kappa$& \yes&Theorem~\ref{thm:MF} &\textbf{i.i.d}. & none   & \yes \\
LMMSE & $\kappa$, $\snr$& \yes&Theorem~\ref{thm:LMMSE} &Gaussian & none & \yes  \\
AMP-MMSE &  $\kappa$, $\snr$, $p_X$  & \yes& Theorem~\ref{thm:AMP_MMSE} &Gaussian & none & \yes \\
AMP-ST & $\kappa$, $\snr$, $\alpha$  & \yes& Theorem~\ref{thm:AMP_ST} &Gaussian&  none & \yes  \\
MMSE & $\kappa$, $\snr$, $p_X$ &  no &Theorem~\ref{thm:MMSE} &\textbf{i.i.d}. & Replica Symmetry & \yes   \\
\hline
\end{tabular}
\end{center}
\label{default}
\end{table*}%

\section{Sampling Rate-Distortion Bounds}\label{sec:main_results}

This section states the main results of this paper which are bounds on the sampling rate-distortion function $\rho^{(\text{ALG})}$ for several different recovery algorithms. Each of the algorithms follows the same basic approach which is illustrated in Fig.~\ref{fig:two_stage} and consists of the following two stages:
\begin{itemize}
\item {\em Vector Estimation:} The first stage of recovery produces a random estimate $\hat{\bX}$ of the unknown vector $\bx$ based on the tuple $(\bY,\bA,\kappa)$. 

\item {\em Componentwise Thresholding:} The second stage of recovery generates an estimate $\hat{S}$ of the unknown sparsity pattern $S^*$ by thresholding the estimate $\hat{\bX}$ generated in the first stage: $$\hat{S} = \big\{ i \in [n] : |\hat{X}_i| \ge T \big\}.$$
\end{itemize}

\begin{figure}
\psfrag{S}[B]{ x}
\psfrag{X}[B]{\small $\rho_\text{total}$} 
 \graphicspath{{figures/}}
  \def\svgwidth{\columnwidth}


\begingroup
  \makeatletter
  \providecommand\color[2][]{%
    \errmessage{(Inkscape) Color is used for the text in Inkscape, but the package 'color.sty' is not loaded}
    \renewcommand\color[2][]{}%
  }
  \providecommand\transparent[1]{%
    \errmessage{(Inkscape) Transparency is used (non-zero) for the text in Inkscape, but the package 'transparent.sty' is not loaded}
    \renewcommand\transparent[1]{}%
  }
  \providecommand\rotatebox[2]{#2}
  \ifx\svgwidth\undefined
    \setlength{\unitlength}{188.8pt}
  \else
    \setlength{\unitlength}{\svgwidth}
  \fi
  \global\let\svgwidth\undefined
  \makeatother
  \begin{picture}(1,0.27966102)%
    \put(0,0){\includegraphics[width=\unitlength]{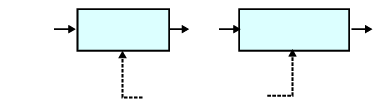}}%
    \put(0.00208918,0.19282522){\color[rgb]{0,0,0}\makebox(0,0)[lb]{\smash{$(\bY,\bA)$}}}%
    \put(0.96092521,0.19070725){\color[rgb]{0,0,0}\makebox(0,0)[lb]{\smash{$\hat{S}$}}}%
    \put(0.25255363,0.21391766){\color[rgb]{0,0,0}\makebox(0,0)[lb]{\smash{vector}}}%
    \put(0.50000123,0.19176626){\color[rgb]{0,0,0}\makebox(0,0)[lb]{\smash{$\hat{\bX}$}}}%
    \put(0.23180187,0.17155192){\color[rgb]{0,0,0}\makebox(0,0)[lb]{\smash{estimator}}}%
    \put(0.62508025,0.21163273){\color[rgb]{0,0,0}\makebox(0,0)[lb]{\smash{componentwise}}}%
    \put(0.65085473,0.171465){\color[rgb]{0,0,0}\makebox(0,0)[lb]{\smash{thresholding}}}%
    \put(0.39615674,0.0233336){\color[rgb]{0,0,0}\makebox(0,0)[lb]{\smash{sparsity rate $\kappa$}}}%
  \end{picture}%
\endgroup

\caption{\label{fig:two_stage} Illustration of the two-stage sparsity pattern recovery algorithm.}

\end{figure}

The threshold $T$ in the second stage provides a tradeoff between the two kinds of recovery errors: missed detections and false alarms. Throughout this paper, we will assume that that $T$ is chosen as a function of $(\hat{\bX},\kappa)$ such that the estimated sparsity pattern $\hat{S}$ has exactly $k = \lceil \kappa n \rceil$ elements. In practice, this is achieved by thresholding with the magnitude of the $k$'th largest entry in $\hat{\bX}$, and using additional randomness to break ties whenever the $k$'th largest magnitude is not unique.

Conceptually, it is useful to think of the estimate $\hat{\bX}$ generated in the first stage as a direct observation of the original signal that has been corrupted by additive noise, that is we can write $$\hat{\bX} = \bx + \tilde{\bW}$$ 
where $\tilde{\bW}$ is a vector of errors. Along the same lines, the componentwise thresholding in the second stage may be viewed as $n$ independent hypothesis tests under the idealized assumption that the entries of $\tilde{\bW}$ are i.i.d.~and symmetric about the origin.

The main difference between the algorithms studied in this paper is the vector estimator used in the first stage of recovery. In the following subsections, we give bounds on the sampling rate-distortion function corresponding to the maximum likelihood estimator, two different linear estimators (the matched filter and the MMSE), a class of estimators based on approximate message passing, and the MMSE estimator. Our results are summarized in Table~\ref{table:comparison} below. Analysis and illustrations are given in Section~\ref{sec:analysis} and the Appendices.

\subsection{Maximum Likelihood}

We begin with the method of {\em maximum likelihood} (ML). Conditioned on the realization of the matrix $\bA = A$, the measurements $\bY$ have a multivariate Gaussian distribution with mean $A\bx$ and covariance $\snr^{-1}I_{m \times m}$. Therefore, the ML estimate of sparsity  $k=\lceil \kappa n \rceil$ is given by
\begin{align}
\hat{\bx}^{(\text{ML})}= \arg
\min_{\tilde{\bx} \in \mathbb{R}^n\, : \, \|\tilde{\bx}\|_0 =k } \| \by - A \tilde{\bx}\| \label{eq:L0_minimization}
\end{align}
where $\|\tilde{\bx}\|_0$ denotes the number of nonzero entries in $\tilde{\bx}$. If the minimizer of \eqref{eq:L0_minimization}  is not unique, we will assume that the sparsity pattern estimate $\hat{S}$ in the second stage of the recovery algorithm is drawn uniformly at random from the set
\begin{align*}
\big\{ S \,:\, \text{$S$ is the sparsity pattern of a minimizer of \eqref{eq:L0_minimization}}\big\}.
\end{align*}
This estimator has been studied previously for the task of exact sparsity pattern recovery by Wainwright \cite{Wainwright_InfoLimits_IEEE09} and Fletcher et al. \cite{FLetcher_IEEE09}. 

Before we present our main result, two more definitions are needed. First, we define
\begin{align}
\mathcal{H}(D;\kappa) & = \kappa H_b(D) + (1-\kappa) H_b\Big ( \frac{\kappa D}{1-\kappa}\Big) \label{eq:N_D}
\end{align}
where $H_b(p) = -p \log p -(1-p) \log(1-p)$ is the binary entropy function. In \cite{RG-Lower-Bounds} it is shown that the metric entropy rate for a sequence of sparsity patterns with sparsity rate $\kappa$ under the distortion metric \eqref{eq:distortion} is given by $H_b(\kappa)- \mathcal{H}(D;\kappa)$ for any $D \le 1-\kappa$.

Also, we define
\begin{align}
P(D;p_X) = \int_0^\infty \Big( \Pr[X^2 > u] - (1-D)\kappa\Big)^+ du \label{eq:P_D}
\end{align}
where $(\cdot)^+ = \max(\cdot,0)$. This function corresponds to the average power of the smallest fraction $D$ of nonzero entries. It is a continuous and monotonically increasing function of $D$, with $P(0;p_X) = 0$ and $P(1;p_X) = 1$ for any $p_X \in \mathcal{P}(\kappa)$.

Our first result gives an upper bound on the sampling rate-distortion function corresponding to the ML estimator. The proof is given in Appendix~\ref{sec:proof:ML}.

\begin{theorem}\label{thm:ML}
Under Assumptions S1-S2 and M1-M5, a distortion $D$ is achievable for the tuple $(\rho,p_X,\snr)$ using the ML estimator if $\rho > \rho^{(\text{ML-UB})}$ where
\begin{align}
\rho^{(\text{ML-UB})} = \kappa + \max_{\tilde{D} \in [D,1]} \Lambda(\tilde{D}; p_X,\snr), \label{eq:thm:ML}
\end{align}
with $\Lambda(D;p_X,\snr)$ given in \eqref{eq:C_beta} below. 
\addtocounter{equation}{1}

Moreover, for any $\rho > \rho^{(\text{ML-UB})}$ the error probability $\varepsilon_n^{(\text{ML})}(D)$ decays at least exponentially rapidly with $n$, i.e.~there exists a constant $C$ such that
\begin{align}
\varepsilon_n^{(\text{ML})}(D) \le \exp(- C\, n).
\end{align}
\end{theorem}

\newcounter{tempequationcounter}
\begin{figure*}[!t]
\normalsize
\setcounter{tempequationcounter}{\value{equation}}
\begin{align}
\setcounter{equation}{20}
\Lambda(D; p_X,\snr) & = \min \bigg\{ \frac{2 \mathcal{H}(D;\kappa)}{\log(1 + P(D;p_X) \, \snr ) + (1 + P(D;p_X) \,\snr)^{-1} -1 }, \nonumber \\
&\qquad \qquad \min_{\theta,\mu \in (0,1)} \max\bigg(\frac{2 \mathcal{H}(D;\kappa)}{\log(1 + {\textstyle \frac{1}{4}} (1-\theta)^2 P(D;p_X) \, \snr)}, \frac{2 \mathcal{H}(D;\kappa) - D \kappa \log(1-\mu^2)}{\log(1 + \mu \theta P(D;p_X) \, \snr )}\bigg) \bigg\}
\label{eq:C_beta}
\end{align}
\setcounter{equation}{\value{tempequationcounter}}
\hrulefill
\vspace*{4pt}
\end{figure*}

\begin{remark}
Theorem~\ref{thm:ML} does {\em not} require the convergence of the empirical second moments given in Assumption S3.
\end{remark}

Theorem~\ref{thm:ML} is a significant improvement over previous results in several respects. First, it applies generally to any distribution $p_X$. Second, the bound is given explicitly in terms of the problem parameters and is finite for any nonzero distortion $D$. Finally, as we will show in Section~\ref{sec:analysis} and Appendix~\ref{sec:behavior_ML}, the behavior of the bound, in a scaling sense with respect to the SNR and distortion $D$, is optimal for a large class of distributions.

\begin{cor}\label{cor:ML} The statement of Theorem~\ref{thm:ML} holds if the function $\Lambda(D;p_X,\snr)$ is replaced with any of the following upper bounds:
\begin{align}
\tilde{\Lambda}_1(D;p_X,\snr) &=  \frac{4 \mathcal{H}(D;\kappa)}{\log\big(1 + \big [P(D;p_X) \, \snr  /e\big]^2 \big)} \\
\tilde{\Lambda}_2(D;p_X,\snr) &= \frac{ 2 \mathcal{H}(D;\kappa) +2 \log(5/3)\kappa D}{\log \big(1 + (4/25) P(D;p_X)\,\snr \big)}\\
\tilde{\Lambda}_3(D;p_X,\snr) &= \min_{i \in \{1,2\}} \tilde{\Lambda}_i(D;p_X,\snr).
\end{align}
\end{cor}
\begin{proof}
The bound $\tilde{\Lambda}_1(D;p_x,\snr)$ follows from the first term in \eqref{eq:C_beta} and the simple fact that $\log(1+ x)  - x/(1+x)  \ge (1/2)\log( 1+ x^2/e^2)$ for all $x \ge 0$. The bound $\tilde{\Lambda}_2(D;p_x,\snr)$ follows from the second term in \eqref{eq:C_beta} evaluated with $\mu = 4/5$ and $\theta = 1/5$. 
\end{proof}

\subsection{Linear Estimation}\label{sec:linear_estimation}

Next, we consider two different linear estimators. The {\em matched filter} (MF) estimate is given by
\begin{align}
\hat{\bx}^{(\text{MF})} &= \textstyle \big(\frac{n}{m}\big) A^T \by
\end{align}
and the {\em linear minimum mean-squared error} (LMMSE) estimate is given by
\begin{align}
\hat{\bx}^{(\text{LMMSE})} &=  (A^T A +  \snr\,  I_{n \times n} )^{-1} A^T \by
\end{align}
These estimators are appealing in practice due to their low computational complexity. Their performance has been studied extensively in the context of multiuser detection with random spreading (see e.g.~\cite{VerSha99,TseHan99}). More recently, the use of the matched filter for the task of sparsity pattern recovery was investigated by Fletcher et al. \cite{FLetcher_IEEE09} and early versions of this paper \cite{RG09}. 

To characterize the behavior of the MF and LMMSE algorithms in the high-dimensional setting, it is useful to introduce a scalar equivalent model of the vector observation model given in \eqref{eq:estimation_problem}. 

\begin{definition}\label{def:scalar_model}
The {\em scalar equivalent model} of \eqref{eq:estimation_problem} is given by
\begin{align}
Z = X + \sigma W \label{eq:estimation_problem_scalar}
\end{align}
where $X \sim p_X$ and $W \sim \mathcal{N}(0,1)$ are independent and $\sigma^2 \in (0,\infty)$ is a fixed parameter called the noise power. 
\end{definition}

In the context of the scalar model, the problem of support recovery is to determine whether or not $X$ is equal to zero. Let $S = \one(X\ne 0)$ be the indicator of this event and let $\hat{S}$ be an estimate of the form $\hat{S}= \one(|Z| > t)$. Then, the detection error probability corresponding to the distortion measure defined  
in Section~\ref{sec:formulation_distortion} is given by 
\begin{align}
p_D(t) &=  \max\big(\Pr[\hat{S} = 0| S = 1] ,\Pr[S = 0| \hat{S} = 1]\big)\label{eq:D_awgn}.
\end{align}

We define
\begin{align}
D_\text{awgn}(\sigma^2;p_X) = \min_t p_D(t) \label{eq:D_sig2}
\end{align}
to be a mapping between the noise power $\sigma^2$ and the minimal detection error probability $p_D(t)$ achieved by $\hat{S}$. We also define
\begin{align}
\sigma^2_\text{awgn}(D;p_X)= \sup\{ \sigma^2 \ge 0 \, : \, D_\text{awgn}(\sigma^2;p_X) \le D\}.
\end{align}
to be the inverse mapping. Here, we use the subscript ``awgn'' to emphasize the fact that this error probability corresponds to additive noise $W$ that is Gaussian and independent of $X$.

The following results give an explicit expression for the sampling rate-distortion function of the MF and LMMSE recovery algorithms. Their proofs are given in Appendices~\ref{sec:MF-proof} and \ref{sec:LMMSE-proof} respectively.

\begin{theorem}\label{thm:MF}
Under Assumptions S1-S3 and M1-M4, the sampling rate-distortion function corresponding to the MF estimator is given by
\begin{align}
\rho^{(\text{MF})} &= \frac{1}{\sigma^2\, \snr} + \frac{1}{\sigma^2} \label{eq:MF}
\end{align}
where $\sigma^2 = \sigma^2_\text{awgn}(D;p_X)$. 
\end{theorem}

\begin{remark}
Theorem~\ref{thm:MF} does {\em not} require the measurement matrix $\bA(n)$ to be Gaussian.
\end{remark}

\begin{theorem}\label{thm:LMMSE}
Under Assumptions S1-S3 and M1-M5, the sampling rate-distortion function corresponding to the LMMSE estimator is given by
\begin{align}
\rho^{(\text{LMMSE})} &=\frac{1}{\sigma^2\, \snr} + \frac{1}{1+ \sigma^2} 
\label{eq:LMMSE}
\end{align}
where $\sigma^2 = \sigma^2_\text{awgn}(D;p_X)$. 
\end{theorem}

Recall that our definition of achievability says that the probability that the distortion $d(S^*,\hat{S})$ exceeds a threshold $D$ must tend to zero as $n$ becomes large. For the MF and LMMSE estimators, convergence of the expected distortion $\bE[d(S^*,\hat{S})]$ can be established straightforwardly using results in \cite{VerSha99} and \cite{TseHan99}. Therefore, the key contribution of Theorems~\ref{thm:MF} and \ref{thm:LMMSE} is to show that this convergence holds also in probability. For the MF estimator, this is achieved using a general decoupling result which applies generally for any i.i.d.~distribution on the measurement matrix. For the LMMSE estimator, we use the fact that the LMMSE can be computed using the AMP algorithm discussed in the next section.

\subsection{Approximate Message Passing}\label{sec:AMP}

We now consider estimation using {\em approximate message passing} (AMP) \cite{DonMalMon09}. The AMP algorithm is characterized in terms of a scalar de-noising function $\eta(z,\sigma^2)$ which is assumed to be Lipschitz continuous with respect to its first argument and continuous with respect to its second argument. Starting with initial conditions $\bx^0 = \mathbf{0}_{n \times 1}$, $\bu^0 = (\frac{n}{m}) \by$ and $\hat{\sigma}^2_0 = (\snr^{-1} + 1)/\rho$, the algorithm proceeds for iterations $t=1,2,\cdots$ according to
\begin{align}
\bx^{t} &= \eta\big(A^T \bu^{t-1} + \bx^{t-1}, \hat{\sigma}^2_{t-1}\big)  \label{eq:amp_iteration_a}\\
\bu^{t} & = {\textstyle \big(\frac{n}{m} \big)} \Big[  \by - A \bx^{t}\nonumber\\
& \quad  + \bu^{t-1} \frac{1}{n} \sum_{i=1}^n \eta'\Big( \big(A^T \bu^{t-1} + \bx^{t-1}\big)_i, \hat{\sigma}^2_{t-1}\Big) \Big] \label{eq:amp_iteration_b}\\
\hat{\sigma}^2_{t}& = \frac{1}{n} \|\bu^{t}\|^2, \label{eq:sig2_hat_t}
\end{align}
where $\eta'(z,\sigma^2)$ denotes the partial derivative of $\eta(z,\sigma^2)$ with respect to $z$, and, for any vector $\bz$, $\eta(\bz,\sigma^2)$ denotes the vector obtained by applying the function $\eta(z,\sigma^2)$ componentwise.

The AMP algorithm is said to succeed if the tuple $(\bx^t,\bu^t, \hat{\sigma}^2_t)$ converges to a fixed point $(\bx^\infty,\bu^\infty, \hat{\sigma}^2_\infty)$. Various stability assumptions guaranteeing  convergence of the algorithm are discussed in \cite{DonMalMon09,DonMalMon10}. In some cases, the rate of convergence is exponential in the number of iterations. 

\begin{remark} Our update equations for the AMP algorithm differ slightly from those given in \cite{DonMalMon09,DonMalMon10,BayMon10a}. This difference is due to the fact that this paper considers row normalization of the measurement matrix (Assumption M3) whereas the previous work considers column normalization.
\end{remark}
 
Conceptually, it is useful to think of the vector $\bx^t$, generated in the $t$'th iteration of the AMP algorithm, as a noisy version of the original vector $\bx$ that has been passed through the scalar de-noising function $\eta(\cdot,\hat{\sigma}^2_{t-1})$. More specifically, we can write
\begin{align}
\bx^t = \eta(\bx + \tilde{\bw}^{t-1};\hat{\sigma}^2_{t-1})
\end{align}
where
\begin{align}
\tilde{\bw}^{t-1} = A^T \bu^{t-1} + \bx^{t-1} - \bx \label{eq:amp_error}
\end{align}
is a vector of errors. 

In \cite{DonMalMon09,DonMalMon10}, it is shown, both heuristically and empirically, that, under Assumptions S1-S3 and M1-M5 of this paper, the error vector $\bw^{t-1}$ defined in \eqref{eq:amp_error} behaves similarly to additive white Gaussian noise with mean zero and variance $\hat{\sigma}^2_{t-1}$. A precise statement of this behavior, corresponding to the empirical marginal distribution of the tuple $(\bx,\bx^t,\tilde{\bw}^t$), is proved in ensuing work by Bayati and Montanari \cite{BayMon10a}. See Appendix~\ref{sec:thm_MF_proof} for more details.

At this point, we are faced with the following question: based on the output $(\bx^\infty, \bu^\infty,\hat{\sigma}^2_\infty)$ of the AMP algorithm, what should we choose as an estimate $\hat{\bx}$ of the unknown vector $\bx$? In previous work, where the primary objective is to minimize the MSE, the output $\hat{\bx}^\infty$ is used as an estimate of $\bx$ (see e.g.~\cite{BayMon10b}). The main reason for using this estimate is that the function $\eta(\cdot,\sigma^2)$ provides a scalar de-noising step that reduces the effect of the additive error $\tilde{\bw}$. 

In this paper, however, our primary objective to is generate an estimate of $\bx$ that leads to an accurate estimate of the sparsity pattern in the second stage of estimation. As such, the final scalar de-noising step is unnecessary, and potentially counterproductive. To see why, note that the componentwise thresholding in the second stage of recovery depends entirely on the relative magnitudes of the entries in $\hat{\bx}$. If the denoiser does not preserve the ranking of these magnitudes (e.g.~if many nonzero values are mapped to zero), then relevant information about the sparsity pattern is lost. 

Accordingly, we use the vector estimate given by
\begin{align}
\hat{\bx}^{(\text{AMP})} = A^T\bu^{\infty} + \bx^{\infty}. \label{eq:x_hat_amp}
\end{align}
Since the AMP output $(\bx^\infty, \bu^\infty,\hat{\sigma}^2_\infty)$ satisfies the fixed point equation $$ \bx^\infty = \eta( A^T\bu^{\infty} + \bx^{\infty}, \hat{\sigma}^2_\infty),$$
we see that our estimate corresponds directly to the signal-plus-noise estimate $\bx + \tilde{\bw}^\infty$ prior to the scalar de-noising. 

To characterize the behavior of AMP in the high-dimensional setting, we return to the scalar equivalent model given in Definition~\ref{def:scalar_model}. We define the scalar mean-squared error function
\begin{align}
\mse(\sigma^2;p_X,\eta) = \bE\big[ \big|X - \eta(X + \sigma W,\sigma^2 )\big|^2\big]
\end{align}
where $X \sim p_X$ and $W \sim \mathcal{N}(0,1)$ are independent, and let $\{\sigma_t^2\}_{t \ge 1}$ be a sequence of noise powers defined by the recursion
\begin{align}
\sigma^2_{t} = \frac{ \snr^{-1} +  \mse(\sigma^2_{t-1};p_X,\eta)}{\rho}  \label{eq:state_evolution}
\end{align}
where $\sigma_0^2 = (\snr^{-1} + 1)/\rho$. This recursion is referred to as {\em state evolution} \cite{DonMalMon09}. 

The following result shows that the distortion corresponding to the AMP estimate is characterized by the state evolution recursion. In Appendix~\ref{sec:amp_general-proof}, it is shown how this result follows straightforwardly from recent work of Bayati and Montanari \cite{BayMon10a}.

\begin{theorem}\label{thm:amp_general}
Suppose that the noise powers defined by the state evolution recursion \eqref{eq:state_evolution} converge to a finite limit
\begin{align}
\sigma^2_\infty = \lim_{t \rw \infty} \sigma^2_t. \label{eq:sig2_infty}
\end{align}
Then, under Assumptions S1-S3 and M1-M5, the distortion $d(S^*,\hat{S})$ corresponding to the AMP estimator converges in probability as $n \rw \infty$ to the limit $D_\text{awgn}(\sigma^2_\infty;p_X)$.
\end{theorem}

\begin{remark}
We note that the limiting noise power $\sigma^2_\infty$ is a function of the tuple $(\rho,p_X,\snr)$ and the function $\eta(z,\sigma^2)$. In some cases, it is possible that $\sigma^2_\infty$ is an increasing function of $\rho$, and thus increasing the sampling rate increases the distortion.\end{remark}

In the following subsections, two special cases of the AMP estimator are considered. 

\subsubsection{Optimized AMP}
If the limiting distribution $p_X$ is known, then the limiting noise power $\sigma^2_\infty$ is minimized when $\eta(z,\sigma^2)$ is given by the conditional expectation
\begin{align}
\eta^{(\text{MMSE})} (z,\sigma^2;p_X)= \bE[X| X + \sigma W=z]
\end{align}
corresponding to the distribution $p_X$. 
We will refer to this version of the AMP algorithm as AMP-MMSE, and we define the corresponding mean-squarred error function 
\begin{align}
\mmse(\sigma^2;p_X) = \bE\big[ \big|X - \bE[X|X + \sigma W] \big|^2\big].
\end{align}

\begin{theorem}\label{thm:AMP_MMSE}
Under Assumptions S1-S3 and M1-M5, 
the sampling rate-distortion function corresponding to the AMP-MMSE estimator is given by
\begin{align}
\rho^{(\text{AMP-MMSE})} = \sup_{\tau \ge \sigma^2} \left\{ \frac{\snr^{-1} +  \mmse(\tau;p_X)}{\tau} \right\} \label{eq:amp-mmse}
\end{align}
where $\sigma^2 = \sigma^2_\text{awgn}(D;p_X)$. 
\end{theorem}

\begin{proof}
By the definition of the MMSE, we have $\mmse(\sigma^2;p_X)< \bE[X^2]=1$ for all $\sigma^2<\infty$. Therefore, any solution $\sigma^2$ to the fixed point equation 
\begin{align}
\sigma^2 = \frac{\snr^{-1} + \mmse(\sigma^2;p_X)}{\rho}\label{eq:fixed_point_mmse}
\end{align}
is strictly less than the initial noise power $\sigma^2_0$. Since $\mmse(\sigma^2;p_X)$ is a strictly decreasing function of $\sigma^2$, it thus follows that the limit $\sigma^2_\infty$ always exists and is given by the largest solution to \eqref{eq:fixed_point_mmse}, i.e.
\begin{align}
\sigma^2_\infty = \sup\left\{ \tau \ge 0 \,: \rho = \frac{ \snr^{-1} + \mmse(\tau;p_X)}{\tau}\right\}. \label{eq:sig2_mmse}
\end{align}
Since the right hand side of \eqref{eq:sig2_mmse} is a strictly decreasing function of $\rho$, Theorem~\ref{thm:AMP_MMSE} follows directly from Theorem~\ref{thm:amp_general} and the definition of the sampling rate-distortion function.
\end{proof}

It is important to note that the AMP-MMSE estimate is a function of the distribution $p_X$. If this distribution is unknown and the estimate is made using a postulated distribution that differs from the true one, then the performance of the algorithm could be highly suboptimal.

\subsubsection{Soft Thresholding}
Another special case of the AMP algorithm is when $\eta(z,\sigma^2)$ is given by the soft thresholding function
\begin{align}
\eta^{(\text{ST})}(z,\sigma^2;\alpha)&= 
\begin{cases}
z + \alpha \sigma, & \text{if $z < -\alpha \sigma$}\\
0,& \text{if $|z| \le \alpha \sigma$}\\
z - \alpha \sigma,& \text{if $z \ge \alpha \sigma$}
\end{cases}
\end{align}
for some threshold $\alpha \ge 0$. We will refer to this algorithm as AMP-ST.

\begin{remark}
It is argued in \cite{DonMalMon10} and shown rigorously in \cite{BayMon10b} that, for a fixed set $(p_X,\snr)$, the behavior of AMP-ST is equivalent to that of LASSO \cite{Tibsh96} under an appropriate calibration between the threshold $\alpha$ and the regularization parameter of LASSO. 
\end{remark}

To characterize the behavior of AMP-ST, we follow the steps outlined by Donoho et al. \cite{DonMalMon10} and define the noise sensitivity
\begin{align}
\mathcal{M}(\sigma^2,\alpha;p_X)  = \frac{\mse(\sigma^2;p_X,\eta^{(\text{ST})})}{\sigma^2}. \label{eq:noise_sensitivity}
\end{align}

\begin{theorem}\label{thm:AMP_ST}
Under Assumptions S1-S3 and M1-M5, the sampling rate-distortion function corresponding to the AMP-ST estimator is given by
\begin{align}
\rho^{(\text{AMP-ST})} = \frac{1}{\sigma^2 \snr} + \mathcal{M}(\sigma^2,\alpha;p_X)\label{eq:AMP-ST}
\end{align} 
where $\sigma^2 =  \sigma^2_\text{awgn}(D;p_X)$.   
\end{theorem}
\begin{proof}
This result is an immediate consequence of Theorem~\ref{thm:amp_general} and \cite[Lemma 4.1]{DonMalMon10} which shows that $\sigma^2_\infty$ exists and is given by the unique solution to the fixed point equation
\begin{align}
\rho = \frac{1}{\sigma^2_\infty\, \snr} + \mathcal{M}(\sigma^2_\infty,\alpha;p_X). \label{eq:fixed_point_ST}
\end{align}
\end{proof}

We note that Theorem~\ref{thm:AMP_ST} can be used to find the optimal value for the soft-thresholding parameter $\alpha$. If, for example, the goal is to minimize the sampling rate $\rho$ as a function of the tuple $(D,p_X,\snr)$, then the optimal value of $\alpha$ is given by the minimizer of $\mathcal{M}(\sigma^2,\alpha;p_X)$. Conversely, if the goal is to minimize the distortion $D$ as a function of the tuple $(\rho,p_X,\snr)$, then the optimal value of $\alpha$ is one that minimizes the value of $\sigma^2_\infty$ in the fixed point equation  \eqref{eq:fixed_point_ST}.

We emphasize that the soft-thresholding function is, in general, suboptimal for a given distribution $p_X$ (recall that the optimal version of AMP is given by AMP-MMSE). The main reason we study soft-thresholding is to deal with settings where the distribution $p_X$ is unknown. In Appendix~\ref{sec:ST_behavior}, it is shown how the function $\mathcal{M}(\sigma^2,\alpha;p_X)$ can be upper bounded uniformly over the class of distributions $\mathcal{P}_\kappa$, and how combining this upper bound with Theorem~\ref{thm:AMP_ST} gives bounds on the sampling rate-distortion function that hold uniformly over any class of distributions $\mathcal{P}_X \subset \mathcal{P}(\kappa)$.

\subsection{Minimum Mean-Squared Error via the Replica Method} 

Lastly, we consider the performance of the {\em minimum mean-squared error} (MMSE) estimator. For a known distribution $p_X$, this estimator is given by the conditional expectation
\begin{align}
\bx^{(\text{MMSE})} = \bE[\bX| A\bX + \snr^{-1/2} \bW = \by],
\end{align}
where the entries of $\bX$ are i.i.d.~$p_X$. 

To analyze the behavior of the MMSE estimator, we develop a result based on the powerful but heuristic {\em replica method} from statistical physics. This method was developed originally in the context of spin glasses \cite{EdwAnd75} and has been applied to the vector estimation problem studied in this paper by a series of recent papers \cite{Tanaka02, Muller03, GuoVer05,KabWadTan09, RanFleGoy09, GuoBarSha09}.

In the replica analysis, the unknown vector is modeled as a random vector $\bX$ whose entries are  i.i.d. $p_X$. Accordingly, each realization of the measurement matrix $\bA = A$, induces a joint probability measure on the random input-output pair $(\bX,\bY)$, or equivalently on the random input-estimate pair $(\bX, \hat{\bX})$. At this point, the key argument exploited by the replica method is that, due to a certain type of ``replica symmetry'' in the problem, the joint probability measure on $(\bX, \hat{\bX})$ behaves similarly for all typical realizations of the measurement matrix $\bA$ in the high-dimensional setting. Based on this assumption, it can then be argued that the marginal joint distribution on the entries in $(\bX,\hat{\bX})$ converges to a nonrandom limit, characterized by the tuple $(\rho,p_X, \snr)$.

A detailed explanation of the replica analysis is beyond the scope of this paper.  The assumptions needed for our results are summarized below. 

\smallskip

{\noindent \em Replica Analysis Assumptions:} The key assumptions underlying the replica analysis are stated explicitly by Guo and Verd\'{u} in \cite{GuoVer05}. A concise summary can also be found in \cite[Appendix A]{RanFleGoy09}. Two assumptions that are used---and generally accepted throughout the literature---are the validity the ``replica trick'' and the self averaging property of a certain function defined on the random matrix $\bA$. A further assumption that is also required is that of {\em replica symmetry}. This last assumption is problematic, however, since it is known that there are cases where it does not hold, and there is currently no test to determine whether or not it holds in the setting of this paper. 
\smallskip

The following result characterizes the sampling rate-distortion function corresponding to the MMSE estimator under the condition that the replica assumptions are valid. The proof is given in Appendix~\ref{sec:mmse-proof}.

\begin{theorem}\label{thm:MMSE}
Assume that the replica analysis assumptions hold. Under Assumptions S1-S3 and M1-M4, the distortion $d(S^*,\hat{S})$ corresponding to the MMSE estimator converges in probability as $n \rw \infty$ to the limit $D_\text{awgn}(\tau^*;p_X)$ where
\begin{align}
\tau^* =  \arg \min_{\tau> 0}\Big\{ \rho \log \tau +  \frac{1}{ \tau\, \snrs} + 2 I(X;X+\sqrt{\tau} W)\Big\} .\label{eq:F_tau}
\end{align}
with $X \sim p_X$ and $W \sim \mathcal{N}(0,1)$ independent. 
\end{theorem}

We emphasize that a key difference between Theorem~\ref{thm:MMSE} and the previous bounds in this paper is that the replica analysis assumptions on which it is based are currently unproven. In the context of the recovery problem outlined in this paper, this means that Theorem~\ref{thm:MMSE} provides only a {\em heuristic prediction} for the true behavior of the MMSE estimator. The validity of this prediction for the setting of the paper depends entirely on the validity of the replica assumptions. 

In the next section, we will see that there are many parameter regimes in which the replica prediction for the MMSE estimator is tightly sandwiched between the rigorous upper given earlier in this paper and the information-theoretic lower bound in \cite{RG-Lower-Bounds}. Thus, beyond the context of sparsity pattern recovery, a significant contribution of this paper is that we provide strong evidence in support of the replica analysis assumptions.

\begin{remark}
One interesting implication of Theorem~\ref{thm:MMSE} is that the AMP-MMSE estimate is equivalent to the MMSE estimate whenever the noise power $\tau^*$ defined in \eqref{eq:F_tau} is equal to the limit $\sigma^2_\infty$ defined in \eqref{eq:sig2_mmse}. This suggests that the MMSE estimate can be computed efficiently in some problem regimes. 
\end{remark}

Finally, it is important to note that MMSE estimator is a function of the limiting distribution $p_X$. If this distribution is unknown and the estimate is made using a postulated distribution that differs from the true one, then the performance could be highly suboptimal. Using further results developed in \cite{GuoVer05} it is possible to characterize the sampling rate in terms of an arbitrary postulated prior and true limiting distribution. Such analysis, however, is beyond the scope of this paper.

\section{Analysis and Illustrations}\label{sec:analysis}

In this section, we show how the sampling rate-distortion functions given in Section~\ref{sec:main_results} depend on the desired distortion $D$, the SNR, and various properties of the distribution $p_X$. By comparison with the information-theoretic lower bounds in \cite{RG-Lower-Bounds}, we characterize problem regimes in which the behavior of the algorithms is near-optimal and other regimes in which the behavior is highly suboptimal.

\subsection{Signal Classes}\label{sec:signal_classes}

Following the problem formulation outlined in Section~\ref{sec:model_assumptions}, a class of signals can be characterized by a of a class of limiting distributions $\mathcal{P}_X \subset \mathcal{P}(\kappa)$ where $\mathcal{P}(\kappa)$ is the class of all probability measures with second moment equal to one and probability mass $1-\kappa$ at zero. To facilitate our analysis in the following sections, we introduce the following three classes:

\begin{itemize}
\item {\em Bounded: } We use $\mathcal{P}_\text{Bounded}(\kappa,B)$ to denote the class of all distributions $p_X \in \mathcal{P}(\kappa)$ such that $$\Pr[ |X| < B | X \ne 0] = 0$$
for some {\em lower bound} $B >0$. Due to the second moment constraint, the lower bound $B$ cannot exceed $1/\sqrt{\kappa}$.

\item {\em Polynomial Decay: } We use $\mathcal{P}_\text{Poly.}(\kappa,L,\tau)$ to denote the class of all distributions $p_X\in \mathcal{P}(\kappa)$ such that $$\lim_{x \rw 0} \frac{\Pr[ |X| \le x | X \ne 0]}{x^L} = \tau$$ for some {\em polynomial decay rate} $L >0$ and limiting constant $\tau \in (0,\infty)$. 

\item {\em Bernoulli-Gaussian: }  We say that a distribution $p_X$ is Bernoulli-Gaussian with sparsity 
$\kappa$ if the nonzero part of $p_X$ is zero-mean Gaussian, i.e. if
\begin{align*}
X \sim 
\begin{cases}
0, & \text{with probability $1-\kappa$}\\
\mathcal{N}(0, \frac{1}{\kappa}), & \text{ with probability $\kappa$}
\end{cases}.
\end{align*}
\end{itemize}

The bounded class corresponds to the setting where the nonzero entries in $\bx$ have a fixed lower bound $B$ on their magnitudes, independent of the vector length $n$. By contrast, the polynomial decay class corresponds to the setting where the magnitude of the $\lceil \beta \,k\rceil$'th smallest nonzero entry is proportional to $\beta^{1/L}$ for small $\beta$. Note that in the case of polynomial decay, a vanishing fraction of the nonzero entries are tending to zero as the vector length $n$ becomes large.

The Bernoulli-Gaussian distribution is an example of a distribution with polynomial decay rate $L=1$ and limiting constant $\tau = \sqrt{2 \kappa /\pi}$.

\subsection{Illustrations}

In the following sections we provide illustrations of the bounds derived in Section~\ref{sec:main_results} corresponding to either the Bernoulli-Gaussian distribution or the class of bounded distributions $\mathcal{P}_\text{Bounded}(\kappa,B)$ with lower bound $B= \sqrt{0.2/\kappa}$. Note that this choice of $B$ means that the nonzero entries in $\bx$ are lower bounded in squared magnitude by $20\%$ of their average power. 

The bounds corresponding to the Bernoulli-Gaussian distribution are optimized as a function of the relevant parameters. For the AMP-MMSE and MMSE bounds, this means that the true distribution $p_X$ is used to define the conditional expectations. For the AMP-ST bound, this means that the threshold $\alpha$ is chosen to either minimize the distortion as a function of the sampling rate or to minimize the sampling rate as a function of the distortion.

In order to derive uniform bounds for the class of bounded distributions $\mathcal{P}_\text{Bounded}(\kappa,B)$, it is necessary to consider the worst-case distribution in the class. For the ML and linear estimators, these bounds are obtained straightforwardly by lower bounding the functions $P(D;p_X)$ and $\sigma^2_\text{awgn}(D;p_X)$ (see Proposition~\ref{prop:bounded} below). For the AMP-ST we obtain a uniform bound by replacing the noise sensitivity $\mathcal{M}(\sigma^2,\alpha;p_X)$ in Theorem~\ref{thm:AMP_ST} with the upper bound $\mathcal{M}^*(\sigma^2,\alpha,\kappa)$ given in Appendix~\ref{sec:ST_behavior}, and then optimizing the resulting expression as a function of the threshold $\alpha$. Uniform bounds corresponding to the AMP-MMSE and MMSE cannot be derived using the results in this paper, since these estimators depend on the true underlying distribution $p_X$. 

For comparison, we also plot corresponding information-theoretic lower bounds derived in \cite{RG-Lower-Bounds}. These bounds correspond to the performance of the optimal sparsity pattern recovery algorithm under assumptions S1-S3 and M1-M4.

All illustrations correspond to a sampling rate of $\kappa = 10^4$. The qualitative behavior of the bounds does not change significantly for sparsity rates within several orders of magnitude of this value.

\subsection{Sampling Rate versus SNR}\label{sec:rho_SNR}

We begin our analysis of the bounds by studying the tradeoff between sampling rate and SNR. For a given recovery algorithm ALG, we use $\rho^{(\text{ALG})}_\infty$ to denote the infinite SNR limit of the sampling rate-distortion function:
\begin{align}
\rho^{(\text{ALG})}_\infty = \lim_{\snrs \rw \infty} \rho^{(\text{ALG})}.
\end{align}
This limit is a function of the pair $(D,p_X)$ and may be interpreted as the sampling rate required in the absence of noise. 

For the ML estimator, the infinite SNR limit of the upper bound in Theorem~\ref{thm:ML} is given by the sparsity rate $\kappa$, regardless of the distribution $p_X$ and distortion $D$. Since it can be shown that the ML estimate is equivalent to random guessing whenever $\rho < \kappa$, we thus conclude that the infinite SNR limit of $\rho^{(\text{ML})}$ is given explicitly by the piecewise constant function
\begin{align}
\rho^{(\text{ML})}_\infty  &= \begin{cases}
 \kappa, & \text{if $D \le 1-\kappa$}\\
 0, &\text{if $D > 1-\kappa$}
 \end{cases}.
\end{align}

An upper bound on the rate at which $\rho^{(\text{ML})}$ approaches its infinite SNR limit is given by the following result. The proof follows directly from the analysis of Theorem~\ref{thm:ML} given in Appendix~\ref{sec:behavior_ML}.

\begin{prop}\label{prop:ML_high_SNR}
For any nonzero distortion $D $ and distribution $p_X$, there exists a constant $C$ such that
\begin{align}
\rho^{(\text{ML})} \le  \kappa  + \frac{C}{\log(1+\snr)}. 
\end{align}
\end{prop}

The following result from \cite{RG-Lower-Bounds} shows that under some additional assumptions on the pair $(D,p_X)$, Proposition~\ref{prop:ML_high_SNR} is tight, in a scaling sense, with respect to the SNR.

\begin{prop}\label{prop:opt_high_SNR}\cite{RG-Lower-Bounds} Suppose that $p_X$ can be expressed as
\begin{align}
p_X = (1-\kappa) \delta_0 + \omega_c p_{X_c} + (\kappa - \omega_c) p_{X_d}
\end{align}
where $X_c$ is continuous with finite differential entropy $h(X_c)$ and $X_d$ is discrete. Let $D < 1-\kappa$ be any distortion that satisfies
\begin{align}
2H_b(\kappa_c) - 2 \mathcal{H}\big(\tfrac{\kappa}{\omega_c} D ; \kappa_c\big) & > \kappa _c \log \bigg( \frac{ \bE[X_c^2] - \kappa_c (\bE[X_c])^2}{N(X_c)} \bigg)  \nonumber \\
& \quad + (1-\kappa_c) \log\Big( \frac{1}{1-\kappa_c}\Big) \label{eq:D_Theta_LB}
\end{align}
where $\kappa_c = \omega_c/(1-\kappa + \omega_c)$ and $N(X_c) = (2 \pi e)^{-1} \exp(2 h(X_c))$. Then, under Assumptions S1-S2 and M1-M4, there exists a constant $C$ such that
\begin{align}
\rho > \omega_c +  \frac{C}{\log(1+\snr)} \label{eq:rho_snr_scaling_LB}
\end{align}
is a necessary condition for any recovery algorithm.
\end{prop}


Note that the constant $\omega_c$ in Proposition~\ref{prop:opt_high_SNR} is equal to the sparsity rate $\kappa$ whenever the nonzero part of $p_X$ is absolutely continuous with respect to Lebesgue measure. When this occurs, Propositions~\ref{prop:ML_high_SNR} and \ref{prop:opt_high_SNR} characterize the fundamental behavior of the recovery problem for any distortion $D$ satisfying \eqref{eq:D_Theta_LB}.

For the linear and AMP estimators, it is straightforward to show that the infinite SNR limits can be expressed as 
\begin{align}
\rho^{(\text{MF})}_\infty &= \frac{1}{\sigma^2}\\
\rho^{(\text{LMMSE})}_\infty &= \frac{1}{1+ \sigma^2}\\
\rho^{(\text{AMP-ST})}_\infty  &= \mathcal{M}(\sigma^2,\alpha,p_x)\\
\rho^{(\text{AMP-MMSE})}_\infty  &= \sup_{\tau > \sigma^2} \frac{\mmse(\tau;p_X)}{\tau}
\end{align}
where $\sigma^2 = \sigma^2_\text{awgn}(D;p_X)$. By comparison with the ML limit, we see that each of these algorithms is strictly suboptimal at high SNR whenever its limit exceeds the sparsity rate $\kappa$.

For the MMSE estimator, the infinite SNR limit of the sampling rate predicted by the replica method in Theorem~\ref{thm:MMSE} is characterized by the infinite SNR limit of the noise power $\tau^*$ given in \eqref{eq:F_tau}. It is easy to check that this limit is always less than or equal to $\kappa$, and thus the predicted MMSE infinite SNR limit is upper bounded by the ML infinite SNR limit. 

The rate at which the achievable sampling rates converge to their infinite SNR limits is illustrated in Fig~\ref{fig:rho_SNR_Gaussian} for the Bernoulli-Gaussian distribution. The relative tightness of the ML upper bound and the information-theoretic lower bound from \cite{RG-Lower-Bounds} provides rigorous verification of the MMSE behavior derived heuristically using the replica method. Moreover, as the SNR becomes large, the bounds corresponding to the AMP and linear estimate are significantly greater than the ML bounds, thus indicating that these methods are highly suboptimal at high SNR.

\begin{figure}[htbp]
\centering
\epsfig{file=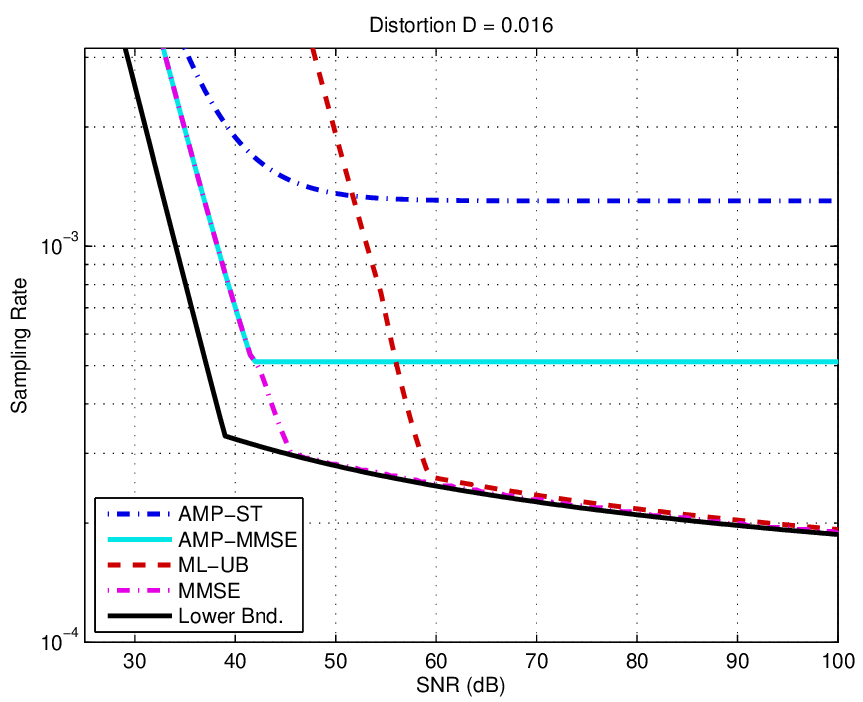, width = .46\textwidth}
\caption{Bounds on the achievable sampling rate $\rho$ as a function of the SNR when the nonzero entries are i.i.d.~zero-mean Gaussian and the sparsity rate is $\kappa = 10^{-4}$.} 
\label{fig:rho_SNR_Gaussian}
\end{figure}

In Fig.~\ref{fig:rho_D_inf_Gaussian}, the infinite SNR limits corresponding to the Bernoulli-Gaussian distribution are shown as a function of the distortion. For this distribution, the MMSE limit is equal to the minimum of the ML and AMP-MMSE limits. When the distortion is relatively small (i.e.~less than $\approx 0.9$), the  limits for ML, MMSE, and the information-theoretic lower bound are equal to the sparsity rate $\kappa$. When the distortion is relatively large, all of the bounds except for the ML bound converge to zero. If the goal is to minimize the distortion $D$ as a function of the sampling rate $\rho$, then this behavior shows that ML is strictly suboptimal whenever the sampling rate $\rho$ is strictly less than the sparsity rate $\kappa$. 

\begin{figure}[htbp]
\centering
\epsfig{file=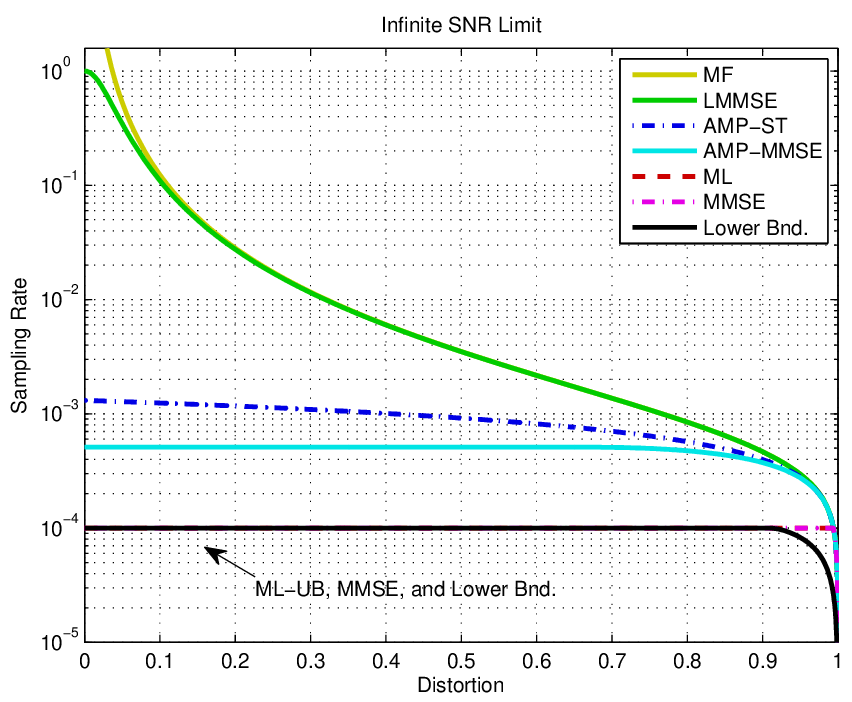, width = .46\textwidth}
\epsfig{file=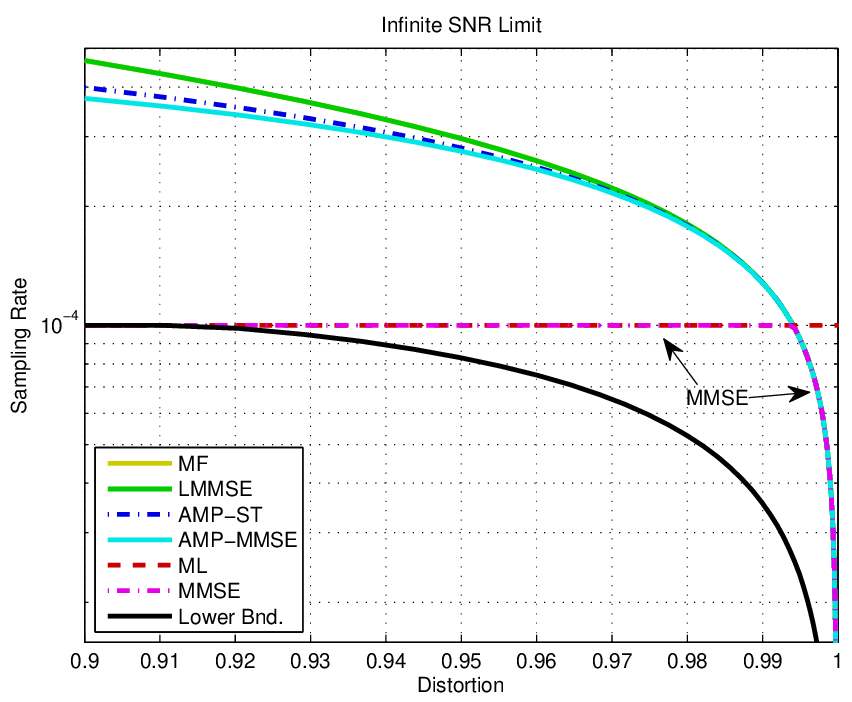, width =.46\textwidth}
\caption{Bounds on the infinite SNR limit of the achievable sampling rate $\rho$ as a function of the distortion $D$ when the nonzero entries are i.i.d.~zero-mean Gaussian and the sparsity rate is $\kappa = 10^{-4}$. The bottom panel highlights the large distortion behavior.} 
\label{fig:rho_D_inf_Gaussian}
\end{figure}

\subsection{Stability Thresholds}\label{sec:stability_thresholds}

For a given recovery algorithm ALG, we define the {\em stability threshold} as follows:
\begin{align}
\varrho^{(\text{ALG})} = \lim_{D \rw 0} \, \lim_{\snrs \rw \infty}\rho^{(\text{ALG})}. \label{eq:stability_threshold_def}
\end{align}
This threshold is a function of the distribution $p_X$ and may be interpreted as the sampling rate required for exact recovery in the absence of noise. For future reference, its significance is summarized in the following result.  

\begin{prop}\label{prop:stability_threshold} Consider a fixed recovery algorithm ALG and distribution $p_X$ with stability threshold $\varrho^{(\text{ALG})}$.
\begin{enumerate}[(a)]
\item If $\rho > \varrho^{(\text{ALG})}$, then recovery is {\em stable} in the sense that the distortion $D$ can be made arbitrarily small by increasing the SNR. 
\item If $\rho < \varrho^{(\text{ALG})}$, then there exists a fixed lower bound on the achievable distortion $D$, regardless of the SNR.
\end{enumerate} 
\end{prop}
\begin{proof}
This result follows immediately from the definition of the sampling rate-distortion function and the definition of the stability threshold in \eqref{eq:stability_threshold_def}.
\end{proof}

Starting with the infinite SNR limits given in Section~\ref{sec:rho_SNR}, it is straightforward to show that the stability thresholds of the recovery algorithms studied in this paper are given by
\begin{align}
\varrho^{(\text{ML})} &= \kappa \\
\varrho^{(\text{MF})} &= \infty \\
\varrho^{(\text{LMMSE})}&= 1 \\
\varrho^{(\text{AMP-ST})} &= \mathcal{M}_0(\alpha,\kappa) \\
\varrho^{(\text{AMP-MMSE})} &= \sup_{\tau >0} \frac{\mmse(\tau;p_X)}{\tau}
\end{align}
where $M_0(\alpha,\kappa)$ is given by Eq. \eqref{eq:M_zero} in Appendix~\ref{sec:ST_behavior}. 

The ML stability threshold corresponds to the well known fact that $m = k+1$ random linear projections are, with probability one, sufficient to recover an arbitrary $k$-sparse vector. The LMMSE stability threshold corresponds to the fact that $m = n$ linearly independent projections are sufficient to recover an arbitrary vector of length $n$. The AMP-ST stability threshold, which depends only on the sparsity rate of the distribution $p_X$, has been studied previously in \cite{DonMalMon10} where it is shown that $\min_\alpha\mathcal{M}_0(\alpha,\kappa)$ corresponds to the $\ell_1/\ell_0$ equivalence threshold of Donoho and Tanner \cite{DT09}. The AMP-MMSE threshold has, to the best of our knowledge, not been studied previously. 

Starting with Proposition~\ref{prop:opt_high_SNR}, it can also be shown that the stability threshold of the optimal recovery algorithm is lower bounded by 
\begin{align}
\varrho^{(\text{Lower Bnd.})} = \omega_c
\end{align}
for any distribution $p_X$ for which the strict inequality in \eqref{eq:D_Theta_LB} holds with $D = 0$. In many cases, this lower bound is equal to the sparsity rate $\kappa$.

Finally, using the analysis of the MMSE bound provided in Appendix~\ref{sec:behavior_MMSE}, it can be shown that the stability threshold of the MMSE estimator, as predicted by the replica method, is given by
\begin{align}
\varrho^{(\text{MMSE})} &=  \lim_{ \tau \rw 0} \frac{\mmse(\tau;p_X)}{\tau} \label{eq:stability_threshold_mmse}
\end{align}
when the limit exists. The right hand side of \eqref{eq:stability_threshold_mmse} is referred to as the {\em MMSE dimension} of the distribution $p_X$ by the authors in \cite{WuVer10}, and it is equal to the weight on the continuous part of $p_X$ whenever $p_X$ is a purely continuous-discrete mixture. 

In Fig.~\ref{fig:rho_D_inf_Gaussian}, the stability thresholds corresponding to the Bernoulli-Gaussian distribution correspond to the zero distortion limit (i.e.~the intersection with the $y$-axis).

\subsection{Distortion versus Sampling Rate}

We now turn our attention to the tradeoff between the achievable distortion and the sampling rate. We begin with a precise characterization of the low-distortion behavior. 

\begin{prop}\label{prop:rho_D}
The low-distortion behavior corresponding to a fixed pair $(\snr,p_X)$ is given by
\begin{align}
\lim_{D \rw 0} \bigg( \frac{P(D;p_X)}{ \mathcal{H}(D;\kappa)} \bigg)\, \rho^{(\text{ML-UB})} &= \Big(\frac{2}{3-\sqrt{8}}\Big) \frac{1}{\snr}  \label{eq:low_D_ML}\\
\lim_{D \rw 0} \sigma^2_\text{awgn}(D,p_X) \, \rho^{(\text{MF})} &= \frac{1}{\snr} + 1 \label{eq:low_D_MF}\\
\lim_{D \rw 0} \sigma^2_\text{awgn}(D,p_X) \, \rho^{(\text{ALG})} &= \frac{1}{\snr} \label{eq:low_D_alg}
\end{align}
where \eqref{eq:low_D_alg} holds for the LMMSE, AMP-MMSE, AMP-ST, and MMSE recovery algorithms.
\end{prop}
\begin{IEEEproof}
The limits corresponding to the ML and MMSE estimators are proved in Appendices~\ref{sec:behavior_ML} and \ref{sec:behavior_MMSE} respectively. The limits corresponding to the linear estimators follow immediately from the fact that $\sigma^2_\text{awgn}(D,p_X) \rw 0$ as $D \rw 0$. For the AMP-ST estimator, we use the additional fact that  the noise sensitivity $\mathcal{M}(\sigma^2,\alpha,p_X)$ is bounded (see Appendix~\ref{sec:ST_behavior}) and hence $\sigma^2 \mathcal{M}(\sigma^2,\alpha,p_X) \rw 0$ as $\sigma^2 \rw 0$. For the AMP-MMSE estimator, we use the bound
\begin{align}
\Big|  \sigma^2(D;p_X) \rho^{(\text{AMP-MMSE})} - \frac{1}{\snr}\Big | \nonumber \\
\le \sigma^2(D;p_X) \sup_{\tau > 0} \left\{\frac{ \mmse(\tau;p_X)}{\tau}\right\} \label{eq:AMP-MMSE_low_D}
\end{align}
and note that the right hand side of \eqref{eq:AMP-MMSE_low_D} becomes arbitrarily small as $D\rw 0$. 
\end{IEEEproof}

\begin{figure*}[htbp]
\centering
\epsfig{file=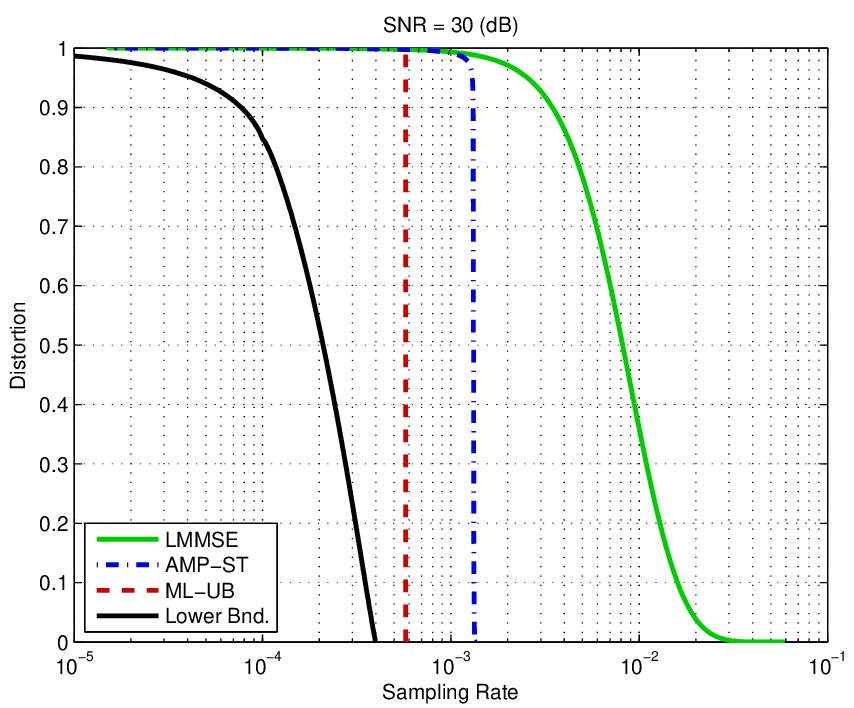, width = .45\textwidth}
\epsfig{file=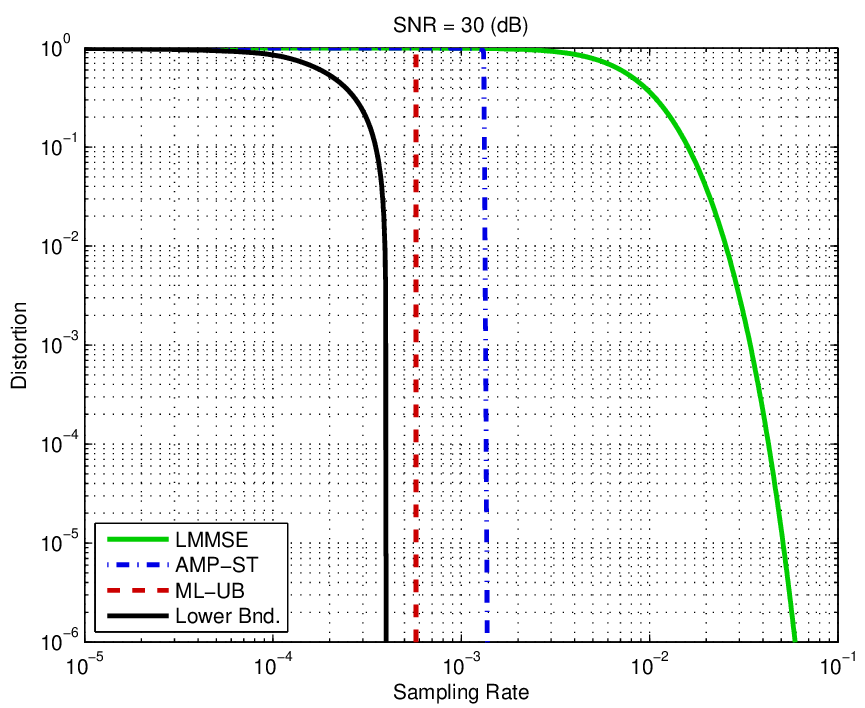, width =.45\textwidth}
\caption{Bounds on the achievable distortion $D$ as a function of the sampling rate $\rho$ when the nonzero entries are lower bounded in squared magnitude by $20\%$ of their average power, but are otherwise arbitrary and the sparsity rate is $\kappa = 10^{-4}$. The MF bound is comparable to LMMSE bound and is not shown.} 
\label{fig:D_rho_Bounded}
\end{figure*}

\begin{figure*}[htbp]
\centering
\epsfig{file=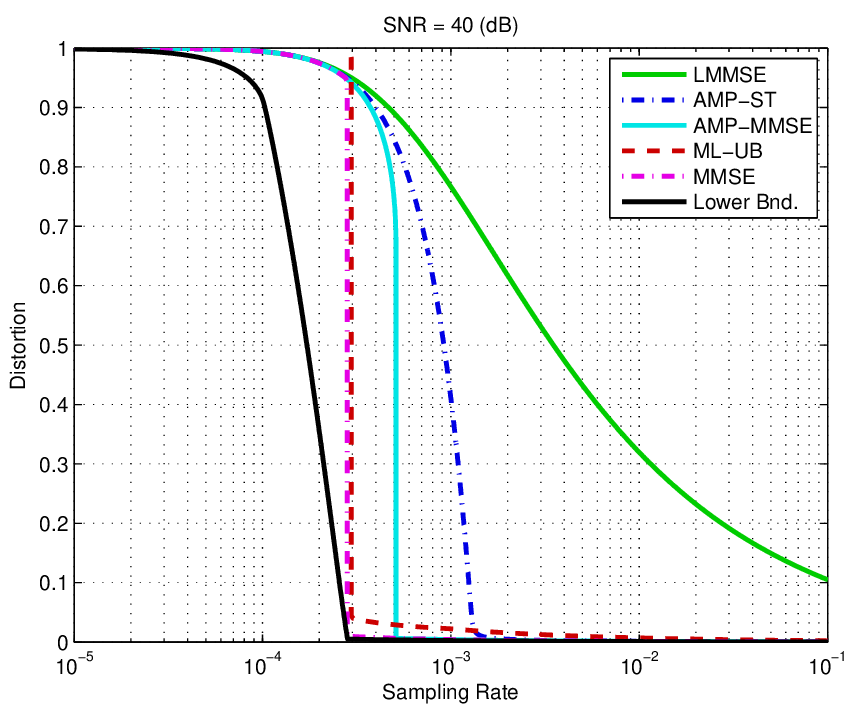, width = .45\textwidth}
\epsfig{file=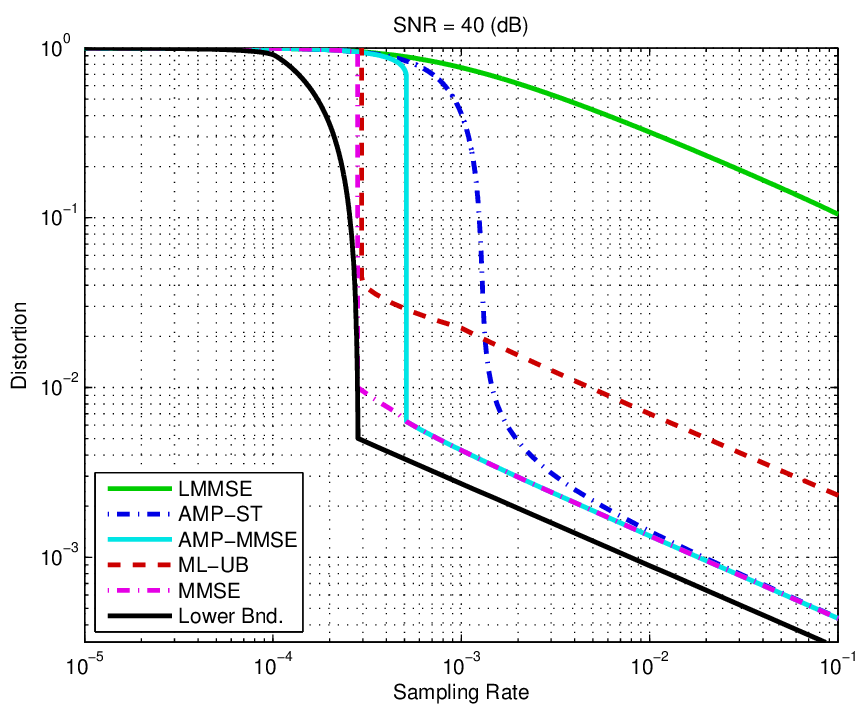, width =.45\textwidth}
\caption{Bounds on the achievable distortion $D$ as a function of the sampling rate $\rho$ when the nonzero entries are i.i.d.~zero-mean Gaussian and the sparsity rate is $\kappa = 10^{-4}$. The MF bound is comparable to LMMSE bound and is not shown.} 
\label{fig:D_rho_Gaussian}
\end{figure*}

In words, Proposition~\ref{prop:rho_D} says that as the desired distortion $D$ becomes small, the ML upper bound is inversely proportional to the ratio $P(D;p_X)/\mathcal{H}(D;\kappa)$ whereas the low distortion behavior of the remaining bounds is inversely proportional to the function $\sigma^2_\text{awgn}(D;p_X)$. The behavior of these terms is characterized for the bounded and polynomial decay signal classes in the following results. 

\begin{prop}[Bounded]\label{prop:bounded} If $p_X \in \mathcal{P}_\text{Bounded}(\kappa,B)$, then
\begin{align}
 \frac{P(D;p_X)}{\mathcal{H}(D;\kappa)} & \ge \frac{ B^2 }{2 [ \log\big(1/D) + 1 + \log\big({\textstyle \frac{1-\kappa}{\kappa}}\big) ] 
} \label{eq:P_D_bounded}
\end{align}
and
\begin{align}
\sigma^2_\text{awgn}(D;p_X)  & \ge  \frac{B^2}{8[\log(  1/D) + \log(\frac{1-\kappa}{\kappa} )]}\label{eq:sig2_D_bounded}. 
\end{align}
\end{prop}

\begin{prop}[Polynomial Decay]\label{prop:poly_decay}  If $p_X \in \mathcal{P}_\text{Poly.}(\kappa,L,\tau)$, then
\begin{align}
\lim_{D\rw 0} \bigg( \frac{\log(1/D)}{D^{2/L}} \bigg) \frac{P(D;p_X)}{\mathcal{H}(D;\kappa)}&=  \frac{ \tau^{-2/L}}{2(1+2/L)} \label{eq:P_D_poly_decay}
\end{align}
and
\begin{align}
\lim_{D \rw 0}  \bigg( \frac{\log(1/D)}{D^{2/L}} \bigg) \sigma^2_\text{awgn}(D;p_X)& =  \frac{\tau^{-2/L}}{2 }.\label{eq:sig2_D_poly_decay}
\end{align}
\end{prop}

The proofs of Propositions~\ref{prop:bounded} and \ref{prop:poly_decay} are given in Appendices~\ref{sec:prop:bounded-proof} and \ref{sec:prop:poly_decay-proof} respectively. 

One way to interpret these results is to think of the achievable distortion as a function of the sampling rate $\rho$. For a given tuple $(\rho,p_X,\snr)$ and recovery algorithm ALG, we use $D^{(\text{ALG})}$ to denote the smallest achievable distortion, i.e.
\begin{align}
D^{(\text{ALG})} = \inf\{ D \ge 0 \,:\, \text{$D$ is achievable}\}.
\end{align} 
An upper bound on the rate at which $D^{(\text{ALG})}$ decreases as the sampling rate becomes large is given in the following result, which is an immediate consequence of Propositions~\ref{prop:rho_D}, \ref{prop:bounded}, and \ref{prop:poly_decay}. 

\begin{prop}\label{prop:D_rho}
Consider a fixed pair $(\snr,p_X)$, and let ALG denote one of the ML, MF, LMMSE, AMP-MMSE, AMP-ST, MMSE recovery algorithms. 
\begin{enumerate}[(a)]
\item If $p_X \in \mathcal{P}_\text{Bounded}(\kappa,B)$ then there exists a constant $C$ such that
\begin{align}
D^{(\text{ALG})} \le \exp( - C \, \rho)
\end{align}
for all sampling rates $\rho > 0$. 
\item If $p_X \in \mathcal{P}_\text{Poly.}(\kappa,L,\tau)$ 
then there exists a constant $C$ such that
\begin{align}
\Big(\frac{1}{D^{(\text{ALG})}}\Big)^{2/L} \log\Big(\frac{1}{D^{(\text{ALG})}}\Big) \le C\, \rho
\end{align}
for all sampling rates $\rho > 0$. 
\end{enumerate}
\end{prop}

Proposition~\ref{prop:D_rho} shows that the low-distortion behavior depends critically on the behavior of the distribution $p_X$ around the point $x=0$. If the nonzero part of the distribution is bounded away from zero, then the distortion decays exponentially rapidly with the sampling rate. Conversely, if the nonzero part of $p_X$ has a polynomial decay rate $L>0$, then the distortion decays polynomially rapidly with the sampling rate, with an exponent that converges to $L/2$. 

In \cite{RG-Lower-Bounds}, it is shown that the scaling behavior in Proposition~\ref{prop:D_rho} is optimal in the sense that, up to constants, no recovery algorithm can do any better. Consequently, each of the algorithms presented in this paper is optimal in a scaling sense as the SNR becomes large whenever the sampling rate is strictly greater than the stability threshold. 

The behavior of the achievable distortion $D$ as a function of the sampling rate $\rho$ is illustrated in Fig.~\ref{fig:D_rho_Bounded} for the class of bounded distributions $\mathcal{P}_\text{Bounded}(\kappa,B)$ with $B = \sqrt{0.2/\kappa}$. In accordance with part (a) of Proposition~\ref{prop:D_rho}, the LMMSE bound decays exponentially rapidly as a function the sampling rate. The same scaling behavior also occurs for the ML and AMP-ST bounds as well as the lower bound from \cite{RG-Lower-Bounds}. However, due to the relatively large SNR, this behavior occurs only for for distortions much less than $10^{-6}$ and is therefore not visible in the range of distortions plotted in Fig.~\ref{fig:D_rho_Bounded}.

For comparison, the same behavior is illustrated in Fig.~\ref{fig:D_rho_Gaussian} for a Bernoulli-Gaussian distribution which has decay rate $L = 1$. In accordance with part (b) of Proposition~\ref{prop:D_rho}, the distortion decays polynomially with rate $1/2$. Interestingly, the AMP-MMSE and AMP-ST bounds converge to the MMSE bound, and are within a constant factor $\approx 1.18$ of the lower bound. This behavior shows that these algorithms are near-optimal when the sampling rate is relatively large. We suspect that the gap between these algorithms and the ML upper bound is due primarily to looseness in our bounding technique.

\subsection{Distortion versus SNR}

The previous section showed that computationally efficient algorithms can be near-optimal when the sampling rate is large. In the context of compressed sensing, a more interesting question is whether or not these same algorithms can be near-optimal when the sampling rate is fixed, and much less than one. In this section, we show that the answer to this question is `yes', provided that the sampling rate is strictly greater than the stability threshold of the algorithm. 

For a given tuple $(D,\rho,p_X)$ and recovery algorithm ALG, we let $\snr^{(\text{ALG})}$ denote the infimum over all $\snr\ge 0$ such that $D$ is achievable, i.e.
\begin{align*}
\snr^{(\text{ALG})} = \inf\{ \snr \ge 0 \, : \, \text{$D$ is achievable} \}. 
\end{align*}
The following result characterizes the low-distortion behavior with respect to the SNR. 

\begin{prop}\label{prop:snr_D}
The low-distortion behavior corresponding to a fixed pair $(\rho,p_X)$ is given by
\begin{align}
\lim_{D \rw 0} \bigg( \frac{P(D;p_X)}{\mathcal{H}(D;\kappa)}\bigg) \, \snr^{(\text{ML-UB})} &= \Big(\frac{2}{3-\sqrt{8}}\Big) \frac{1}{\rho -\kappa}  \label{eq:low_D_snr_ML}
\end{align}
if $\rho > \kappa$,
\begin{align}
\lim_{D \rw 0} \sigma^2_\text{awgn}(D,p_X) \, \snr^{(\text{AMP-MMSE})} &= \frac{1}{\rho - \varrho^{(\text{MMSE})}} \label{eq:low_D_snr_algb}
\end{align}
if $\rho > \varrho^{(\text{AMP-MMSE})}$, and 
\begin{align}
\lim_{D \rw 0} \sigma^2_\text{awgn}(D,p_X) \, \snr^{(\text{ALG})} &= \frac{1}{\rho - \varrho^{(\text{ALG})}} \label{eq:low_D_snr_alg}
\end{align}
if $\rho > \varrho^{(\text{ALG})}$ where \eqref{eq:low_D_snr_alg} holds for the LMMSE, AMP-ST, and MMSE recovery algorithms.
\end{prop}
\begin{IEEEproof}
The limits corresponding to the ML and MMSE recovery algorithms are proved in Appendices~\ref{sec:behavior_ML} and \ref{sec:behavior_MMSE} respectively. The limits corresponding to the  LMMSE and AMP recovery algorithms follow straightforwardly along the same lines as the proof of Proposition~\ref{prop:rho_D}.
\end{IEEEproof}

Proposition~\ref{prop:snr_D} is analogous to Proposition~\ref{prop:rho_D} except that it is valid only if the sampling rate $\rho$ exceeds the stability threshold. The reason that Proposition~\ref{prop:snr_D} does not provide a bound for the MF estimator is that the stability threshold of the MF estimator is infinite, and thus the corresponding limit in \eqref{eq:low_D_snr_alg} is not defined. 

Combining Proposition~\ref{prop:snr_D} with Propositions~\ref{prop:bounded}, and \ref{prop:poly_decay} leads to the following result, which bounds the rate at which $D^{(\text{ALG})}$ decreases as the SNR becomes large. 

\begin{prop}\label{prop:D_snr}
Consider a fixed pair $(\rho,p_X)$, and let ALG denote one of the ML, LMMSE, AMP-MMSE, AMP-ST, MMSE recovery algorithms. 
\begin{enumerate}[(a)]
\item If $p_X \in \mathcal{P}_\text{Bounded}(\kappa,B)$ and $\rho > \varrho^{(\text{ALG})}$, then there exists a constant $C$ such that
\begin{align}
D^{(\text{ALG})} \le  \exp( - C \, \snr)
\end{align}
for all $\snr > 0$. 
\item  If $p_X \in \mathcal{P}_\text{Poly.}(\kappa,L,\tau)$ and $\rho > \varrho^{(\text{ALG})}$, then there exists a constant $C$ such that
\begin{align}
\Big(\frac{1}{D^{(\text{ALG})}}\Big)^{2/L} \log\Big(\frac{1}{D^{(\text{ALG})}}\Big) \le C\, \snr
\end{align}
for all $\snr > 0$. 
\end{enumerate}
\end{prop}

\begin{figure*}[htbp]
\centering
\psfrag{Sampling Rate = 0.0013}[lb]{\small $\quad \rho/\kappa= 13$} 
\psfrag{Sampling Rate = 0.004}[lb]{\small $\quad \rho/\kappa= 40$} 
\psfrag{Sampling Rate = 0.04}[lb]{\small $\; \rho/\kappa= 400$} 
\epsfig{file=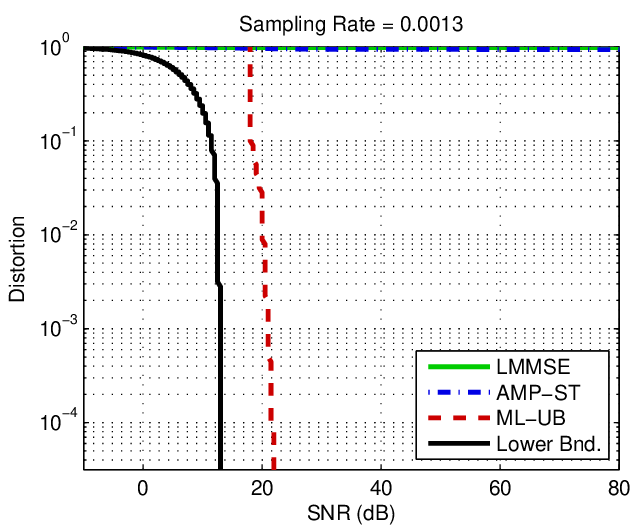, width = .32\textwidth}
\epsfig{file=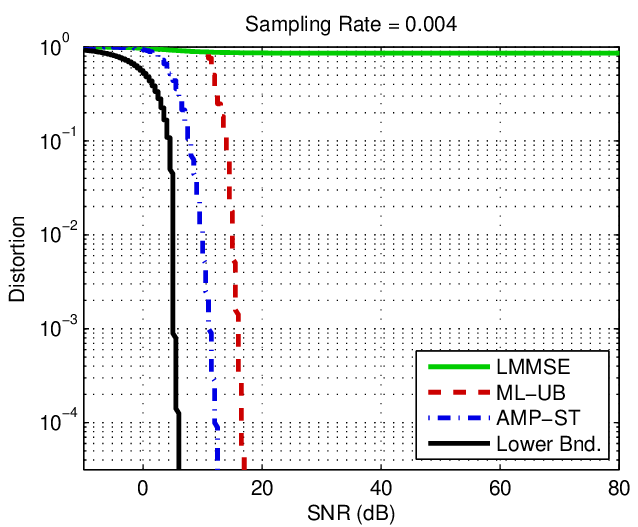, width =.32\textwidth}
\epsfig{file=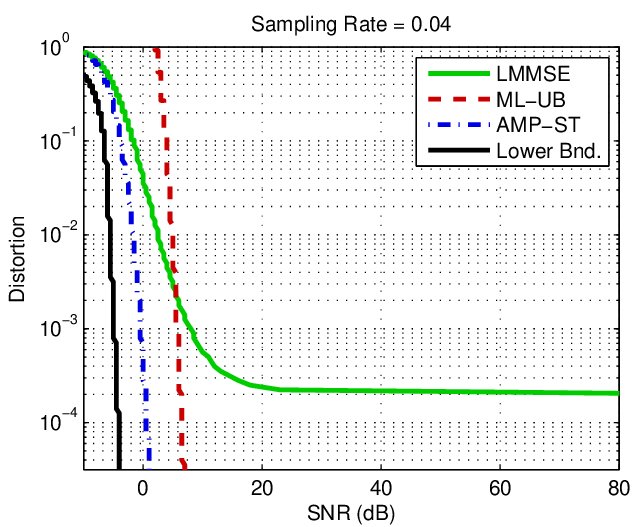, width =.32\textwidth}
\caption{Bounds on the achievable distortion $D$ as a function of the SNR for three different sampling rates $\rho$ when the nonzero entries are lower bounded in squared magnitude by $20\%$ of their average power, but are otherwise arbitrary and the sparsity rate is $\kappa = 10^{-4}$. The MF bounds is comparable to LMMSE bound and is not shown.} 
\label{fig:D_SNR_Bounded}
\end{figure*}

\begin{figure*}[htbp]
\centering
\psfrag{Sampling Rate = 0.0005}[lb]{\small $\quad \rho/\kappa= 5$} 
\psfrag{Sampling Rate = 0.0006}[lb]{\small $\quad \rho/\kappa= 6$} 
\psfrag{Sampling Rate = 0.005}[lb]{\small $\quad \rho/\kappa= 50$} 
\epsfig{file=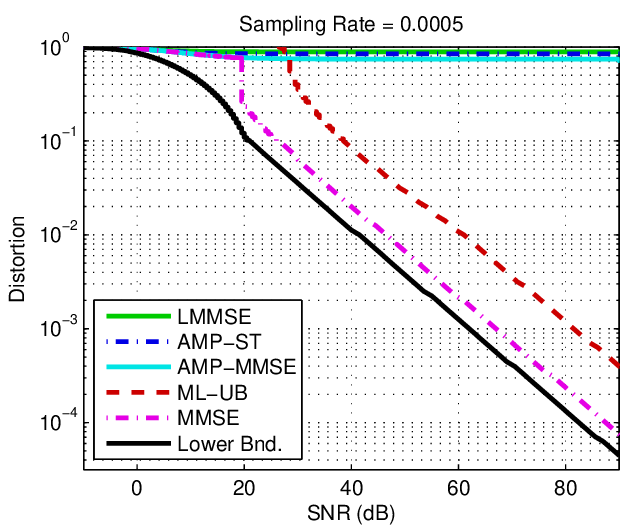, width = .32\textwidth}
\epsfig{file=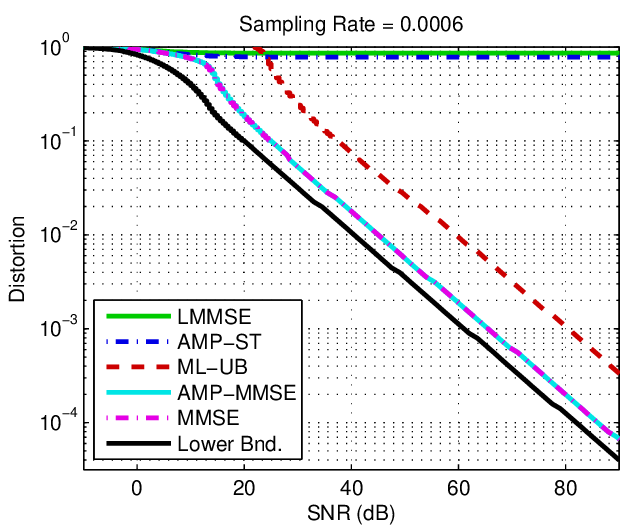, width =.32\textwidth}
\epsfig{file=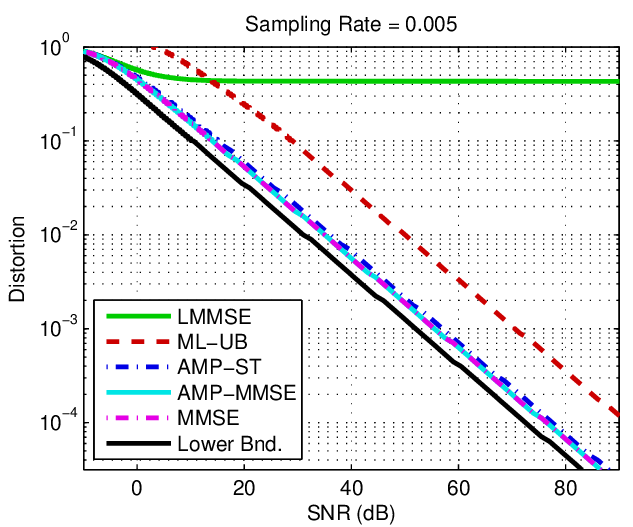, width =.32\textwidth}
\caption{Bounds on the achievable distortion $D$ as a function of the SNR for three different sampling rates $\rho$ when the nonzero entries are i.i.d.~zero-mean Gaussian and the sparsity rate is $\kappa = 10^{-4}$. The MF bound is comparable to LMMSE bound and is not shown.} 
\label{fig:D_SNR_Gaussian}
\end{figure*}

In \cite{RG-Lower-Bounds}, it is shown that the scaling behavior in Proposition~\ref{prop:D_snr} is optimal in the sense that, up to constants, no recovery algorithm can do any better. Consequently, each of the algorithms presented in this paper (except for the MF estimator) is optimal in a scaling sense as the SNR becomes large whenever the sampling rate is strictly greater than the stability threshold. 

The behavior of the achievable distortion $D^{(\text{ALG})}$ as a function of the SNR is illustrated in Fig.~\ref{fig:D_SNR_Bounded} for three different sampling rates $\rho$ and the class of bounded distributions $\mathcal{P}_\text{Bounded}(\kappa,B)$ with $B = \sqrt{0.2/\kappa}$. In the left panel, the sampling rate is greater than $\varrho^{(\text{ML})}$ but less than $\varrho^{(\text{AMP-ST})}$ and $\varrho^{(\text{LMMSE})}$. In accordance with part (a) of Proposition~\ref{prop:D_snr}, the ML distortion decays exponentially rapidly whereas the AMP-ST and LMMSE distortions are bounded away from zero. In the second panel, the sampling rate is greater than $\varrho^{(\text{ML})}$ and $\varrho^{(\text{AMP-ST})}$ but less than $\varrho^{(\text{LMMSE})}$, and hence the AMP-ST distortion also decays exponentially rapidly. In the third panel, $\rho$ is relatively large but still less than $\varrho^{(\text{LMMSE})}$. Thus, even though the LMMSE distortion is less than it was before, it is still bounded away from zero. 

For comparison, the same behavior is illustrated in Fig.~\ref{fig:D_SNR_Gaussian} for a Bernoulli-Gaussian distribution which has decay rate $L = 1$. In accordance with part (b) of Proposition~\ref{prop:D_snr}, the distortion of each algorithm decays polynomially with rate $1/2$ whenever the sampling rate is greater than the stability threshold of the algorithm. It is interesting to note that the relatively small difference in sampling rates between the left and middle panels marks the boundary between the setting where all of the computationally feasible algorithms studied in this paper are highly suboptimal and the setting where the distortion of the computationally feasible AMP-MMSE algorithm, is within a constant factor $\approx 1.75$ of the lower bound.

\section{Extensions}

This section demonstrates two ways in which the bounds given in Section~\ref{sec:main_results} can be extended to a larger class of signal and measurement models. In Section~\ref{sec:relaxed_recovery} we consider recovery when the unknown vector $\bx$ is only approximately sparse and in Section~\ref{sec:rate_sharing} we show how the rate-distortion region can be convexified using ``rate-sharing'' matrices.

\subsection{Approximately Sparse Signal Models}\label{sec:relaxed_recovery}

The problem formulation given in Section~\ref{sec:formulation} assumes that a large fraction of the entries in $\bx$ are exactly equal to zero. More realistically though, it may be the case that many of these entries are only {\em approximately} equal to zero. This may occur, for instance, if a sparse vector is corrupted by a small amount of noise prior to being measured. In these cases, the vector $\bx$ is not, strictly speaking, sparse, but recovery of the locations of the ``significant'' entries is still a meaningful task. 

With these settings in mind, all of the bounds presented in this paper are first proved with respect to a {\em relaxed} sparsity pattern recovery task in which the goal is to recover the locations of the $\lceil \kappa\, n \rceil$ largest entries in $\bx$. To prove achievability for this task, we assume that the weak converge of Assumption S2 holds (specifically the fact that all but a fraction $\kappa$ of the entries in $\bx$ are tending to zero as $n$ becomes large) but do not require the strict sparsity constraint of Assumption S1. 

The relaxed sparsity pattern recovery task is defined as follows. For any vector $\bx$ and sparsity rate $\kappa$, let $\tilde{S}$ be a drawn uniformly at random from all subsets of $[n]$ of size $k=\lceil \kappa \, n \rceil$ obeying
\begin{align}
\min_{ i \in \tilde{S}} |x_i| \ge \max_{ i \in  [n]\backslash \tilde{S}} |x_i|. \label{eq:S_tilde}
\end{align}
The set $\tilde{S}$ corresponds to the $k$ largest entries in $\bx$ and is uniquely defined whenever the $k$'th largest entry of $\bx$ is unique. For any distortion $D$ and recovery algorithm ALG we define the relaxed sparsity pattern recovery error probability
\begin{align}
\tilde{\varepsilon}^{(\text{ALG})}_n(D) = \Pr[ d(\tilde{S},\hat{S}) > D] \label{eq:error_prob_approx}
\end{align}
where the probability is taken with respect to the distribution on $\tilde{S}$, the matrix $\bA$, the noise $\bW$, and any additional randomness in the recovery algorithm. The definition of achievability with respect to the error probability $\tilde{\varepsilon}_n(D)$ is exactly the same as for the error probability $\varepsilon_n(D)$ except that the strict sparsity of Assumption S1 is not required. 

To obtain the results given in Section~\ref{sec:main_results}, we use the following result which shows that under the additional constraint of Assumption S1, achievability of relaxed sparsity pattern recovery implies achievability of the sparsity pattern in the strict sense. 

\begin{lemma}\label{prop:relaxed_to_strict}
Under Assumption S1,
\begin{align}
\lim_{n \rw \infty} \big | d(\tilde{S},\hat{S}) - d(S^*,\hat{S}) \big | = 0.
\end{align}
\end{lemma}
\begin{proof} 
Two applications of the triangle inequality gives
\begin{align*}
 | d(\tilde{S},\hat{S}) - d(S^*,\hat{S}) \big |  \le d(\tilde{S},S^*).
\end{align*}
By the definition of $\tilde{S}$, it follows straightforwardly that $d(\tilde{S},S^*) \rw 0$ for any sequence of vectors obeying Assumption S1. 
\end{proof}

\subsection{Rate-Sharing Matrices}\label{sec:rate_sharing}

All of the bounds presented in Section~\ref{sec:main_results} assume that the measurement matrix $\bA$ has i.i.d. entires (Assumption M4). A natural question then, is whether relaxing this assumption can lead to better performance. Interestingly, the answer to this question can be `yes'. In this section, we show that certain {\em rate-sharing} matrices can achieve points in the sampling rate-distortion region that are impossible using i.i.d.~matrices. 

The concept of rate-sharing is analogous to the idea of time-sharing in communications and can be summarized as follows. By using an appropriately constructed block-diagonal measurement matrix it is possible to separate the recovery problem into two subproblems, each of which is statistically identical to the original problem. By assigning different sampling rates to each of the subproblems and then combining the resulting sparsity pattern estimates, it is possible to achieve an effective sampling rate-distortion pair $(\rho,D)$ that is a {\em linear combination} of the sampling rate-distortion pairs for each of the subproblems.

{\em Construction of a rate-sharing matrix:} For a fixed pair $(\snr,p_X)$ and recovery algorithm ALG, let $(\rho_1,D_1)$ and $(\rho_2,D_2)$ be two achievable sampling rate-distortion pairs. Let $\{\bA_1(n)\}_{n \ge 1}$ and $\{\bA_2(n)\}_{n \ge 1}$ be sequences of measurement matrices obeying Assumptions M1-M3 that achieve these rates. Then, for any $\lambda \in [0,1]$, a sequence of {\em rate-sharing} matrices is given by
\begin{align}
\bA(n) = 
\begin{bmatrix}
\bA_1(\lceil{\lambda \, n \rceil}) & \mathbf{0} \\
\mathbf{0}  &\bA_2(n-\lceil\lambda \, n \rceil)\\
\end{bmatrix}
\bP(n) \label{es:rate_sharing_matrix}
\end{align}
where $ \mathbf{0}$ denotes a matrix of zeros and $\bP(n)$ is a random matrix distributed uniformly over the set of $n \times n$ permutation matrices. 

{\em Recovery using of a rate-sharing matrix:} For a problem of size $n$, the measurements $\bY$ made using the rate-sharing matrix $\bA$ can be expressed as
\begin{align*} 
\begin{bmatrix} \bY_1 \\ \bY_2 \end{bmatrix} = 
\begin{bmatrix}
\bA_1 & \mathbf{0} \\
\mathbf{0}  &\bA_2\\
\end{bmatrix}
\begin{bmatrix} \tilde{\bX}_1 \\ \tilde{\bX}_2 \end{bmatrix}  + \begin{bmatrix}\bW_1 \\ \bW_2 \end{bmatrix} 
\end{align*}
where $\tilde{\bX} = [\tilde{\bX}_1, \tilde{\bX}_2]^T  = \bP \bx$ corresponds to a random permutation of the entries in $\bx$. To recover the sparsity pattern of $\bx$ from these measurements, the recovery algorithm performs the following two steps:
\begin{enumerate}[(1)]
\item Individually estimate the sparsity patterns of $\tilde{\bX}_1$ and $\tilde{\bX}_2$ assuming a sparsity rate of $\kappa$ for each vector. 
\item Use these estimates to produce an estimate $\hat{S}$ of the sparsity pattern of $\bx$. 
\end{enumerate}
 
\begin{prop}[Rate-Sharing]\label{thm:convex}
For a fixed pair $(\snr,p_X)$ and algorithm ALG, let $(\rho_1,D_1)$ and $(\rho_2,D_2)$ be two achievable sampling rate-distortion pairs. Then, for any parameter $\lambda \in [0,1]$, the sampling rate-distortion pair $(\rho,D)$ given by 
\begin{align}
\rho &= \lambda \rho_1 + (1-\lambda) \rho_2\\
D &= \lambda D_1 + (1-\lambda) D_2
\end{align}
is achievable using the rate-sharing strategy outlined above.
\end{prop}

\begin{proof}
Based on the assumptions on $\bA_1(n)$ and $\bA_2(n)$ and the fact that $$ \| \bA(n)\|^2_F = \| \bA_1(\lceil{\lambda \, n \rceil})\|^2_F + \|\bA_2(n-\lceil\lambda \, n \rceil)\|^2_F,$$
it is straightforward to verify that the sequence of rate-sharing matrices $\{\bA(n)\}_{n\ge1}$  defined by \eqref{es:rate_sharing_matrix} satisfies Assumptions M1-M3 with sampling rate $\rho = \lambda \rho_1 + (1-\lambda) \rho_2.$

The next step is to verify that the distortion $D$ is achievable. Since each permutation $\bP(n)$ is independent of the vector $\bx(n)$, the random sequences $\{ \tilde{\bX}_1(n)\}_{n \ge 1}$ and $\{ \tilde{\bX}_2(n)\}_{n \ge 1}$ obey Assumptions S1-S3 with probability one. Since the pairs $(\rho_1,D_1)$ and $(\rho_2,D_2)$ are achievable, it thus follows that the distortions $D_1$ and $D_2$ are achievable for the individual sparsity pattern estimates made in step (1).

Now, for a given problem of size $n$, let $S_1^*$, $\hat{S}_1$, $S_2^*$, $\hat{S}_2$ denote the true and estimated sparsity patterns corresponding to the vectors $\tilde{\bX}_1$ and $\tilde{\bX}_2$, and let $S^*$ and $\hat{S}$ denote the true and estimated sparsity pattern of $\bx$. As a simple exercise, it can be verified that
\begin{align*}
d(S^*,\hat{S})  \le \lambda_n d(S_1^*,\hat{S}_1) + (1-\lambda_n) d(S_2^*, \hat{S}_2)
\end{align*}
where
$$\lambda_n = \frac{\max(|S^*_1|, |\hat{S}_1|)}{\max(|S^*_1|, |\hat{S}_1|) + \max(|S^*_2|, |\hat{S}_2|)}.$$
Using the arguments outlined above, it can then be verified that $\lambda_n \rw \lambda$ almost surely as $n \rw \infty$, and thus we conclude that the distortion $D = \lambda D_1 + (1-\lambda) D_2$ is achievable. 
\end{proof}

As an immediate consequence of Proposition~\ref{thm:convex}, we have the following result.

\begin{cor}
For a fixed pair $(\snr,p_X)$ and algorithm ALG the sampling rate-distortion function is a convex function of the distortion $D$. 
\end{cor}

By comparing the convexified versions of the achievable bounds in this paper with the lower bounds  developed in \cite{RG-Lower-Bounds} for matrices obeying Assumptions M1-M4, it can be verified that there are cases where rate-sharing (even with a potentially suboptimal recovery algorithm) is strictly better using an i.i.d.~matrix and the optimal recovery algorithm. This difference is most dramatic in the high SNR setting when the sampling rate is relatively small compared to the sparsity rate.

\section{Discussion}

In this section, we review the main contributions of the paper and discuss various implications of our analysis.

\subsection{Fundamental Behavior of Sparsity Pattern Recovery}

The achievable bounds derived in this paper, in conjunction with the information-theoretic lower bounds in \cite{RG-Lower-Bounds}, characterize the fundamental limit of what cannot be recovered in presence of noise. A major technical contribution of this paper is the upper bound on the sampling rate-distortion function for the maximum likelihood estimator (Theorem~\ref{thm:ML}). To our knowledge, this is the only achievable bound in the literature that converges to the noiseless limit as the SNR becomes large and correctly characterizes the high SNR behavior. 

Our bounds show that the tradeoffs between the sampling rate $\rho$, the distortion $D$, and the SNR can be characterized in terms of several key properties of the limiting distribution $p_X$. Roughly speaking, the high-SNR behavior is characterized by the differential entropy of the nonzero part of $p_X$ whereas the low-distoriton behavior is characterized by the behavior of the distribution around the point $x=0$. These dependencies can be summarized as follows:
\begin{itemize}
\item {\em High-SNR Behavior:}  If the nonzero part of $p_X$ has a relatively large differential entropy, then the tradeoff between sampling rate and SNR is given by 
$$\rho \approx \kappa + \frac{C}{\log(\snr)}.$$

\item {\em Low-SNR Behavior:} If the nonzero part of $p_X$ has a polynomial decay $L$, then the tradeoff between sampling rate and distortion is given by 
$$\rho \approx C \cdot \Big(\frac{1}{D}\Big)^{1/L}  \log\Big(\frac{1}{D}\Big), $$
and the tradeoff between SNR and distortion is given by
$$\snr \approx C \cdot \Big(\frac{1}{D}\Big)^{1/L}  \log\Big(\frac{1}{D}\Big) \quad \text{if} \quad \rho > \kappa, $$
where the condition $\rho >\kappa$ is necessary if the nonzero part of $p_X$ has a relatively large differential entropy. Note that $L=0$ if the nonzero part of $p_X$ is bounded away from zero. 
\end{itemize}

The high-SNR behavior of the bounds is illustrated in Figures \ref{fig:example1}, \ref{fig:rho_SNR_Gaussian}, and  \ref{fig:rho_D_inf_Gaussian}. The low-distortion behavior is illustrated in Figures \ref{fig:D_rho_Bounded}, \ref{fig:D_rho_Gaussian}, \ref{fig:D_SNR_Bounded}, and   \ref{fig:D_SNR_Gaussian}.

\subsection{Near-Optimality of Efficient Algorithms}

From a practical standpoint, a key question is whether or not a particular computationally efficient algorithm is near-optimal. A positive answer to this question means that more complicated algorithms are unnecessary. A negative answer, however, suggests that it is worth investing resources in the design and implementation of better algorithms. 

In the absence of measurement noise, the tradeoffs for existing algorithms have been relatively well understood. For example, the number of measurements $m$ needed for exact recovery of a $k$-sparse vector of length $n$ can be summarized as follows: linear recovery (i.e.~solving a system of full rank linear equations) requires $m \ge n$; linear programming requires $m \ge C \cdot k \log(n/k)$ for some constant $C$; and an NP-hard exhaustive search requires $m \ge k+1$. 

One of the contributions of this paper, has been to extend the understanding of these tradeoffs to practically motivated settings where, due to measurement noise, only approximate recovery is possible. Interestingly, our results show that there are problem regimes where existing computationally efficient algorithms---such as the linear estimation or approximate message passing---are near-optimal and other regimes where they are highly suboptimal. 

For example, the dependence of the sampling rate on the SNR illustrated in Fig.~\ref{fig:example1} shows that computationally simple algorithms are near-optimal at low SNR, but suggests that increasing sophistication is required as the SNR increases.

Moreover, the bounds illustrated in Fig.~\ref{fig:D_SNR_Gaussian} show that a small change in the sampling rate can make the crucial difference between whether or not approximate message passing achieves the optimal tradeoff between SNR and distortion.

\subsection{Comparison with Replica Predictions}

In this paper, we provide a comparison of rigorous bounds with the nonrigourous analysis of the replica method (Theorem~\ref{thm:MMSE}). Since the predictions of the replica method are sharp, they provide valuable insights about where our bounds are tight and where they can be improved. For example, in Fig.~\ref{fig:rho_SNR_Gaussian} there exists a gap between the upper and lower bounds for SNR in the range of 45 to 60 dB. In this region, the replica prediction suggests that the information-theoretic lower bound from \cite{RG-Lower-Bounds} is essentially correct and that the ML upper bound bound is loose. 

An additional contribution of this comparison, is that the relative tightness of our rigorous bounds provides evidence in support of the unproven replica assumptions. For example, in Fig.~\ref{fig:rho_SNR_Gaussian}, the upper and lower bounds are extremely close and sandwich the replica prediction for all SNR greater than 60 dB. Despite a vast amount of work on this topic, such evidence has been notoriously difficult to come by.

\subsection{Universality of Bounds} 

To characterize the limiting behavior of a sequence of vectors we assume convergence of the empirical distributions (Assumption S2). If the limiting distribution is known, it is possible to use optimized recovery algorithms based on the distribution (e.g.~the AMP-MMSE and MMSE recovery algorithms). In many cases, however, the limiting distribution is unknown. To address these settings, we develop bounds for fixed estimators which hold uniformly over a class of limiting distributions such as the class of all distributions bounded away from zero or the class of all distributions with polynomial decay (see Section~\ref{sec:signal_classes}) . 

Our the results show that, in many cases, prior information about the limiting distribution does not help significantly. For example, in the right panel of Fig.~\ref{fig:example1}, the upper and lower bounds on the sampling rate-distortion function are relatively tight, uniformly over the class of distribution bounded away from zero. Another example is given by Propositions \ref{prop:rho_D} and \ref{prop:snr_D} which show that the low distortion behavior depends entirely on certain properties of the underlying distributions (specifically, the behavior of the distribution around the point $x=0$). 

We remark that an important counterexample occurs if the limiting distribution is supported on a finite subset of the real line (see \cite{RG-Lower-Bounds}). Then, the high-SNR sampling rate-distortion behavior can depend crucially on prior information about the distribution.

\subsection{Role of Model Assumptions}

This paper focusses on the setting where a constant fraction of the entries are nonzero (Assumption S1). In Section~\ref{sec:relaxed_recovery} it is shown that the results in this paper still hold when all but a fraction $\kappa$ of the entries in $\bx$ are tending to zero as $n$ becomes large. In principle, many of the tools developed in the paper could also be used to address settings where the number of nonzero entries grows sub-linearly with the vector length, and hence there is a vanishing fraction of nonzero entries.

Our use of row normalization (Assumption M3) differs from many related works which use column normalization. The reason for our scaling is that, from a sampling perspective, one way to decrease the effect of noise is to take additional samples (all at a fixed per-measurement SNR). If the column norms of the measurement matrix are constrained, then this is not possible since the per-measurement SNR will necessarily decrease as the number of measurements increases. Since it is assumed throughout that the sampling rate $\rho$ is a fixed constant, all results in this paper can be compared to existing works under an appropriate rescaling of the SNR.

The proofs of our upper bounds rely heavily on the assumption that the measurement matrices have i.i.d.~entries (Assumption M4). The proofs of Theorem~\ref{thm:ML} and \ref{thm:amp_general} further assume that these entries are Gaussian (Assumption M5).  The extent to which these assumptions can be relaxed is an important direction for future research. 

In Section~\ref{sec:rate_sharing} it is shown that rate-sharing matrices (which are not i.i.d.) can convexify the sampling rate-distortion region, thus leading to better performance. This result shows that i.i.d.~matrices are strictly suboptimal in some settings.

\appendices


\section{Proof of Theorem \ref{thm:ML}}\label{sec:proof:ML}

Following the discussion in Section~\ref{sec:relaxed_recovery}, we first prove achievability with respect to relaxed sparsity pattern recovery.

\begin{theorem}\label{thm:ML2}
Under Assumptions S2 and M1-M5, the statement of Theorem~\ref{thm:ML} holds with respect to the relaxed sparsity pattern recovery error probability $\tilde{\varepsilon}_n(D)$ defined in \eqref{eq:error_prob_approx}.
\end{theorem}


Combining Theorem~\ref{thm:ML2} with  Lemma~\ref{prop:relaxed_to_strict} and the fact that $\rho^{(\text{ML-UB})}$ is a continuous and monotonically decreasing function of $D$ completes the proof of Theorem \ref{thm:ML}.

The remainder of this appendix is dedicated to the proof of Theorem~\ref{thm:ML2}.
 We begin with a general bound for the non-asymptotic setting in Section~\ref{sec:non-asymp-bound} and then extended this bound to the asymptotic setting in Section~\ref{sec:asymp-bound}.

Throughout the proof we use $\mathcal{S}_k^n$ to denote the set of all subsets of $[n]$ of size $k$, and for any set $\bs \subset [n]$, we use $\bs^c$ to denote its complement $[n] \backslash \bs$. 

\subsection{A Non-Asymptotic Bound}\label{sec:non-asymp-bound}

Consider the measurement model given in \eqref{eq:estimation_problem} where $\bx \in \mathbb{R}^n$ is an arbitrary vector whose true sparsity is unknown. For a given parameter $\kappa$, let $k=\lceil \kappa \, n \rceil$, let $\tilde{S}$ be a drawn uniformly at random from all subsets of $[n]$ of size $k=\lceil \kappa \, n \rceil$ obeying \eqref{eq:S_tilde}, and let $\hat{S}$ be the output of the ML recovery algorithm.

Also, for each integer $b \in \{0,1,\cdots,k\}$, define
\begin{align} 
M(b) =  \min_{ \bs \in \mathcal{S}_k^n : |\bs \backslash \tilde{\bs}| = b} {\frac{1}{n}} \| \bx_{\bs^c}\|^2 \cdot \snr  \label{eq:P_nb}
\end{align}
By the definition of $\tilde{S}$, it is straightforward to see that $M(b)$ is defined uniquely by $\bx$ and $b$ (i.e.~it does not depend on the realization of $\tilde{\bs}$).

The following result gives an upper bound on $\tilde{\varepsilon}_n(D)$ that depends only on the distortion $D$, the dimensions $n,m,k$, and the function $M(b)$.

\begin{lemma}\label{lem:E1}
If the entries of the measurement matrix $\bA$ are i.i.d.~$\mathcal{N}(0,1/n)$, then the following bound holds for any distortion $D \in [0,1]$ and integer $k< m$,
\begin{align}
\tilde{\varepsilon}^{(\text{ML})}_n(D) \le   \sum_{b= \lfloor D k +1 \rfloor }^k  \min( \Psi_1(b),\Psi_2(b))\label{eq:lemE1}
\end{align}
where $ \Psi_1(b)$ and  $\Psi_2(b)$ are given by \eqref{eq:Psi_1} and \eqref{eq:Psi_2} below.  \addtocounter{equation}{2}
\end{lemma}
 
\begin{figure*}[!t]
\normalsize
\setcounter{tempequationcounter}{\value{equation}}
\begin{align}
\setcounter{equation}{94}
\Psi_1(b) & = \min_{\lambda \in [0,1]} \bigg[ 
\Big( 2\log \Big [ {\textstyle \sqrt{\frac{ 1+  \lambda M(b) + (1\!-\!\lambda) M(0) }{1 + M(0)}}-1} \Big] \Big)^ {- { \frac{m-k}{4}}} +  {  {k \choose b}{n -k \choose b}} \Big(\textstyle \frac{1+M(b)}{1+ \lambda M(b) +(1\! -\!\lambda) M(0)} \Big)^ {- { \frac{m-k}{2}}}\bigg]
\label{eq:Psi_1}\\
\Psi_2(b) & = \min_{\substack{\theta,\mu \in (0,1) \\ \epsilon > 0}} {  {k \choose b}{n  - k \choose b}}  \bigg[ \Big(1+{\textstyle \frac{1}{4} (1\!-\!\theta)^2 M(b)}\Big)^{-\frac{m-k}{2}}  +\Big({\textstyle \frac{ \exp(\epsilon)}{2 M(0)} }\Big)^{- \frac{m-k}{2}}  +  \Big({\textstyle \frac{1 + \mu \theta M(b) }{\exp(\epsilon M(0))} }\Big)^{- \frac{m-k}{2}}  (1-\mu^2)^{-\frac{b}{2}}\bigg] \label{eq:Psi_2}
\end{align}
\setcounter{equation}{\value{tempequationcounter}}
\hrulefill
\vspace*{4pt}
\end{figure*}
 
 \begin{proof} For each $S\in \mathcal{S}^n_k$ let $\Pi(\bA_{\bs})$ denote the $m\times m$ orthonormal projection onto the null space of the $m \times k$ matrix $\bA_{\bs}$. If $\bA_{S}$ is full rank (an event that occurs with probability one over the assumed distribution on $\bA$) then this projection is given by
\begin{align}
\Pi(\bA_{\bs}) = I_{m \times m} - \bA_\bs (\bA^T_\bs \bA_\bs)^{-1} \bA^T_\bs.
\end{align}
Since
\begin{align}
\min_{\bu_\bs \in \mathbb{R}^{k}} \| \bA_{\bs} \bu_\bs - \bY\|
& =  \| \Pi(\bA_{\bs})\bY\|,
\end{align}
it can be easily verified that the ML estimate of size $k$ is an element of the set 
\begin{align}
\arg \min_{\bs \in \mathcal{S}^n_k} \| \Pi(\bA_{\bs})\bY\|. \label{eq:ML_argmin_set}
\end{align}

Now, for each integer $b \in \{0,1,2,\cdots,k\}$, define the event
\begin{align}
\mathcal{E}(b) = \Big\{ \min_{\bs \in B_{b}} \| \Pi(\bA_{\bs})\bY\| > \| \Pi(\bA_{\tilde{\bs}})\bY\| \Big\} \label{eq:def_Eb}
\end{align}
where $B_b = \{ \bs \in \mathcal{S}^n_k : |\bs \backslash \tilde{\bs}| = b\}$. In words, the event $\mathcal{E}(b)$ guarantees that the set of minimizers in \eqref{eq:ML_argmin_set} will not contain any set $S$ for which $d(S,\tilde{S}) = b/k$. Thus, a sufficient condition for recovery is given by the event $\bigcap_{\lfloor D k +1 \rfloor}^k \mathcal{E}(b)$, and by the union bound we have
\begin{align}
\tilde{\varepsilon}^{(\text{ML})}_n(D) & \le \sum_{b= \lfloor Dk +1 \rfloor }^k \Pr[ \mathcal{E}^c(b)]
\label{eq:PeUBc}
\end{align}
where $\mathcal{E}^c(b)$ denotes the complement $\mathcal{E}(b)$. 

The bounds $\Pr[\mathcal{E}^c(b)] \le \Psi_1(b)$ and $\Pr[\mathcal{E}^c(b)] \le \Psi_2(b)$ are proved in Sections~\ref{sec:Psi1-proof} and \ref{sec:Psi2-proof} respectively. 
\end{proof}
 
\begin{remark}Lemma~\ref{lem:E1} is general in the sense that it makes no assumptions about the sparsity of $\bx$ or the size of $\tilde{S}$. Therefore, it can be used to address a variety of recovery tasks such as recovering a subset or superset of the true support.
\end{remark}

\begin{remark}
If $M(0) < M(\lfloor D k +1 \rfloor)$, then the upper bound decreases exponentially rapidly with $m$, i.e.~there exists a constant $C$ such that $\tilde{\varepsilon}^{(\text{ML})}_n(D) \le \exp(- C \, m )$.
\end{remark}

\subsection{The Asymptotic Setting}\label{sec:asymp-bound}

We now prove Theorem~\ref{thm:ML2} by applying Lemma~\ref{lem:E1} to a sequence of problems obeying Assumptions S2 and M1-M5. For each problem of size $n$ let $k_n = \lceil \kappa n \rceil$. Beginning with \eqref{eq:lemE1}, we have
\begin{align}
\tilde{\varepsilon}_{n}(D) &\le  \sum_{b= \lceil D k_n \rceil}^{k_n}  \min(\Psi_1(b),\Psi_2(b))\\
&\le n \max_{\lceil D k_n \rceil \le b \le k_n}  \min(\Psi_1(b),\Psi_2(b)) \\
&= n \exp\big(-n\min_{\beta \in [D,1]}\;  \psi_n(\beta)  \big). \label{eq:p_n_upperbound}
\end{align}
where $\psi_n(\beta) = - {\textstyle \frac{1}{n} }\min_{i \in \{1,2\}} \log \Psi_i(\lceil \beta k_n \rceil)$. To study the asymptotic behavior of this bound we need the following lemma. The proof is given in Section~\ref{sec:lem-psi_bound-proof}.

\begin{figure*}[!t]
\normalsize
\setcounter{tempequationcounter}{\value{equation}}
\begin{align}
\setcounter{equation}{105}
\psi_1(\beta) & =  \max_{\lambda \in [0,1]} \min \bigg\{ \frac{\rho - \kappa}{4} \big ( { \textstyle\sqrt{ 1+ \lambda P(\beta) \snr}-1} \big)^2 ,\; \frac{\rho\! -\! \kappa}{2} \Big[ \log \Big(\frac{1+P(\beta)\snr}{1+\lambda P(\beta)\snr} \Big) - \frac{(1\!-\!\lambda) P(\beta) \snr}{1+ P(\beta) \snr} \Big]  - \mathcal{H}(\beta;\kappa) \bigg\}
\label{eq:psi_1}\\
\psi_2(\beta) & = \max_{ \theta,\mu \in [0,1]} \min \bigg\{  \frac{\rho - \kappa}{2} \log\big(1+{\textstyle \frac{1}{4}  P(\beta)\snr} \big), \; \Big[  \frac{\rho - \kappa}{2} \log\big(1+\theta \mu P(\beta) \snr \big) + {\frac{\beta \kappa}{2} \log\big(1-\mu^2 \big) } \Big] \bigg\} - \mathcal{H}(\beta;\kappa)
 \label{eq:psi_2}
\end{align}

\setcounter{equation}{\value{tempequationcounter}}
\hrulefill
\vspace*{4pt}
\end{figure*}

\begin{lemma}\label{lem:psi_bound}
Under Assumption S2, the sequence of functions $\{\psi_n(\beta)\}_{n \ge 1}$ is uniformly asymptotically lower bounded in the following sense
\begin{align}
\limsup_{n \rw \infty} \max_{\beta \in [0,1]} \Big(  \psi(\beta) - \psi_n(\beta) \Big) \le 0. \label{eq:lempsi_bound} 
\end{align}
where $\psi(\beta) = \max_{ i \in \{1,2\}} \psi_i(\beta)$ and $\psi_1(\beta)$ and $\psi_2(\beta)$ are given by \eqref{eq:psi_1} and \eqref{eq:psi_2} below. 
\addtocounter{equation}{2}
\end{lemma}

\begin{remark}
Under the additional constraint of Assumption S3, the bound \eqref{eq:lempsi_bound} holds with respect to the absolute difference  $|\psi(\beta)-\psi_n(\beta)|$. For the proof of Theorem~\ref{thm:ML2}, however, only the lower bound is needed. 
\end{remark}

Returning to \eqref{eq:p_n_upperbound}, we can now write
\begin{align}
\liminf_{n \rw \infty} - {\ \frac{1}{n}} \log \tilde{\varepsilon}_n(D)
& \ge   \liminf_{n \rw \infty}   \min_{\beta \in [D,1]}  \psi_n(\beta) \nonumber \\
& \ge \min_{\beta \in [D,1]} \psi(\beta). \label{eq:p_n_upperbound2}
\end{align}
where the swapping of the limit and the minimum in \eqref{eq:p_n_upperbound2} is justified by Lemma~\ref{lem:psi_bound}. 

With a bit of algebra, it can be verified that
\begin{align}
\kappa + \Lambda(\beta;p_X,\snr) =  \inf\big\{ \rho :  \psi(\beta) > 0\big\},
\end{align}
and thus
\begin{align}
\rho^{(\text{ML-UB})} &= \inf\big\{ \rho : \min_{\beta \in [D,1]} \psi(\beta) > 0\big\}. 
\end{align}
Since $\psi(\beta)$ is a continuous and monotonically increasing function of $\rho$, it follows that the right hand side of \eqref{eq:p_n_upperbound2} is strictly positive for any $\rho > \rho^{(\text{ML})} $. This concludes the proof of Theorem~\ref{thm:ML2}.

\subsection{Proof of the bound $\Psi_1(b)$ in Lemma~\ref{lem:E1}}\label{sec:Psi1-proof}
We begin with a bounding technique used previously by Wainwright \cite{Wainwright_InfoLimits_IEEE09} for the study of exact sparsity pattern recovery. For notational simplicity, we define the random variable 
\begin{align}
Z_{\bs} =  \snr\, \| \Pi(\bA_{\bs}) \bY \|^2 \label{eq:defZ_s}
\end{align}
which corresponds to the distance between the samples $\bY$ and subspace spanned by $\bA_{\bs}$.

For any $t \in \mathbb{R}$, we can write 
\begin{align}
\Pr[\mathcal{E}^c(b)] & = \Pr[\mathcal{E}^c(b) \cap \{Z_{\tilde{\bs}} > t\}] + \Pr[\mathcal{E}^c(b) \cap \{Z_{\tilde{\bs}} \le t\}]  \nonumber \\
& \le  \Pr[Z_{\tilde{\bs}} > t] + \Pr[\mathcal{E}^c(b) \cap \{Z_{\tilde{\bs}} \le t\}]. \label{eq:PEb_first}
\end{align}
Furthermore,
\begin{align}
\Pr[\mathcal{E}^c(b) \cap \{Z_{\tilde{\bs}} \le t\}]  & = \Pr\big[ \{ \min_{\bs \in B_b} Z_{\bz} \le Z_{\tilde{\bs}} \} \cap \{Z_{\tilde{\bs}} \le t\}\big] \nonumber \\
& \le  \Pr\big[ \min_{\bs \in B_b} Z_{\bs} \le t \big] \nonumber  \\
&\le  \sum_{\bs \in B_b} \Pr[ Z_{\bs} \le t],\label{eq:PEb_first_d}
\end{align}
where \eqref{eq:PEb_first_d} follows from the union bound. Plugging \eqref{eq:PEb_first_d} back into \eqref{eq:PEb_first} gives\begin{align}
\Pr[\mathcal{E}^c(b)]  \le \Pr[Z_{\tilde{\bs}} > t] + \sum_{\bs \in B_b} \Pr[ Z_{\bs} \le t] . \label{eq:eaB1c}
\end{align}

Note that $\Pr[Z_{\bs} \le t ]$ depends only on the marginal distributions of the random variable $Z_{\bs}$. In Wainwright's analysis \cite{Wainwright_InfoLimits_IEEE09}, this probability is upper bounded in terms of a noncentral chi-squared random variable whose noncentrality parameter is unknown but bounded. In this proof however, we begin with the exact distribution on $Z_{\bs}$.

\begin{lemma}\label{lem:chisquareML}
For each $\bs \in \mathcal{S}^n_k$, the random variable 
\begin{align*}
\frac{Z_{\bs}}{1+{\textstyle \frac{1}{n}} \|\bx_{\bs^c}\|^2 \snr}
\end{align*}
has a chi-squared distribution with $m-k$ degrees of freedom.
\end{lemma}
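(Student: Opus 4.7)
The plan is to exploit the isotropy of the i.i.d.\ Gaussian sampling matrix, combined with the observation that any columns of $\bA$ indexed by $\bs^* \cap \bs$ are annihilated by the projection $\Pi(\bA_{\bs})$. First I would decompose
\[
\Pi(\bA_{\bs})\bY \;=\; \Pi(\bA_{\bs})\bigl[\bA_{\bs^*\cap \bs}\bx_{\bs^*\cap \bs} + \bA_{\bs^*\backslash \bs}\bx_{\bs^*\backslash \bs} + \bW \bigr] \;=\; \Pi(\bA_{\bs})\bigl[\bA_{\bs^*\backslash \bs}\bx_{\bs^*\backslash \bs} + \bW\bigr],
\]
where the second equality uses the fact that the columns of $\bA_{\bs^*\cap \bs}$ lie in the column space of $\bA_{\bs}$ and are thus in the kernel of $\Pi(\bA_{\bs})$.

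Next, I would condition on $\bA_{\bs}$ (equivalently, on the projection $\Pi(\bA_{\bs})$). Under the normalization $\bE[\tr(\bA\bA^T)] = m$ with i.i.d.\ zero-mean Gaussian entries, each entry of $\bA$ has variance $1/n$, so every column of $\bA_{\bs^*\backslash \bs}$ is $\mathcal{N}(0, \tfrac{1}{n} I_m)$. Because $\bs \cap (\bs^*\backslash \bs) = \emptyset$, the block $\bA_{\bs^*\backslash \bs}$ is independent of $\bA_{\bs}$, and $\bW$ is independent of $\bA$ entirely. Hence conditionally on $\bA_{\bs}$,
\[
\bA_{\bs^*\backslash \bs}\bx_{\bs^*\backslash \bs} + \bW \;\sim\; \mathcal{N}\!\Bigl(0,\; \bigl(1 + \tfrac{1}{n}\|\bx_{\bs^*\backslash \bs}\|^2\bigr) I_m\Bigr),
\]
since the covariance contribution of the signal term is $\tfrac{1}{n}\|\bx_{\bs^*\backslash \bs}\|^2 I_m$ (columns are independent, with per-column contribution $x_j^2 \cdot \tfrac{1}{n}I_m$).

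Now I would apply $\Pi(\bA_{\bs})$. Since $\bA$ is i.i.d.\ Gaussian and $m \ge k$, the matrix $\bA_{\bs}$ has rank $k$ almost surely, so $\Pi(\bA_{\bs})$ is an orthogonal projection of rank $m-k$. Projecting an isotropic Gaussian with variance $\sigma^2 := 1 + \tfrac{1}{n}\|\bx_{\bs^*\backslash \bs}\|^2$ onto a subspace of dimension $m-k$ yields (conditionally on $\bA_{\bs}$) a vector whose squared norm equals $\sigma^2$ times a $\chi^2_{m-k}$ random variable. Dividing by $\sigma^2$ therefore gives exactly a $\chi^2_{m-k}$ distribution. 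Because this conditional distribution does not depend on $\bA_{\bs}$, the marginal distribution is the same, which is the claim.

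The argument is largely a variance bookkeeping exercise, so there is no serious obstacle. The one point that requires care is keeping the normalization consistent: the factor $\tfrac{1}{n}$ appearing in the statement of the lemma is a direct consequence of entries of $\bA$ having variance $1/n$ under the scaling \eqref{eq:mtrx-scale}, and the whole argument would break if one implicitly reverted to the unit-variance convention used in some of the cited references.
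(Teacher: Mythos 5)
Your proof is correct and follows essentially the same route as the paper's: you use the same reduction $\Pi(\bA_{\bs})\bY = \Pi(\bA_{\bs})\bigl[\bA_{\bs^*\backslash \bs}\bx_{\bs^*\backslash \bs} + \bW\bigr]$, the same independence of $\bA_{\bs^*\backslash \bs}\bx_{\bs^*\backslash \bs} + \bW$ from the random projection $\Pi(\bA_{\bs})$, and the same fact that this projection has rank $m-k$ almost surely. The only difference is presentational---you condition on $\bA_{\bs}$ explicitly and track the per-entry variance $1/n$, whereas the paper appeals to rotational invariance and states the resulting variance directly.
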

\begin{proof}
Since $\bA_{\bs} \bx_{\bs} $ lies in the range space of $\bA_\bs$, we can write
\begin{align}
Z_{\bs} 
& = \| \Pi(\bA_{\bs})(\sqrt{\snr} \bA \bx + \bW) \|^2 \nonumber \\
& = \| \Pi(\bA_{\bs})(\sqrt{\snr} \bA_{\bs^c} \bx_{\bs^c} + \bW) \|^2 .\nonumber
\end{align}
The vector $\sqrt{\snr} \bA_{\bs} \bx_{\bs}  + \bW$ is independent of $\Pi(\bA_{\bs})$ and has i.i.d.~Gaussian entries with mean zero and variance $1 + \frac{1}{n}\| \bx_{\bs^c}\|^2\,\snr$. Also, with probability one over the distribution $\bA$, the matrix $\Pi(\bA_{\bs})$ has exactly $n-k$ singular values equal to 1 and $k$ singular values equal to 0. Therefore, the stated result follows immediately from the rotational invariance of the Gaussian distribution. 
\end{proof}

To proceed, let $V$ denote a chi-squared random variable with $m-k$ degrees of freedom and let $t=(1+\bar{M}) (m-k)$ where $\bar{M} = \lambda M(b)  + (1\!-\!\lambda) M(0)$ for some $\lambda \in (0,1)$. Then, by Lemma~\ref{lem:chisquareML}, 
\begin{align}
\Pr[Z_{\tilde{\bs}} > t] &= \Pr\Big [\Big( \frac{1}{m-k}\Big) V > \frac{1+\bar{M}}{1 + M(0)}  \Big ]\label{eq:Psi11_proof} 
\end{align}
and
\begin{align}
\Pr[Z_{\bs} \le t] &= \Pr\Big [\Big( \frac{1}{m-k} \Big)V \le \frac{1+\bar{M}}{1 + \frac{1}{n} \|\bx_{\bs}\|^2 \snr} \Big ] \\
& \le \Pr\Big [ \Big({ \frac{1}{m-k}}\Big) V  \le \frac{1+\bar{M}}{1 + M(b)} \Big ] \label{eq:Psi12_proof}
\end{align}
where \eqref{eq:Psi12_proof} follows from the definition of $M(b)$. 

Both \eqref{eq:Psi11_proof} and \eqref{eq:Psi12_proof} can be upper bounded using the chi-squared large deviations bounds given in Lemma~\ref{lem:tailbounds_intext} below. Combining these bounds with \eqref{eq:eaB1c} and the simple fact that
\begin{align}
|B_b| = { k \choose b}{n-k \choose b} \label{eq:size_Bb},
\end{align}
shows that $\Pr[\mathcal{E}^c(b)] \le \Psi_1(b)$, which completes the proof.

\begin{lemma}\label{lem:tailbounds_intext}
Let $V$ be a chi-squared random variable with $d$ degrees of freedom. For any $x >1$,\begin{align}
\Pr[V\ge d\, x\big] &\le \exp \big(- d\, \textstyle \frac{1}{4}(\sqrt{2x-1}-1)^2 \big)\label{eq:lowertailbound},\\
\Pr[ V\le d/x] &\le \exp\!\big(-d \textstyle \frac{1}{2} [ \log x  + 1/x -1] \big)
\label{eq:uppertailbound}
\end{align}
\end{lemma}
\begin{proof}
The upper bound \eqref{eq:lowertailbound} follows directly from Laurent and Massart \cite[pp. 1325]{LaurentMassart_stat98}. 

For the lower bound \eqref{eq:uppertailbound}, observe that for any $\mu> 0$,
\begin{align}
&\Pr[V\le ({\textstyle \frac{1}{x}})d] \nonumber\\
& = \Pr[ \exp(-\mu V)  \ge \exp(-\mu ({\textstyle \frac{1}{x}}) d)\big] \nonumber \\
 &\le  \bE[ \exp( - \mu  X )\exp( \mu({\textstyle \frac{1}{x}}) d) \label{eq:chiLBb}]\\
 & = \exp(-d [ \textstyle \frac{1}{2} \log(1+2\mu) - \mu({\textstyle \frac{1}{x}})] ).\label{eq:chiLBc}
\end{align}
where \eqref{eq:chiLBb} follows from Markov's inequality and \eqref{eq:chiLBc} follows from the moment generating function of a chi-squared distribution. Letting $\mu = (x-1) /2$ completes the proof. 
\end{proof}

\subsection{Proof of the bound $\Psi_2(b)$ in Lemma~\ref{lem:E1}}\label{sec:Psi2-proof}

This proof uses a new decomposition of the error event to obtain a different upper bound on $\Pr[\mathcal{E}^c(b)]$. In some problem regimes, this bound improves significantly over the bound derived in the previous section. As before, we use the definition of $Z_{\bs}$ given in \eqref{eq:defZ_s}.

To motivate our proof strategy, observe that one weakness of the bound \eqref{eq:eaB1c} is that the threshold $t$ is a fixed constant whereas the event $\mathcal{E}(b)$ depends on the {\em relative} magnitudes of the variables $Z_{\bS}$. 

In this proof, we begin with the union bound as follows
\begin{align}
\Pr[\mathcal{E}^c(b)] & \le \sum_{\bs \in B_b} \Pr[ Z_{\bs} \le Z_{\tilde{\bs}}]. \label{eq:Eb_bound2}
\end{align}
Unlike \eqref{eq:eaB1c}, each probability on the right hand side of \eqref{eq:Eb_bound2} depends on the relative magnitudes of $Z_S$ and $Z_{\tilde{S}}$. In the remainder of this proof, our goal is to derive an upper bound on $\Pr[Z_{\bs}\le Z_{\tilde{\bs}}]$ that exploits the dependence between $Z_{\tilde{\bs}}$ and $Z_{\bs}$. A key step is the following lemma. 

\begin{lemma}\label{lem:TUV}
For any $\bs \in \mathcal{S}_k^n$, define the random variables
\begin{align*}
T_{\bs} &= \sqrt{\snr}\,   \| \Pi(\bA_{\bs})\bA_{\bs^c} \bx_{\bs^c} \|\\
U_\bs &= \frac{ \langle \Pi(\bA_{\bs})\bA_{\bs^c } \bx_{\bs^c}, \bW \rangle}{ \| \Pi(\bA_{\bs})\bA_{\bs^c } \bx_{\bs^c } \|} \\
V_\bs &= \| \Pi(\bA_{\bs})\bW \|.
\end{align*}
The following statements hold:
\begin{enumerate}[(a)]
\item $ \displaystyle Z_{\bs} = T_\bs^2 +2 T_\bs U_\bs + V^2_\bs$
\item $T^2_{\bs}/ (\frac{1}{n}\| \bx_{\bs^c}\|^2\, \snr)$ has a chi-squared distribution with $m-k$ degrees of freedom. 
\item $U_{\bs}$ is independent of $T_{\bs}$ and has a Gaussian distribution with mean zero and variance one.
\item $V_{\bs}$ is independent of $T_{\bs'}$ for any $\bs,\bs' \in \mathcal{S}_k^n$. .
\end{enumerate}
\end{lemma}
\begin{proof} To prove part (a) observe that
\begin{align}
Z_{\bs}
& = \| \Pi(\bA_{\bs})( \sqrt{\snr} \bA_{\bs^c} \bx_{\bs^c} + \bW) \|^2 \label{eq:Z_nullspace}\\
& =  \snr\, \| \Pi(\bA_{\bs})\bA_{\bs^c} \bx_{\bs^c} \|^2 +  \| \Pi(\bA_{\bs})\bW \|^2  \nonumber \\\
& \quad + 2  \sqrt{\snr} \,\langle \Pi(\bA_{\bs})\bA_{\bs^c} \bx_{\bs^c},  \Pi(\bA_{\bs})\bW \rangle \nonumber
\end{align}
Part (b) follows from the proof of  Lemma~\ref{lem:chisquareML}.  Part (c) follows from the fact that the vector $\Pi(\bA_{\bs}) \bA_{\bs^c} \bx_{\bs^c}$ is independent of $\bW$ and is nonzero with probability one. Part (d) follows from the fact that $\Pi(\bA_{\bs})$, $\bA_{\bs^c} \bx_{\bs^c}$, and $\bW$ are mutually independent and $\Pi(\bA_{\bs})$ has rank $m-k$ with probability one. 
\end{proof}

To proceed, fix any $\theta \in (0,1)$ and $\epsilon > 0$ and define the event $\mathcal{A} = \cap_{i=1}^3 \mathcal{A}_i$ where
\begin{align}
\mathcal{A}_1 &= \big\{ T^2_{\bs}  + 2T_{\tilde{\bs}} U_{\bs} \ge \theta T^2_{\bs} \big\}\\
\mathcal{A}_2 &= \big\{  T^2_{\tilde{\bs}}  + 2T_{\tilde{\bs}}U_{\tilde{\bs}} \le \epsilon(m-k) \big\}\\
\mathcal{A}_3 &= \big\{ \theta T_\bs^2 + V_\bs^2 - V_{\tilde{\bs}}^2 > \epsilon (m-k)\}.
\end{align}
Using part (a) of Lemma~\ref{lem:TUV} it can be verified that $ \{Z_{\bs} \le Z_{\tilde{\bs}} \} \cap \mathcal{A} = \{\emptyset\}$, and thus
\begin{align}
\Pr[ Z_{\bs} \le  Z_{\tilde{\bs}}]  &= 
\Pr[ \{Z_{\bs} \le  Z_{\tilde{\bs}}\} \cap \mathcal{A}^c] \nonumber \\
& \le  \sum_{i=1}^3 \Pr[ \mathcal{A}_i^c] \label{eq:PboundA1}
\end{align}
where \eqref{eq:PboundA1} follows from the union bound. In the following three subsections, we prove upper bounds on the probabilites $\Pr[\mathcal{A}_j^c]$, $j \in \{1,2,3\}$. Plugging these bounds back into \eqref{eq:PboundA1} and using the fact that the cardinality of $B_b$ is given by \eqref{eq:size_Bb} completes the proof.

\subsubsection{Upper Bound on $\Pr[\mathcal{A}_1^c]$ } The first error event is relatively straightforward to bound. Observe that
\begin{align}
\Pr[\mathcal{A}_1^c]
& = \Pr[ \textstyle \frac{(1\!-\!\theta)^2}{4} T^2_\bs  +\frac{(1\!-\!\theta)}{2}T_\bs U_\bs < 0] \nonumber \\
& = \Pr[ \exp(- \textstyle \frac{(1\!-\!\theta)^2}{4} T^2_\bs  -\frac{(1\!-\!\theta)}{2}T_\bs U_\bs ) \ge 1] \nonumber\\
& \le \bE[ \exp(- \textstyle \frac{(1\!-\!\theta)^2}{4} T^2_\bs  -\frac{(1\!-\!\theta)}{2}T_\bs U_\bs )] \label{eq:UBAc1_a}\\
& =  \bE[\exp(- \textstyle \frac{(1\!-\!\theta)^2}{8} T^2_\bs  )]\label{eq:UBAc1_b} \\
&= \big(1 - \textstyle \frac{(1\!-\!\theta)^2}{4} \frac{1}{n} \|\bx_{\bs^c}\|^2\,\snr  \big)^{-(m-k)/2} \label{eq:UBAc1_c} \\
&\le \big(1 - \textstyle \frac{(1\!-\!\theta)^2}{4}M(b) \big)^{-(m-k)/2} \label{eq:UBAc1_d}
\end{align}
where: \eqref{eq:UBAc1_a} follows from Markov's inequality; \eqref{eq:UBAc1_b} follows from part (c) of Lemma~\ref{lem:TUV} and the moment generating function of the Gaussian distribution; \eqref{eq:UBAc1_c} follows from part (b) of Lemma~\ref{lem:TUV} and the moment generating function of the chi-squared distribution; and \eqref{eq:UBAc1_d} follows from the definition of $M(b)$.

\subsubsection{Upper Bound on $\Pr[\mathcal{A}_2^c]$} The second error event is similar to the first one, except that the inequality is in the other direction and there is a constant term to deal with. If we let $t = \epsilon M(0) (m-k)$ and $\lambda  = 1/(2 M(0))$, then
\begin{align}
\Pr[\mathcal{A}_2^c] & = \Pr[\lambda(-t + T_{\tilde{\bs}}^2 + 2T_{\tilde{\bs}}U_{\tilde{\bs}}) > 0] \nonumber \\
& = \Pr[ \exp( -\lambda t + \lambda T_{\tilde{\bs}}^2 + 2\lambda T_{\tilde{\bs}}U_{\tilde{\bs}} ) > 1 \nonumber ]\\
& \le \bE[ \exp( - \lambda t + \lambda T_{\tilde{\bs}}^2 + 2 \lambda T_{\tilde{\bs}}U_{\tilde{\bs}} )] \label{eq:UBAc2_a}\\
& = \bE[ \exp(- \lambda t +  (\lambda-2\lambda^2) T_{\tilde{\bs}}^2)] \label{eq:UBAc2_b} \\
& = \exp(- \lambda t) \big(1- 2(\lambda - 2\lambda^2)M(0)\big)^{-\frac{m-k}{2}} \label{eq:UBAc2_c}\\
& = \Big( \frac{\exp(\epsilon)}{2 M(0)} \Big)^{- \frac{m-k}{2}} \label{eq:UBAc2_d}
\end{align}
where: \eqref{eq:UBAc2_a} follows from Markov's inequality; \eqref{eq:UBAc2_b} follows from part (c) of Lemma~\ref{lem:TUV} and the moment generating function of the Gaussian distribution; \eqref{eq:UBAc2_c} follows from part (b) of Lemma~\ref{lem:TUV} and the moment generating function of the chi-squared distribution; and \eqref{eq:UBAc1_d} follows from plugging in the definitions of $t$ and $\lambda$.

\subsubsection{Upper Bound on $\Pr[\mathcal{A}_3^c]$} The third error event requires the most work. Part of the difficulty is that the random variables $V^2_{\bs}$ and $V^2_{\tilde{\bs}}$ are not independent. The following result uses the fact that they are positively correlated to obtain a nontrivial upper bound on the moment generating function of their difference; the proof is given in Section~\ref{sec:lem-V_s-proof}.

\begin{lemma}\label{lem:V_s}
For any $\mu \in (0,1)$,
\begin{align} 
\bE[ \exp( \textstyle \frac{\mu}{2}[ V^2_{\tilde{\bs}} - V^2_{\bs}])] \le (1-\mu^2)^{-b/2} \label{eq:lem:V_s}
\end{align}
\end{lemma}

We remark that the exponent in \eqref{eq:lem:V_s} is proportional to the overlap $b$. If $V^2_{\bs}$ and $V^2_{\tilde{\bs}}$ were independent, then the exponent would be proportional to $k$. This difference in the exponents is the key reason why this bounding technique works well in settings where the previous technique failed. 

With Lemma~\ref{lem:V_s} in hand, we are now ready to upper bound the probability $\Pr[\mathcal{A}_3^c]$. Let $t = \epsilon P_n(0) (m-k)$ and fix any $\mu \in (0,1)$. Then,
\begin{align}
\Pr[\mathcal{A}_3^c] &= \Pr[{\textstyle \frac{\mu}{2}}(t - \theta T_{\bs}^2 - V_{\bs}^2 + V_{\tilde{\bs}}^2) \ge 0] \nonumber \\
&= \Pr[ \exp({\textstyle \frac{\mu}{2}}(t - \theta T_{\bs}^2 - V_{\bs}^2 + V_{\tilde{\bs}}^2)) \ge 1] \nonumber \\
&\le \bE[ \exp({\textstyle \frac{\mu}{2}}(t - \theta T_{\bs}^2 - V_{\bs}^2 + V_{\tilde{\bs}}^2)) ]\label{eq:UBAc3_a}\\
&=  \bE[\exp( {\textstyle \frac{\mu}{2}}[t- \theta T_{\bs}^2])] \bE[\exp( {\textstyle \frac{\mu}{2}} [ V_{\tilde{\bs}}^2 - V_{\bs}^2]) ]\label{eq:UBAc3_b}\\
&=  e^{\frac{\mu}{2} t} \big(1 + \mu \theta \|\bx_{\bs^c}\|^2\,\snr \big)^{- \frac{m-k}{2}}  (1-\mu^2)^{-\frac{b}{2}} \label{eq:UBAc3_c}\\
&\le  e^{\frac{\mu}{2} t} \big(1 + \mu \theta M(b) \big)^{- \frac{m-k}{2}}  (1-\mu^2)^{-\frac{b}{2}} \label{eq:UBAc3_d}\\
&=  \Big(\frac{1 + \mu \theta M(b) }{\exp(\epsilon P_n(0))}\Big)^{- \frac{m-k}{2}}  (1-\mu^2)^{-\frac{b}{2}}\label{eq:UBAc3_e}
\end{align}
where: \eqref{eq:UBAc3_a} follows from Markov's inequality; \eqref{eq:UBAc3_b} follows from part (d) of Lemma~\ref{lem:TUV};  \eqref{eq:UBAc3_c} follows from part (b) of Lemma~\ref{lem:TUV}, the moment generating function of the chi-squared distribution, and Lemma~\ref{lem:V_s};  \eqref{eq:UBAc3_d} follows from the definition of $M(b)$; and  \eqref{eq:UBAc3_e} follows from the definition of $t$.

\subsection{Proof of Lemma~\ref{lem:psi_bound}}\label{sec:lem-psi_bound-proof}
 
To simplify notation we will write $k$ instead of $k_n$ where the dependence on $n$ is implicit. 

Since $\Psi_1(b)$ and $\Psi_2(b)$ are non-increasing functions of $M(b)$, it is sufficient to show that the following limits hold:
\begin{align}
&\lim_{n \rw \infty} \sup_{\beta \in [0,1]} \bigg| \mathcal{H}(\beta,\kappa) - \frac{1}{n} \log {  {k \choose \lceil \beta k \rceil}{n\! -\!k\choose  \lceil \beta k \rceil}}   \bigg| =0 \label{eq:lim_N}\\
& \lim_{n \rw \infty} M(0) = 0 \label{eq:lim_M}\\
&\limsup_{n \rw \infty} \max_{\beta \in[0,1]} \Big( P(\beta)\,\snr - M(\lceil \beta k \rceil) \Big) < 0. \label{eq:Mb_LB}
\end{align}
Then, 
it follows immediately that
\begin{align}x
\limsup_{n \rw \infty} \max_{\beta \in [0,1]} \bigg( \psi_i(\beta) + { \frac{1}{n}} \log( \Psi_i(\lceil \beta k_n \rceil) \bigg) < 0
\end{align}
for $i \in \{1,2\}$, which proves the desired result.

To begin, note that \eqref{eq:lim_N} follows directly from a strong form of Stirling's approximation  \cite[Lemma 17.5.1]{ElementsofIT}.

Next, we consider the term $M(0)$.  For each problem of size $n$, let $\{n_i\}_{i \in [n]}$ be a permutation of $[n]$ such that $x^2_{n_1} \le x^2_{n_2} \le \cdots \le x_{n_n}^2$. Starting with the definition of $\tilde{\bs}$, we can write 
\begin{align}
\snr^{-1}M(0) &= \min_{\bs  \in \mathcal{S}_k^n} \frac{1}{n} \|\bx_{\bs^c}\|^2\\
& = \frac{1}{n} \sum_{i=1}^{n-k} x_{n_i}^2 \\
& = \int_0^\infty\Big(\frac{n-k}{n}-  \frac{1}{n} \sum_{i=1}^{n-k} \one(x_{n_i}^2\le u)\Big ) du\\
& = \int_0^\infty \max\big(1- G_n(u) - \textstyle \frac{k}{n}  ,0\big) du,
\end{align}
where $G_n(u)$ 
denotes the empirical distribution function of $\{x^2_i\}_{i \in [n]}$, 
Thus, for any $\epsilon > 0$, 
\begin{align}
\snr^{-1}M(0) & = \int_0^\epsilon  \max\big(1- G_n(u) - \textstyle \frac{k}{n}  ,0\big) du \nonumber \\
& \quad +\int_\epsilon^\infty  \max\big(1- G_n(u) - \textstyle \frac{k}{n}  ,0\big) du \nonumber \\
& \le \epsilon + \max\big(1- G_n(\epsilon) - \textstyle \frac{k}{n}  ,0\big). \label{eq:M0_bound}
\end{align} 

By the weak convergence of Assumption S2, it follows that the second term on the right hand side of \eqref{eq:M0_bound} converges to zero as $n \rw \infty$. Since epsilon is arbitrary, we conclude that $\lim_{n \rw \infty} M(0) = 0$.
 
We now consider the final term $M(b)$. Since
\begin{align*}
 n\,  \snr^{-1} M(b)& = \|\bx\|^2 - \max_{\bs \in B_b} \|\bx_{\bs}\|^2\\
  & \ge \|\bx\|^2 -  \max_{\bs \in B_b} \big( \| \bx_{ \bs}\|^2+ \| \bx_{\bs^c \cap \tilde{\bs}^c}\|^2 \big) \\
 & = \|\bx\|^2 -  \max_{\bs \in B_b} \| \bx_{ \bs \cap \tilde{\bs}}\|^2 - \|\bx_{\tilde{\bs}^c} \|^2\\
& = \min_{\bs \in \mathcal{S}^n_{k-b}} \|\bx_{\bs^c}\|^2 - n\, \snr^{-1} M(0),
\end{align*}
it is sufficient to show that
\begin{align}
\limsup_{n \rw \infty} \max_{\beta \in[0,1]} \big( P(\beta) - P_n(\beta) \big) < 0. \label{eq:Mb_LB2}
\end{align}
where $P_n(\beta) = \frac{1}{n}\min_{\bs \in \mathcal{S}^n_{k-b}} \|\bx_{\bs^c}\|^2 $.

Following the same steps we used for $M(0)$, we have
\begin{align*}
P_n(\beta)& = \int_0^\infty \max\big(1- G_n(u) - \textstyle \frac{k-\lceil \beta k \rceil}{n}  ,0\big) du.
\end{align*}
Also, by definition
\begin{align}
P(\beta) = \int_0^\infty \max\big(1- G(u) - (1-\beta)\kappa  ,0\big) du
\end{align}
where $G(u) = \Pr[X^2 \le u]$. Thus, for any $\epsilon >0$ we have
\begin{align}
&P(\beta) - P_n(\beta) \nonumber \\
& = \int_0^\epsilon \max\big(1- G(u) - (1-\beta)\kappa  ,0\big) du + \int_0^\infty \varphi_n(u) du \nonumber \\
& \le \epsilon + \int_0^\infty \max( \varphi_n(u) ,0) du \label{eq:Pdiff}
\end{align}
where
\begin{align*}
\varphi_n(u) &=  \Big [  \max\big(1- G(u+\epsilon) - (1-\beta)\kappa  ,0\big)  \\
& \quad - \max\big(1- G_n(u) - \textstyle \frac{k-\lceil \beta k \rceil}{n}  ,0\big) \Big] 
\end{align*}
To deal with the second term in \eqref{eq:Pdiff}, observe that
\begin{align}
\varphi_n(u)  \le \big|(1-\beta) \kappa -\textstyle \frac{k-\lceil \beta k \rceil}{n}\big | +  G_n(u) - G(u+\epsilon).
\end{align}
Thus, by the weak convergence of Assumption S2,
\begin{align}
\lim_{n \rw \infty} \max_{\beta \in [0,1]} \max(\varphi_n(u),0) = 0
\end{align}
for every $u \in \mathbb{R}$. Since $\varphi_n(u)$ is upper bounded by the integrable function $1-G(u+\epsilon)$, it follows from the bounded convergence theorem that the second term in \eqref{eq:Pdiff} is equal to zero. Since $\epsilon$ is arbitrary, this proves \eqref{eq:Mb_LB2} and hence \eqref{eq:Mb_LB}.

\subsection{Proof of Lemma~\ref{lem:V_s}} \label{sec:lem-V_s-proof}
The key to this proof is to consider the eigenvalues of the matrix $M = \Pi(\bA_{\tilde{\bs}}) - \Pi(\bA_\bs)$. Since $M$ is symmetric, it can be expressed as $M = Q \Lambda Q^T$ where $Q$ is an $m \times m $ orthonormal matrix and $\Lambda$ is a real valued diagonal matrix whose diagonal entries obey $\lambda_1 \ge \lambda_2 \ge ... \ge \lambda_m$.  Letting $\tilde{\bW} = Q^T \bW$, we have
\begin{align}
V^2_{\tilde{\bs}} - V_{\bs}^2
& = \bW^T M \bW^T
= \sum_{i=1}^m \lambda_i \tilde{W}_i^2
\end{align}
where $\tilde{W}_1,\tilde{W}_2,\cdots,\tilde{W}_m$ are i.i.d.~Gaussian $\mathcal{N}(0,1)$, and thus
\begin{align}
\bE[ \exp( \textstyle \frac{\mu}{2}[ V^2_{\tilde{\bs}} - V^2_{\bs}])] 
&=  \prod_{i=1}^m \bE[ \exp( \textstyle \frac{\mu}{2} \lambda_i \tilde{W}_i^2)]\\
& = \prod_{i=1}^m (1 - \mu \lambda_i)^{-1/2}
. \label{eq:boundVa} 
\end{align}

To characterize the eigenvalues, we now consider two cases. If $m \ge 2k$, then 
\begin{align*}
\text{rank}(M) &= \text{rank}\big([I - \Pi(\bA_{\bs})] - [I - \Pi(\bA_{\tilde{\bs}})]\big)\\
&\le \text{rank}(I - \Pi(\bA_{\bs})) + \text{rank}(I - \Pi(\bA_{\tilde{\bs}}))\\
& \le 2k,
\end{align*}
and so at least $m-2k$ eigenvalues are equal to zero. It can be shown (see \cite[p. 8]{PaiWei94}), that the remaining $2k$ singular values are given by $\lambda_i = \sin \theta_i$ and $\lambda_{m-i+1} = - \sin\theta_i$ for $i = 1,2,\cdots,k$ where $\pi/2 \ge \theta_1 \ge \theta_2 \ge \cdots \ge \theta_k \ge 0$ are known as the {\em principal angles} between the $k$-dimensional subspaces $ \mathcal{R}(\bA_{\bs})$ and  $\mathcal{R}(\bA_{\tilde{\bs}})$ spanned by the columns of $\bA_{\bs}$ and $\bA_{\tilde{\bs}}$ respectively. Since the number of principal angles that are equal to zero is given by the dimension of the intersection of the two subspaces, it follows that
\begin{align*}
|\{ i : \theta_i = 0\}| &= \text{dim}\big( \mathcal{R}(\bA_{\bs}) \cap \mathcal{R}(\bA_{\tilde{\bs}}) \big)\\
& \ge  \text{dim}\big( \mathcal{R}(\bA_{\bs \cap \tilde{\bs}})  \big)\\
& = k-b
\end{align*}
where the last equality holds with probability one over the distribution on $\bA$. 

Returning to \eqref{eq:boundVa}, we can now write
\begin{align}
\bE[ \exp( \textstyle \frac{\mu}{2}[ V^2_{\tilde{\bs}} - V^2_{\bs}])] 
& = \prod_{i=1}^b (1-   \mu^2 \sin^2 \theta_i)^{-1/2} \label{eq:boundVb}  \\
& \le (1-   \mu^2)^{-b/2} \label{eq:boundVc} 
\end{align}
where 
\eqref{eq:boundVc} follows from the fact that $0\le \sin^2 \theta_i \le 1$.  

For the case $m < 2k$ we use similar arguments. Since
\begin{align*}
\text{rank}(M) 
&\le \text{rank}(\Pi(\bA_{\bs})) + \text{rank}( \Pi(\bA_{\tilde{\bs}}))\\
& \le 2(m-k),
\end{align*}
at least $2k-m$ eigenvalues of $M$ are equal to zero. The remaining $2(m-k)$ eigenvalues are given by $\lambda_i = \sin \theta_i$ and $\lambda_{m-i+1} = - \sin\theta_i$ for $i = 1,2,\cdots,m-k$ where the $\theta_i$ are the principal angles between the $m-k$ dimensional subspaces $\mathcal{N}(\bA_{\bs})$ and  $\mathcal{N}(\bA_{\tilde{\bs}})$ corresponding to the orthogonal complements of $ \mathcal{R}(\bA_{\bs})$ and  $\mathcal{R}(\bA_{\tilde{\bs}})$ respectively. Thus, we have
\begin{align*}
|\{ i : \theta_i = 0\}| &= \text{dim}\big( \mathcal{N}(\bA_{\bs}) \cap \mathcal{N}(\bA_{\tilde{\bs}}) \big)\\
&= m - \text{dim}\big( \mathcal{R}(\bA_{\bs}) + \mathcal{R}(\bA_{\tilde{\bs}}) \big)\\
&\ge \max\big( 0,m - 2k + \text{dim}\big( \mathcal{R}(\bA_{\bs \cap \tilde{\bs}})  \big) \big)\\
& =\max\big( 0, m -k -b)
\end{align*}
where the last equality holds with probability one over the distribution on $\bA$. Therefore, there are at most $b$ nonzero principle angles. Following the same steps used in the previous case, leads again to the upper bound \eqref{eq:boundVc}. This concludes the proof of Lemma~\ref{lem:V_s}.

\section{Proofs of Theorems~\ref{thm:MF}, \ref{thm:LMMSE}, \ref{thm:amp_general}, and \ref{thm:MMSE}}
\label{sec:thm_MF_proof}

Each of these proofs follows a similar outline. First, we establish convergence of the empirical joint distribution on the entries in $\bx$ and the vector estimate $\hat{\bX}$ generated in the first stage recovery (see Fig~\ref{fig:two_stage}). Then, we show that this convergence characterizes the limiting distortion with respect to the relaxed sparsity pattern recovery task described in Section~\ref{sec:relaxed_recovery}. 

In these proofs, we use the superscripts $\overset{prob}{\rw}$ and $\overset{dist}{\rw}$ to denote convergence in probability and distribution, respectively.

\subsection{From Convergence in Distribution to Relaxed Recovery}

For each problem of size $n$, let $\hat{\bX}$ denote the estimate of the unknown vector $\bx$ generated in the first stage of sparsity pattern recovery and let $F_n(x,\hat{x})$ denote the cumulative distribution function (CDF) of the empirical joint distribution on the entries in $(\bx,\hat{\bX})$, i.e.
\begin{align}
F_n(x,\hat{x}) = \frac{1}{n} \sum_{i = 1}^n \one\big(x_i \le x, \hat{X}_i \le \hat{x}\big).
\end{align}
Note that $F_n(x,\hat{x})$ is a random function due to the randomness in $\hat{\bX}$. Also, let $F(x,z)$ denote the CDF of the random pair $(X,Z)$ given in Definition~\ref{def:scalar_model}, i.e.
\begin{align}
F(x,z) = \Pr[ X \le x , Z \le z]
\end{align}

According to standard terminology, $F_n(x,\hat{x})$ converges weakly in probability to the limit $F(x,z)$ if 
\begin{align}
\lim_{n \rw \infty} \Pr\big[ \big|F_n(x,z) - F(x,z)\big| > \epsilon \big] = 0
\end{align}
for any fixed $\epsilon >0$ and $(x,z) \in \mathbb{R}^2$ such that $(x,z)$ are continuity points of $F(x,z)$. Since $Z$ is a continuous random variable, the last constraint simplifies to all $(x,z) \in \mathbb{R}^2$ such that $p_X(\{x\}) = 0$. 

\begin{lemma}\label{prop:dist_to_relaxed}
If $F_n(x,\hat{x})$ convergence weakly in probability to a limit $F(x,z)$ characterized by a distribution $p_X$ and noise power $\sigma^2 > 0$, then the distortion between the sparsity pattern estimate $\hat{S}$ generated in the second stage of recovery and the set $\tilde{S}$ described in Section~\ref{sec:relaxed_recovery} obeys
\begin{align}
\lim_{n \rw \infty} d(\tilde{S},\hat{S}) \overset{prob}{=} D_\text{awgn}(\sigma^2;p_X)
\end{align}
where $D_\text{awgn}(\sigma^2;p_X)$ is given by \eqref{eq:D_sig2}.
\end{lemma}
\begin{proof}
For each problem of size $n$, define
\begin{align*}
 \tilde{U} = \{ i \in [n] : |x_i| > \delta \} \quad \text{and} \quad \hat{U} = \{ i \in [n] : |\hat{X}_i| > t\},
\end{align*} 
where $\delta > 0$ satisfies $\Pr[|X|=\delta] = 0$ and $t$ is the unique solution to $\Pr[|Z| \ge t] =\kappa$. Note that $t$ corresponds to the minimizer of the right hand side of \eqref{eq:D_sig2}.

By the triangle inequality, we have
\begin{align}
\big| d(\tilde{\bs},\hat{\bs}) -d(\tilde{U},\hat{U}) \big|  \le d(\tilde{\bs},\tilde{U}) + d(\hat{U},\hat{\bs})\label{eq:lim_d_triangle}
\end{align}
Furthermore, by the weak convergence of $F_n(x,\hat{x})$ to $F(x,z)$ and the definitions of $\tilde{S}$ and $\hat{S}$, it can be shown that,
\begin{align}
\lim_{n \rw \infty} d(\tilde{\bs},\tilde{U}) &=\Pr[ |X| \le \delta | X \ne 0] \label{eq:lim_d_c}\\
\lim_{n \rw \infty} d(\tilde{U},\hat{U}) &\overset{prob}{=} \Pr[|X| \le \delta \, |\, |Z| > t] \label{eq:lim_d_a}\\
\lim_{n \rw \infty} d(\hat{U},\hat{\bs}) &\overset{prob}{=}  0 \label{eq:lim_d_b},
\end{align}
where \eqref{eq:lim_d_a} and \eqref{eq:lim_d_b} follow from the definition of $t$.

By the assumptions on $p_X$ and the definition of $D_\text{awgn}(\sigma^2;p_X)$, there exists, for any $\epsilon >0$, a $\delta > 0$ such that $\Pr[|X| = \delta] = 0$ and
\begin{align}
 \Pr[ |X| \le \delta | X \ne 0] &\le \epsilon\\
\big| \Pr[|X| \le \delta \,|\, |Z| > t] - D_\text{awgn}(\sigma^2;p_X) \big| & \le \epsilon.
\end{align}
Hence, we have shown that
\begin{align}
\lim_{n \rw \infty} \Pr\big[ \big| d(\tilde{S},\hat{S}) - D_\text{awgn}(\sigma^2;p_X) \big| > \epsilon' \big] = 0
\end{align}
for any $\epsilon' > 0$ which completes the proof. 
\end{proof}

\subsection{Proof of Theorem \ref{thm:MF}}
\label{sec:MF-proof}
In this section, we prove convergence of the empirical CDF $F_n(x,\hat{x})$ corresponding to the MF estimate. Theorem \ref{thm:MF} then follows immediately from Lemmas \ref{prop:relaxed_to_strict} and \ref{prop:dist_to_relaxed}.

The crucial step in this proof is the following result which characterizes the limiting joint distribution of a randomly chosen subset of the entries in $(\bx,\hat{\bX}^{(\text{MF})})$. Due to the simplicity of the MF estimate, we are able to prove this convergence generally for any i.i.d.~distribution on the entries of the measurement matrix $\bA$.

\begin{lemma}\label{lem:MF_decoupling}
Let $L$ be a fixed integer. For each problem of size $n \ge L$, let $\mathcal{L}$ be distributed uniformly over all subsets of $[n]$ of size $L$. Then, under Assumptions S2-S3 and M1-M4, the joint distribution on $\{(x_\ell,\hat{X}^{(\text{MF})}_\ell)\}_{\ell \in \mathcal{L}}$ converges weakly to the joint distribution on $L$ independent copies of the random pair $(X,Z)$ given in Definition~\ref{def:scalar_model} where $\sigma^2$ is given by
\begin{align}
\sigma^2 = \frac{\snr^{-1}+1}{\rho}. \label{eq:sig2_MF}
\end{align}
\end{lemma}
\begin{proof}To gain intuition, observe that the entries in the MF estimate indexed by $\mathcal{L}$ can be decomposed as follows:
\begin{align}
\hat{\bX}^{(\text{MF})}_\mathcal{L} &= \textstyle \big( \frac{n}{m}\big)
\bA_\mathcal{L}^T  \bA_{\mathcal{L}} \bx_{\mathcal{L}} +
 {\textstyle \big( \frac{n}{m}\big)} \bA_\mathcal{L}^T \big( \bA_{\mathcal{L}^c} \bx_{\mathcal{L}^c} + \frac{1}{\sqrt{\snrs}} \bW\big).  \label{eq:MF_decouple_one}
\end{align}
By the law of large numbers, it is straightforward to show that the first term on the right hand side of \eqref{eq:MF_decouple_one} converges in distribution to random vector $\bX$ whose entires are i.i.d.~copies of $X$. Also, by the central limit theorem, it is straightforward to show that the second term converges in distribution to a vector whose entries are i.i.d.~$\mathcal{N}(0,\sigma^2)$. However, since the terms in \eqref{eq:MF_decouple_one} are {\em not} mutually independent, these arguments are, by themselves, insufficient to prove Lemma~\ref{lem:MF_decoupling}.

To proceed, we will introduce an additional term that allows us to decompose $\hat{\bX}_{\mathcal{L}}^{(\text{MF})}$ into independent components. Specifically, for each problem of size $n$, let $\tilde{\bA}$ be an $m \times L$ random matrix whose columns are independent copies of the columns of $\bA$ and define the random vectors
\begin{align}
\bU &= \big[ {\textstyle \big( \frac{n}{m}\big)} \bA_\mathcal{L}^T\big(   \bA_{\mathcal{L}} - \tilde{\bA}\big) - I_{L \times L}\big] \bx_{\mathcal{L}} \\
\bV&=  {\textstyle \big( \frac{n}{m}\big)} \bA_\mathcal{L}^T \big(\tilde{\bA} \bx_\mathcal{L}  + \bA_{\mathcal{L}^c} \bx_{\mathcal{L}^c} + \snr^{-1/2} \bW\big).
\end{align}
Then, we can write
\begin{align}
\hat{\bX}^{(\text{MF})}_\mathcal{L} &=\bx_{\mathcal{L}}  + \bU  + \bV\label{eq:decouple_one}
\end{align}
where the vectors $\bx_{\mathcal{L}}$ and $\bV$ are independent.

From here, the proof is straightforward. If the following limits hold,
\begin{align}
&\lim_{n \rw \infty} \bx_\mathcal{L} \overset{dist}{=} \bX \label{eq:decouple_a} \\
&\lim_{n \rw \infty}  \bU \overset{prob}{=} \mathbf{0}_{L \times 1} \label{eq:decouple_b} \\
&\lim_{n \rw \infty}  \bV \overset{dist}{=} \mathcal{N}(0,\sigma^2 I_{L \times L} ) \label{eq:decouple_c},
\end{align}
then the desired convergence follows immediately from Slutsky's theorem. 

The limit \eqref{eq:decouple_a} follows from Assumption S2, and the fact that $L$ is finite. To prove \eqref{eq:decouple_b}, observe that by Assumptions M1-M4 and the weak law of large numbers, $ \bA_\mathcal{L}^T   \bA_{\mathcal{L}} \rw (m/n) I_{L \times L}$ and $ \bA_\mathcal{L}^T   \tilde{\bA}_{\mathcal{L}} \rw\mathbf{0}_{L \times L} $ in probability as $n \rw \infty$. Combining these facts with \eqref{eq:decouple_a} shows that $\bU$ converges to $\mathbf{0}_{L \times 1}$ in distribution, and thus also in probability.

Finally, to prove \eqref{eq:decouple_c}, observe that $\bV = \sum_{i=1}^m \bV_i$ where
\begin{align}
\bV_i = {\textstyle \big( \frac{n}{m}\big)} (\bA_\mathcal{L}^T)_i \big(\tilde{\bA} \bx_\mathcal{L}  + \bA_{\mathcal{L}^c} \bx_{\mathcal{L}^c} + \snr^{-1/2} \bW\big)_i
\end{align}
and $(\bA_\mathcal{L}^T)_i$ denotes the $i$'th column of the $L \times m$ matrix $\bA_\mathcal{L}^T$. Since the entries in $\bA$, $\tilde{\bA}$, and $\bW$ are mutually independent, it can be verified that the vectors $\{\bV_i\}_{i \in [m]}$ are i.i.d.~with mean zero and covariance
\begin{align}
 \bE[ \bV_i\bV_i^T]
 & =   \Big( \frac{n}{m^2}\Big)  \Big({\frac{1}{n}} \|\bx\|^2 + \snr^{-1}\Big)  I_{L \times L}.
\end{align}
Therefore, the limit \eqref{eq:decouple_c} follows from the multivariate central limit theorem and Assumption S3. 
\end{proof}

With Lemma~\ref{lem:MF_decoupling} in hand, we can now prove convergence of the empirical CDF $F_n(x,\hat{x})$ directly from Chebyshev's inequality. 

\begin{lemma}\label{lem:MF_convegence_P} Under Assumptions S2-S3 and M1-M4, the empircal CDF $F_n(x,\hat{x})$ corresponding to the MF estimate converges weakly in probability to a limit $F(x,z)$ with noise power $\sigma^2$ given by \eqref{eq:sig2_MF}. 
\end{lemma}

\begin{proof}
Beginning with Chebyshev's inequality, we have
\begin{align}
&\Pr\big[ \big|F_n(x,\hat{x}) - F(x,\hat{x})\big| > \epsilon \big] \nonumber \\
& \le \epsilon^{-2}\, \bE\big[\big|F_n(x,\hat{x}) - F(x,\hat{x})\big|^2 \big] \nonumber \\
& =  \epsilon^{-2} \,\big| \bE\big[ F_n^2(x,\hat{x})] - F^2(x,\hat{x})\big|   \nonumber \\
& \quad -  \epsilon^{-2} 2 \big| \bE[F_n(x,\hat{x})] - F(x,\hat{x}) \big| \label{eq:mc_Fn2F}
\end{align}
for any $\epsilon > 0$.
By the linearity of expectation, we can write
\begin{align}
\bE[ F_n(x,\hat{x})] 
& =   \Pr[ x_{\ell_1} \le x, \hat{X}_{\ell_1} \le \hat{x}]\\
\bE[ F^2_n(x,\hat{x})]  &=  \textstyle \frac{n-1}{n}\Pr[ x_{\ell_1} \le x, \hat{X}_{\ell_1} \le \hat{x}, x_{\ell_2} \le x, \hat{X}_{\ell_2} \le \hat{x}]\nonumber \\
&  \quad + \textstyle \frac{1}{n} \Pr[ x_{\ell_1} \le x, \hat{X}_{\ell_1} \le \hat{x}] 
\end{align}
where $\ell_1$ and $\ell_2$ are drawn uniformly at random without replacement from $[n]$. Hence, by Lemma~\ref{lem:MF_decoupling}, it follows that
\begin{align*}
\lim_{n \rw \infty} \bE[F_n(x,\hat{x})] = F(x,\hat{x})\\
\lim_{n \rw \infty} \bE[F^2_n(x,\hat{x})] = F^2(x,\hat{x}).
\end{align*}
Therefore, both terms on the right hand side of \eqref{eq:mc_Fn2F} converge to zero as $n \rw \infty$, thus completing the proof.
\end{proof}

\subsection{Proof of Theorem~\ref{thm:LMMSE}}
\label{sec:LMMSE-proof}

For this proof, we use the well known fact (see e.g.~\cite{GraSch01}) that matrix inversion can be performed using iterative methods. Specifically, for a fixed tuple $(\by,A,\snr)$, let $\gamma$ be the unique positive solution to quadratic equation 
\begin{align}
\snr =  \gamma \Big( \frac{m}{n} - \frac{1}{1+\gamma}\Big),\label{eq:gamma_n}
\end{align}
and consider the AMP algorithm with $\eta(z,\sigma^2) = z/(1+\gamma)$. If the sequences $\{\bx^t\}_{t\ge 1}$ and $\{\bu^t\}_{t \ge 1}$ converge to a fixed point $(\bx^\infty,\bu^\infty)$, then it can be verified by checking the update equation \eqref{eq:amp_iteration_a} and \eqref{eq:amp_iteration_b} that $\bx^\infty = \bx^{(\text{LMMSE})}$, $A^T \bu^\infty = \gamma \bx^{(\text{LMMSE})}$, and thus
\begin{align}
\bx^{(\text{AMP})} = (1+\gamma)  \bx^{(\text{LMMSE})}.
\end{align}
Therefore, the LMMSE estimate can be computed using the appropriate linear version of AMP, provided that the AMP algorithm converges. 

We now use the analysis of Bayati and Montanari to characterize the limiting behavior of the AMP estimate. For each problem of size $n$ let $\hat{\bX}^{(\text{AMP})}$ denote the output of the AMP algorithm corresponding to the function $\eta(z,\sigma^2) = z/(1+\gamma_n)$ where $\gamma_n$ is the unique positive solution to \eqref{eq:gamma_n}. Then, under Assumptions S2-S3 and M1-M5, it follows from part (b) of \cite[Lemma~1]{BayMon10a} that the empirical CDF corresponding to $\hat{\bX}^{(\text{AMP})}$ converges weakly almost surely to a limit $F(x,z)$ with a noise power  $\sigma^2_\infty$ that is the unique solution to the quadratic equation
\begin{align}
\rho = \frac{1}{\sigma^2_\infty\,\snr} + \frac{1}{1+\sigma^2_\infty}. 
\end{align}
Since the LMMSE estimate is proportional to the AMP estimate, this result, along with Lemmas \ref{prop:relaxed_to_strict} and \ref{prop:dist_to_relaxed}, completes the proof of Theorem~\ref{thm:LMMSE}.

\subsection{Proof of Theorem \ref{thm:amp_general}}
\label{sec:amp_general-proof}

To begin, consider a modified version of the AMP algorithm in which the sequence of noise power estimates $\{\hat{\sigma}^2_t\}_{t \ge 1}$ is replaced with the sequence of noise powers $\{\sigma^2_t\}_{t \ge 1}$ defined by the state evolution recursion \eqref{eq:state_evolution}. (Note that this modified algorithm depends explicitly on the distribution $p_X$.) For each problem of size $n$, let 
\begin{align}
\hat{\bX}^t = \bA^T \bU^t + \bX^t
\end{align}
denote the modified AMP estimate at iteration $t$. Then, under Assumptions S2-S3 and M1-M5, it follows from part (b) of \cite[Lemma~1]{BayMon10a} that the empirical CDF corresponding to $\hat{\bX}^t$ converges weakly almost surely to a limit $F(x,z)$ with noise power  $\sigma^2_t$. 

Moreover, by part (c) of \cite[Lemma~1]{BayMon10a} it can be shown that, under the same assumptions, the residuals $\bU^t$ corresponding to the modified AMP algorithm obey
\begin{align}
\lim_{n \rw \infty} \textstyle \frac{1}{n} \| \bU^t(n)\|^2 = \sigma^2_t
\end{align}
almost surely. Thus, by the continuity of $\eta(z,\sigma^2)$ with respect to $\sigma^2$, it follows that the AMP algorithm using the empirical estimates $\hat{\sigma}^2_t$ has the same limiting behavior as the modified AMP algorithm. 

By the above arguments, the empirical CDF $F_n(x,\hat{x})$ corresponding to the AMP estimate \eqref{eq:x_hat_amp} converges weakly almost surely, and hence also in probability, to a limit $F(x,z)$ with noise power $\sigma^2_\infty$ given in \eqref{eq:sig2_infty}. Combining this result with Lemmas \ref{prop:relaxed_to_strict} and \ref{prop:dist_to_relaxed} completes the proof of Theorem \ref{thm:amp_general}.

\subsection{Proof of Theorem~\ref{thm:MMSE}}
\label{sec:mmse-proof}

This proof follows along the same lines as the proof of Theorem~\ref{thm:MF}. The key step is the following result which is analogous to Lemma~\ref{lem:MF_decoupling} except that it also requires the replica analysis assumptions. This result is stated as Claim~3 in \cite{GuoBarSha09}, and its proof follows directly from the analysis in \cite[Section~IV-B]{GuoVer05}.

\begin{lemma}\label{lem:MMSE_decoupling} 
Assume that the replica analysis assumptions hold.  
Let $L$ be a fixed integer. For each problem of size $n \ge L$, let $\mathcal{L}$ be distributed uniformly over all subsets of $[n]$ of size $L$. Then, under Assumptions S2-S3 and M1-M4, the joint distribution on $\{(x_\ell,\hat{X}^{(\text{MMSE})}_\ell)\}_{\ell \in \mathcal{L}}$ converges weakly to the joint distribution on $L$ independent copies of the random pair $(X,Z)$ given in Definition~\ref{def:scalar_model} where $\sigma^2$ is given by the noise power $\tau^*$ defined in \eqref{eq:F_tau}. 
\end{lemma}

From here, the rest of the proofs follows immediately from Chebyshev's inequality (see Lemma~\ref{lem:MF_convegence_P}).

\section{Scaling Behavior}

This appendix provides additional analysis of the sampling rate-distortion bounds presented in Section~\ref{sec:main_results}.

\subsection{Behavior of the ML Upper Bound}\label{sec:behavior_ML}

This section studies the scaling behavior of the upper bound $\rho^{(\text{ML-UB})}$ given in Theorem~\ref{thm:ML}. For notational simplicity, we will use the notation $\Lambda(D)$, $P(D)$ and $\mathcal{H}(D)$ where the dependence on $\snr$ and $p_X$ is implicit. Recall that the upper bound is given by
 $$\rho^{(\text{ML-UB})} =  \kappa + \max_{\tilde{D} \in [D,1]} \Lambda(D).$$ 
 
We first consider the behavior as $D \rw 0$. Note that the function $\Lambda(D)$ is finite for all $D > 0$ but grows without bound as $D \rw 0$, and hence
\begin{align}
\lim_{D \rw 0} \frac{P(D)}{\mathcal{H}(D)} \max_{\tilde{D} \in [D,1]} \Lambda(\tilde{D})
&=\lim_{D \rw 0}  \frac{P(D)}{\mathcal{H}(D)} \Lambda(D).
\end{align}
Starting with the definition of $\Lambda(D)$ given in \eqref{eq:C_beta}, it is straightforward to show that
\begin{align}
\lim_{D \rw 0}  \frac{P(D)}{\mathcal{H}(D)} \Lambda(D) & = \frac{2}{\snr} \lim_{D \rw 0} \lambda(D) 
\end{align}
where
\begin{align}
&\lambda(D)  =  \min_{\theta,\mu \in (0,1)} \max\Big\{ \frac{4}{(1\!-\!\theta)^2}, \frac{1}{\mu \theta} - \frac{ D\kappa \log(1\!-\!\mu^2)}{2\mu\theta \mathcal{H}(D)}  \Big\} \label{eq:lambda_D}. 
\end{align}
Using the fact that $D/\mathcal{H}(D)\rw 0$ as $D \rw 0$ gives
\begin{align*}
\lim_{D \rw 0} \lambda(D) &= \min_{\theta \in (0,1)} \max\Big\{\frac{4}{(1-\theta)^2}, \frac{1}{\theta}\Big\}= \frac{1}{3 - \sqrt{8}},
\end{align*}
and putting everything together gives
\begin{align*}
\lim_{D \rw 0} \frac{P(D)}{\mathcal{H}(D)} \left[\rho^{(\text{ML-UB})} - \kappa \right] = \left(\frac{2}{3 - \sqrt{8}}\right) \frac{1}{\snr}.
\end{align*}

We next consider the behavior as a function of the SNR. For any $D> 0$ it is easy to verify that $\Lambda(D) \rw 0$ as $\snr \rw \infty$ and hence the infinite SNR limit is given by
\begin{align}
\lim_{\snrs \rw \infty} \rho^{(\text{ML-UB})}  = \kappa.
\end{align}
To characterize the rate at which the upper bound approaches this limit, let $D>0$ be fixed and observe that
\begin{align}
&\lim_{\snrs \rw \infty} \, \log (\snr) \big[ \rho^{(\text{ML-UB})} - \kappa\big] \nonumber\\ 
& = \lim_{\snrs \rw \infty} \, \log (\snr ) \max_{\tilde{D} \in [D,1]} \Lambda(\tilde{D}) \nonumber \\
& = \max_{\tilde{D} \in [D,1]} 2 \mathcal{H}(\tilde{D})\label{eq:ML_high_SNR_b} \\
& = 2H_b(\kappa)  \label{eq:rho_ML_high_SNR}
\end{align}
where \eqref{eq:ML_high_SNR_b} follows from the fact that $P(D;p_X)$ is strictly positive for any $D > 0$. 

Alternatively, with a bit of work it can be shown that the low SNR behavior is given by
\begin{align}
&\lim_{\snrs \rw 0} \, \snr\, \big[ \rho^{(\text{ML-UB})} - \kappa\big] \nonumber \\
& = \lim_{\snrs \rw 0} \, \snr\, \max_{\tilde{D} \in [D,1]} \Lambda(\tilde{D}) \nonumber \\
& = \max_{\tilde{D} \in [D,1]}  \frac{\mathcal{H}(\tilde{D})}{P(\tilde{D})}  2 \lambda(\tilde{D}), \label{eq:rho_ML_low_SNR}
\end{align}
where $\lambda(D)$ is given by \eqref{eq:lambda_D}. Note that this limit is strictly positive for any $D>0$. 

Combining \eqref{eq:rho_ML_high_SNR} and \eqref{eq:rho_ML_low_SNR} shows that there exists, for each fixed pair $(D,p_X)$, a constant $C$ such that 
\begin{align}
\rho^{(\text{ML-UB})} \le \kappa + \frac{C}{\log(1+\snr)}
\end{align}
for all $\snr$. 

Lastly, we consider the tradeoff between the distortion $D$ and the SNR. For a given tuple $(\rho,\snr,p_X)$, let $D^{(\text{ML-UB})}$ denote the infimum over all distortions $D\ge 0$ such that $\rho^{(\text{ML-UB})} \le \rho$. If $\rho > \kappa$, then the analysis given above shows that $D^{(\text{ML-UB})} \rw 0$ as $\snr \rw \infty$. Since $\Lambda(D)$ is finite for all $D> 0$ but grows without bound as $D \rw 0$, this means that the following limit must be satisfied:
\begin{align}
\lim_{\snrs \rw \infty} \Lambda(D^{(\text{ML-UB})}) = \rho - \kappa \label{eq:low_D_snr_ML_a}.
\end{align}
Starting with the definition of $\Lambda(D)$ given in \eqref{eq:C_beta}, it is straightforward to show that \eqref{eq:low_D_snr_ML_a} is satisfied if and only if
\begin{align}
\lim_{\snrs \rw \infty} \snr \, \frac{P(D^{(\text{ML-UB})})}{\mathcal{H}(D^{(\text{ML-UB})})} = \left( \frac{2}{3-\sqrt{8}}\right) \frac{1}{\rho -\kappa}.
\end{align}

\subsection{Behavior of the MMSE Noise Power}\label{sec:behavior_MMSE}
This section studies the behavior of the effective noise power $\tau^*$ defined in Theorem~\ref{thm:MMSE}. Since there is a one-to-one correspondence between $\tau^*$ and the resulting distortion $D$, the results in this section immediately extend to the behavior of the distortion. 

Starting with the definition in \eqref{eq:F_tau}, this noise power can be expressed as $$\tau^* = \arg \min_{\tau > 0 } \Gamma(\tau)$$ where
\begin{align*}
\Gamma(\tau) = \rho \log(\tau) + \frac{1}{\tau\, \snr} + 2I(X;X+\sqrt{\tau} W).
\end{align*}
For any fixed tuple $(\rho,\snr,p_X)$, the function $\Gamma(\tau)$ grows without bound as either $\tau \rw 0$ or $\tau \rw \infty$. Therefore, the minimizer $\tau^*$ must be a solution to $\Gamma'(\tau^*,\snr) = 0$ where $\Gamma'(\cdot,\cdot)$ denotes the derivative of $\Gamma(\cdot, \cdot)$ with respect to the first argument. Using the following result of Guo et al. \cite{GSV05}:
\begin{align} 
\frac{d}{d\gamma} 2 I(X; X +\sqrt{1/\gamma} W) =  \mmse(1/\gamma;p_X), \label{eq:mmse_info}
\end{align}
it is straightforward to show that the condition $\Gamma'(\tau^*,\snr) = 0$ is equivalent to
\begin{align}
 \rho\, \tau^* =  \frac{1}{\snr} + \mmse(\tau^*;p_X).\label{eq:rho_tau_star_cond}
\end{align}
Note that \eqref{eq:rho_tau_star_cond} may have additional fixed point solutions (other than $\tau^*$) corresponding to local minima or maxima of the function $\Gamma(\tau)$.

We first consider the behavior as $\rho \rw \infty$. By the optimality of the MMSE estimate (with respect to mean square error) the noise power $\tau^*$ is a non-increasing function $\rho$, and thus $\mmse(\tau^*,p_X)$ is a non-increasing function of $\rho$. Combining this fact with  \eqref{eq:rho_tau_star_cond} shows that $\tau^* \rw 0$ as $\rho \rw \infty$. Since $\mmse(\tau,p_X) \rw 0$ as $\tau \rw 0$, we obtain the limit
\begin{align}
\lim_{\rho \rw \infty} \rho\, \tau^* = \frac{1}{\snr}.
\end{align}

We next consider the behavior as $\snr \rw \infty$. If $\tau$ is a fixed constant, independent of $\snr$, then $\Gamma(\tau)$ converges to a finite constant. 
However, if $\tau = \tau(\snr)$ scales with $\snr$ in such a way that $\tau(\snr) \rw 0
$ then
\begin{align}
&\lim_{\snrs \rw \infty} \frac{1}{\log \tau(\snr)} \left[ \Gamma(\tau(\snr))  - \frac{1}{\tau(\snr)\, \snr } \right] \nonumber \\
&=\lim_{\snrs \rw \infty} \frac{1}{\log \tau(\snr)} \left[ \rho \log \tau(\snr) + I\left(X;X+\sqrt{\tfrac{1}{\snrs}}W\right)  \right] \nonumber \\
 & = \rho - \lim_{\epsilon \rw 0}  \frac{2 I(X;X+\sqrt{\epsilon}\, W)}{\log(1/\epsilon)} \nonumber \\
 & = \rho - \lim_{\epsilon \rw 0} \frac{\mmse(\epsilon;p_X)}{\epsilon} \label{eq:lhoptial4mmse}
\end{align}
where \eqref{eq:lhoptial4mmse} follows form L'Hopital's rule and \eqref{eq:mmse_info}.

We consider two cases. If the right hand side of \eqref{eq:lhoptial4mmse} is strictly positive, then there exists a scaling $\tau(\snr)$ such that $\Gamma(\tau(\snr))$ decreases without bound. Since $\Gamma(\tau)$ is finite for fixed $\tau$ and grows without bound as $\tau \rw \infty$, this means that $\tau^* \rw 0$ as $\snr  \rw \infty$. Conversely, if the right hand side of \eqref{eq:lhoptial4mmse} is strictly negative, then $\Gamma(\tau(\snr))$ increases without bound for any scaling where $\tau(\snr) \rw 0$. This means that $\tau^*$ is bounded away from zero for all $\snr$. 

Combining these cases, we can conclude that the stability threshold $\varrho^{(\text{MMSE})}$ of the MMSE estimator is given by 
\begin{align}
\varrho^{(\text{MMSE})} = \lim_{\epsilon \rw 0} \frac{\mmse(\epsilon;p_X)}{\epsilon}.
\end{align}

To characterize the rate at which $\tau^*$ decreases as $\snr \rw \infty$, we rearrange \eqref{eq:rho_tau_star_cond} to obtain
\begin{align}
\snr \, \tau^*= \left[ \rho - \frac{\mmse(\tau^*;p_X)}{\tau^*} \right]^{-1}. \label{eq:snr_tau_star_cond}
\end{align}
If $\rho > \varrho^{(\text{MMSE})}$, then $\tau^* \rw 0$ as $\snr \rw \infty$. Hence, by \eqref{eq:snr_tau_star_cond} and the definition of $\varrho^{(\text{MMSE})}$, we obtain the limit 
\begin{align}
\lim_{\snrs \rw \infty}  \snr\, \tau^* = \frac{1}{\rho - \varrho^{(\text{MMSE})}}.
\end{align}

\subsection{Proof of Proposition~\ref{prop:bounded}}\label{sec:prop:bounded-proof}

Using the bound $H_b(p) \le p\log(1/p) + p$ we obtain 
\begin{align}
\mathcal{H}(D;\kappa) \le 2 \kappa D [ \log(1/D) + 1 + \log(\textstyle \frac{1-\kappa}{\kappa})] \label{eq:N_upper_bound}.
\end{align}
Using the definition of $P(D;p_X)$ and the fact that $X$ is lower bounded, we obtain
\begin{align}
P(D;p_X) & = \int_0^\infty \left( \Pr[X^2 \ge u] - (1-D)\kappa)\right)_+ du \nonumber \\
& \ge \int_0^\infty \left( \kappa \one(u < B^2) - (1-D)\kappa)\right)_+ du \nonumber \\
& = \kappa D B^2. \label{eq:P_D_bounded_a}
\end{align}
Combining \eqref{eq:N_upper_bound} and \eqref{eq:P_D_bounded_a} completes the proof of \eqref{eq:P_D_bounded}. 

The bound \eqref{eq:sig2_D_bounded} follows immediately from the upper bound
\begin{align}
& D_\text{awgn}(\sigma^2;p_X) \nonumber\\
&= \min_{t \ge 0} \textstyle \max\Big( \Pr[ |X + \sigma W| \le t], \frac{1-\kappa}{\kappa} \Pr[ |\sigma W| > t]\big) \nonumber \\
&\le \min_{t \ge 0} \textstyle \max\Big( \Pr[ |B + \sigma W| \le t], \frac{1-\kappa}{\kappa} \Pr[ |\sigma W| > t]\big) \nonumber \\
& \le \textstyle \max\Big( \Pr[ |B + \sigma W| \le \frac{B}{2}], \frac{1-\kappa}{\kappa} \Pr[ |\sigma W| > \frac{B}{2}]\big) \nonumber \\
&\textstyle \le \big(\frac{1-\kappa}{\kappa}\big)   \Pr[ |\sigma W| > \frac{B}{2}] \label{eq:D_awgn_bounded_b}\\
& \le \textstyle \big(\frac{1-\kappa}{\kappa}\big) \exp\big(-\frac{B^2}{8\, \sigma^2}\big)\label{eq:D_awgn_bounded_c}
\end{align}
where \eqref{eq:D_awgn_bounded_b} follows from the triangle inequality and \eqref{eq:D_awgn_bounded_c} follows from the well known upper bound (see e.g. \cite{tse_WirelessComm05}) $\Pr[|W| > t] \le \exp(-t^2/2)$.

\subsection{Proof of Proposition~\ref{prop:poly_decay}}
\label{sec:prop:poly_decay-proof}
For this proof, it is convenient to define the quantile function 
\begin{align*}
\xi(D)
& =  \inf\{ t \ge 0 : \Pr[ |X|^2 \le  t | X\ne 0] \ge D \},
\end{align*}
and note that
\begin{align}
\lim_{D \rw 0} \frac{\xi(D)}{D^{2/L}} =  \tau^{-2/L}. \label{eq:xi_lim}
\end{align}

We first consider \eqref{eq:P_D_poly_decay}. Using the bounds $ p \log(1/p) \le H_b(p) \le p\log(1/p) + p$, we obtain
\begin{align}
\lim_{D \rw 0} \frac{\mathcal{H}(D;\kappa)}{D\log(1/D)} = 2 \kappa. \label{eq:H_lim}
\end{align}
Next, starting from the definition of $P(D;p_X)$ and using a change of variables leads to the expression
\begin{align*}
P(D;p_X) 
& = \kappa D \int_0^1  \xi(\beta D) d\beta.
\end{align*}
Thus, we can write
\begin{align}
\lim_{D \rw 0} \frac{P(D;p_X)}{D^{1+2/L}}
& =\kappa \, \int_0^1 \lim_{D \rw 0}  \frac{\xi(\beta D)}{D^{2/L}} d\beta \nonumber \\
& = \kappa\, \tau^{-2/L} \int_0^1 \beta^{2/L} d\beta \nonumber \\
& =  \frac{\kappa\, \tau^{-2/L}}{1+2/L} \label{eq:P_D_lim1}
\end{align}
where swapping the limit and the integral is justified by the fact that $\xi(D)$ is continuous and monotonically increasing. Combining  \eqref{eq:H_lim} and \eqref{eq:P_D_lim1} completes the proof of \eqref{eq:P_D_poly_decay}.

We next consider \eqref{eq:sig2_D_poly_decay}. Let $w_D$ be the unique solution to $\Pr[|W| > w_D] = \kappa D/(1-\kappa)$. Using standard bounds on the cumulative distribution function of the Gaussian distribution (see e.g. \cite{tse_WirelessComm05}) it can be verified that
\begin{align}
\lim_{D \rw 0} \frac{w_D^2}{\log(1/D)} = 2.\label{eq:w_D_lim}
\end{align}
Therefore, by \eqref{eq:xi_lim} and \eqref{eq:w_D_lim}, the limit \eqref{eq:sig2_D_poly_decay} follows immediately if we can show that
\begin{align}
\lim_{D \rw 0} \Big(\frac{\xi(D)}{w_D^2}  \Big)^{-1} \sigma^2_\text{awgn}(D;p_X) = 1. \label{eq:sig2_D_lim}
\end{align}

To proceed, define the probabilities
\begin{align*}
p_1(\theta) &= \textstyle \Pr\big[ \big| \frac{X}{\sqrt{ \xi(D)}} + \frac{W}{w_D}  \big| \le \theta | X \ne 0 \big]\\
p_2(\theta) &=\textstyle \big(\frac{1-\kappa}{\kappa} \big) \Pr[ |\frac{W}{w_D}| > \theta].
\end{align*}
and note that
\begin{align}
 \textstyle D_\text{awgn}\big( \frac{\xi(D)}{w_D^2} ;p_X \big) &= \inf_{\theta \in \mathbb{R}} \max\big(p_1(\theta),p_2(\theta)\big). \label{eq:D_awgn_sig2_D}
\end{align}
By a change of variables, we can write 
\begin{align*}
p_1(1) 
& = D \int_0^\infty  \textstyle \one(\beta \le \frac{1}{D})  \Pr\big[ \big| \sqrt{\frac{\xi(\beta D)}{ \xi(D)}}  + \frac{W}{w_D}  \big| \le 1\big] d\beta.
\end{align*}
Using \eqref{eq:xi_lim} and the fact that $\xi(D)$ is a strictly decreasing function when $D$ is small, it can be shown that the integrand of the above expression converges pointwise to $\one( \beta \le 1)$ and hence
\begin{align}
\lim_{D \rw 0} D^{-1} p_1(1)= 1 \label{eq:lim_sig_D_a}.
\end{align}
Since $p_1(\theta)$ is a strictly increasing function of $\theta$ and $p_2(\theta)$ is a strictly decreasing function of $\theta$ 
with $p_2(1) =D$, it thus follows that
\begin{align*}
\lim_{D \rw 0} D^{-1}  \textstyle D_\text{awgn}\big(\frac{\xi(D)}{w_D^2} ;p_X \big)  = 1. 
\end{align*}
Since $D_\text{awgn}(\sigma^2;p_X)$ is a strictly increasing function of $\sigma^2$, this proves the limit \eqref{eq:sig2_D_lim}, and thus completes the proof of \eqref{eq:sig2_D_poly_decay}.

\section{Properties of Soft Thresholding}\label{sec:ST_behavior}

This Appendix reviews several useful properties of the soft-thresholding  noise sensitivity $\mathcal{M}(\sigma^2,\alpha,p_X)$ introduced in Section~\ref{sec:AMP}.

To begin, observe that the noise sensitivity defined in \eqref{eq:noise_sensitivity} can be expressed as
\begin{align}
\mathcal{M}(\sigma^2,\alpha,p_X) & =\frac{ \bE\big[ | \eta^{(\text{ST})}(X + \sigma W,\sigma^2;\alpha) - X|^2\big]}{\sigma^2}\\ 
&=  \bE[ \mu(X/\sigma,\alpha)]
\end{align}
where $\mu(z,\alpha)$ is given by
\begin{align}
\mu(z,\alpha) &= \bE\big[ | \eta^{(\text{ST})}(z + W,1;\alpha) -z|^2\big]. \label{eq:mu_ST}
\end{align}
With a bit of calculus, it can then be verified that
\begin{align}
\mu(z,\alpha) & =  z^2 \big[ 1 - \Phi(-\alpha + z) - \Phi(-\alpha - z)\big]  \nonumber \\
& \quad + (1+ \alpha^2) \big[\Phi(-\alpha + z)  +  \Phi(-\alpha - z)\big]  \nonumber \\
& \quad
 - (\alpha + z)\phi(\alpha-z)  - (\alpha - z)\phi(\alpha+z),
\end{align}
where $\phi(x) = (2 \pi)^{-1/2} e^{-x^2/2}$ and $\Phi(x) = \int_{-\infty}^x \phi(t) dt$. 

If we let $\tilde{X}$ be distributed according to the nonzero part of $p_X$, then we obtain the general expression
\begin{align}
\mathcal{M}(\sigma^2,\alpha,p_X) = (1-\kappa)\mu(0,\alpha) + \kappa\, \bE[\mu( \textstyle \frac{1}{\sigma} \tilde{X}, \alpha )]. \label{eq:M_ST_general}
\end{align}

\subsection{Infinite SNR Limit}
The infinite SNR limit of the AMP-ST bound corresponds the limit of $\mathcal{M}(\sigma^2,\alpha,p_X)$ as the noise power $\sigma^2$ tends to zero. A simple exercise shows that
\begin{align}
\lim_{\sigma^2 \rw 0}  \bE[\mu( \textstyle \frac{1}{\sigma} \tilde{X}, \alpha )] = (1+\alpha^2)
\end{align}
for any random variable $\tilde{X}$ with $\Pr[\tilde{X} = 0] = 0$. Therefore, for any distribution $p_X \in \mathcal{P}(\kappa)$, we obtain the general limit
\begin{align}
\lim_{\sigma^2 \rw 0} \mathcal{M}(\sigma^2,\alpha,p_X) = \mathcal{M}_0(\alpha,\kappa)
\end{align}
where
\begin{align}
&\mathcal{M}_0(\alpha,\kappa) \nonumber\\
& = \kappa (1+\alpha^2) + (1-\kappa) 2\big [ (1+\alpha^2) \Phi(-\alpha) - \alpha \phi(\alpha)\big]. \label{eq:M_zero}
\end{align}
Minimizing $\mathcal{M}_0(\alpha,\kappa)$ as a function of $\alpha$ recovers the $\ell_1/\ell_0$ equivalence threshold of Donoho and Tanner \cite{DT09}.

\subsection{Universal Bounds}
In \cite{DonJoh94}, it is shown that, over the class of distributions $\mathcal{P}(\kappa)$, the noise sensitivity is maximized at a ``three-point'' distribution that places all of its nonzero mass at $\pm 1/\sqrt{\kappa}$. Combining this result with \eqref{eq:M_ST_general} leads to the uniform upper bound
\begin{align}
\sup_{p_X \in \mathcal{P}(\kappa)} \mathcal{M}(\sigma^2,\alpha,p_X) = \mathcal{M}^*(\sigma^2,\alpha,\kappa) \label{ew:M_ST_upperbound}
\end{align}
where
\begin{align}
 \mathcal{M}^*(\sigma^2,\alpha,\kappa)& =  (1-\kappa)\mu(0,\alpha) + \kappa \mu( \tfrac{1}{\sigma \sqrt{\kappa}}, \alpha ).
\end{align}

Using \eqref{ew:M_ST_upperbound} it is now possible extend the bound given in Theorem~\ref{thm:AMP_ST} to a given class of distributions $\mathcal{P}_X\subset \mathcal{P}(\kappa)$. Specifically, we can conclude that a distortion $D$ is achievable for a tuple $(\rho, \mathcal{P}_X, \snr)$ if
\begin{align}
\rho > \frac{1}{ \sigma^2\, \snr} +  \mathcal{M}^*(\sigma^2,\alpha,\kappa) \label{eq:AMP-ST_minimax}
\end{align}
where
\begin{align}
\sigma^2 = \min_{p_X \in \mathcal{P}_X} \sigma^2_\text{awgn}(D;p_X). \label{eq:sig2_universal}
\end{align} 

We note that the bounds \eqref{eq:AMP-ST_minimax} and \eqref{eq:sig2_universal} can be used to find a value of soft-thresholding parameter $\alpha$ that works well uniformly over the class $\mathcal{P}_X$. However, since these universal bounds are not tight, we cannot conclude that the resulting value of $\alpha$ is minimax optimal.

\section*{Acknowledgment} We would like to thank Martin Wainwright for helpful discussions and pointers in early versions of this work, David Donoho for a careful reading of the manuscript, and the anonymous reviewers for their helpful comments and suggestions. 
\bibliographystyle{ieeetran}


\begin{IEEEbiographynophoto}{Galen Reeves} received the B.S. degree in electrical and computer engineering from Cornell University in 2005 and the M.S. and Ph.D. degrees in electrical engineering and computer sciences from the University of California at Berkeley in 2007 and 2011 respectively. He is currently a postdoctoral scholar at Stanford university. His his research interests include compressed sensing, statistical signal processing, information theory, and machine learning.
\end{IEEEbiographynophoto}


\begin{IEEEbiographynophoto}{Michael Gastpar}
received the Dipl. El.-Ing. degree from ETH Z\"urich, in 1997, the M.S. degree from the University of Illinois at Urbana-Champaign, Urbana, IL, in 1999, and the
Doctorat \`es Science degree from Ecole Polytechnique F\'ed\'erale (EPFL), Lausanne, Switzerland, in 2002, all in electrical engineering. He was also a student in engineering and philosophy at the Universities of Edinburgh and Lausanne.

He is a Professor in the School of Computer and Communication
Sciences, Ecole Polytechnique F\'ed\'erale (EPFL), Lausanne, Switzerland.
He was an Assistant (2003-2008) and tenured Associate Professor (2008-2011)
with the Department of Electrical Engineering and Computer Sciences,
University of California, Berkeley, where he still holds a faculty position.
He also holds a faculty position at Delft University of Technology, The Netherlands,
and was a Researcher with the Mathematics of Communications Department,
Bell Labs, Lucent Technologies, Murray Hill, NJ.
His research interests are
in network information theory and related coding and signal processing techniques,
with applications to sensor networks and neuroscience.

Dr. Gastpar won the 2002 EPFL Best Thesis Award, an NSF CAREER Award
in 2004, an Okawa Foundation Research Grant in 2008, and an ERC Starting Grant in 2010.
He is an Information Theory Society Distinguished Lecturer (2009Ð2011). He was an Associate
Editor for Shannon Theory for the IEEE TRANSACTIONS ON INFORMATION
THEORY (2008--2011), and he has served as Technical Program Committee Co-Chair for the
2010 International Symposium on Information Theory, Austin, TX.
\end{IEEEbiographynophoto}

\end{document}